\def\LongVersion{}
\def\LongVersionEnd{}
\long\def\ShortVersion#1\ShortVersionEnd{}
\def\ShortVersion{}
\def\ShortVersionEnd{}
\long\def\LongVersion#1\LongVersionEnd{}
\newcommand{\Hide}[1]{\ignorespaces}
\renewcommand{\paragraph}[1]{\par\noindent\textbf{#1}}
\renewcommand*{\ALG@name}{Pseudocode}
\newtheorem{theorem}{Theorem}[section]
\newtheorem{lemma}[theorem]{Lemma}
\newtheorem{observation}[theorem]{Observation}
\newtheorem{corollary}[theorem]{Corollary}
\newtheorem{proposition}[theorem]{Proposition}
\theoremstyle{definition}
\newtheorem{property}[theorem]{Property}
\theoremstyle{plain}
\newtheorem*{observation*}{Observation}
\newtheorem{theorem}{Theorem}
\newtheorem{lemma}[theorem]{Lemma}
\newtheorem{observation}[theorem]{Observation}
\newtheorem{corollary}[theorem]{Corollary}
\newtheorem{proposition}[theorem]{Proposition}
\theoremstyle{definition}
\newtheorem{property}[theorem]{Property}
\theoremstyle{plain}
\newtheorem{theorem}{Theorem}[section]
\newtheorem{lemma}{Lemma}[section]
\newtheorem{observation}{Observation}[section]
\newtheorem{corollary}{Corollary}[section]
\theoremstyle{definition}
\theoremstyle{plain}
\newenvironment{DenseItemize}[0]
{\begin{list}{\labelitemi}{\usecounter{enumi}
\itemsep 0pt \parsep 0pt \leftmargin 5mm \labelwidth 5mm \parskip 0pt \topsep
0pt}}
{\end{list}}
\newenvironment{DenseEnumerate}[1][\theenumi.]
{\begin{list}{#1}{\usecounter{enumi}
\itemsep 0pt \parsep 0pt \leftmargin 5mm \labelwidth 5mm \parskip 0pt \topsep
0pt}}
{\end{list}}
\newenvironment{MathMaybe}[0]
{\begin{displaymath}\ignorespaces}
{\end{displaymath}\ignorespacesafterend}
\newenvironment{MathMaybe}[0]
{\begin{math}\ignorespaces}
{\end{math}}
\newenvironment{IntuitionSpotlight}[0]
{\par%
\setlength{\leftskip}{0.5\parindent}%
\setlength{\rightskip}{0.5\parindent}%
\noindent\itshape\textbf{Intuition spotlight:}}
{\par\ignorespacesafterend}
\newcommand{\NotationLabel}[1]{\label{notationTable:#1}\ignorespaces}
\newcommand{\NotationPageRef}[1]{\pageref{notationTable:#1}}
\newcommand{\Sect}{Sec.}
\newcommand{\Thm}{Thm.}
\newcommand{\Lem}{Lem.}
\newcommand{\Cor}{Cor.}
\newcommand{\Obs}{Obs.}
\newcommand{\Fig}{Fig.}
\newcommand{\Tabl}{Tab.}
\newcommand{\Appendix}{Apx.}
\newcommand{\Location}{\ell}
\newcommand{\aTime}{\mathit{t}}
\newcommand{\Reals}{\mathbb{R}}
\newcommand{\Integers}{\mathbb{Z}}
\newcommand{\Expect}{\mathbb{E}}
\newcommand{\Prob}{\mathbb{P}}
\newcommand{\sCost}{\mathrm{cost}^{\mathit{s}}}
\newcommand{\tCost}{\mathrm{cost}^{\mathit{t}}}
\newcommand{\Cost}{\mathrm{cost}}
\newcommand{\Leaves}{\mathcal{L}}
\newcommand{\LCA}{\mathrm{lca}}
\newcommand{\Depth}{\mathrm{depth}}
\newcommand{\Height}{\mathrm{height}}
\newcommand{\Ancestors}{\mathit{anc}}
\newcommand{\Indicator}{\mathbf{1}}
\newcommand{\ExpDist}{\mathrm{Exp}}
\newcommand{\PoisDist}{\mathrm{Pois}}
\newcommand{\EndTime}{\mathit{t}_{\mathrm{end}}}
\newcommand{\sEndCost}{\mathit{c}_{\mathrm{end}}^{\mathit{s}}}
\newcommand{\LateTime}{\mathit{t}_{\mathrm{late}}}
\newcommand{\Active}{\mathit{C}}
\newcommand{\Odd}{\mathit{D}}
\newcommand{\Effective}{\mathit{F}}
\newcommand{\Stilts}{\mathcal{S}}
\newcommand{\Heads}{\mathit{H}}
\newcommand{\A}{\mathcal{A}}
\newcommand{\adv}[1]{{#1}^{*}}
\newcommand*{\xor}{\mathbin{\oplus}}
\begin{document}

\LongVersion 
\title{Online Matching: Haste makes Waste! \\ (Full Version)\footnote{%
An extended abstract will appear in Proceedings of ACM STOC 2016.
}}
\LongVersionEnd 
\ShortVersion 
\title{Online Matching: Haste makes Waste! \\ (Extended Abstract)}
\ShortVersionEnd 

\date{}

\author{%
Yuval Emek\thanks{%
Technion, Israel.
Email: \texttt{yemek@ie.technion.ac.il}.
Partially supported by the Technion-Microsoft Electronic Commerce Research
Center.}
\and
Shay Kutten\thanks{%
Technion, Israel.
Email: \texttt{kutten@ie.technion.ac.il}.
Partially supported by the Technion-Microsoft Electronic Commerce Research
Center and by the France Israel cooperation grant from the Israeli Ministry of
Science.}
\and
Roger Wattenhofer\thanks{%
ETH Zurich, Switzerland.
Email: \texttt{wattenhofer@ethz.ch}.}
}

\begin{titlepage}

\maketitle

\begin{abstract}
This paper studies a new online problem, referred to as \emph{min-cost
perfect matching with delays (MPMD)}, defined over a finite metric
space (i.e., a complete graph with positive edge weights obeying
the triangle inequality) $\mathcal{M}$ that is known to the algorithm in
advance.
Requests arrive in a continuous time online fashion at the points of
$\mathcal{M}$ and should be served by matching them to each other.
The algorithm is allowed to delay its request matching commitments, but this
does not come for free:
the total cost of the algorithm is the sum of metric distances between matched
requests \emph{plus} the sum of times each request waited since it arrived
until it was matched.
A randomized online MPMD algorithm is presented whose competitive
ratio is
$O (\log^{2} n + \log \Delta)$,
where $n$ is the number of points in $\mathcal{M}$ and $\Delta$ is its aspect
ratio.
The analysis is based on a machinery developed in the context of a new
stochastic process that can be viewed as two interleaved Poisson processes;
surprisingly, this new process captures precisely the behavior of our
algorithm.
A related problem in which the algorithm is allowed to clear any unmatched
request at a fixed penalty is also addressed.
It is suggested that the MPMD problem is merely the tip of the iceberg for a
general framework of online problems with delayed service that captures
many more natural problems.
\end{abstract}

\renewcommand{\thepage}{}
\end{titlepage}

\pagenumbering{arabic}

\section{Introduction}
\label{section:introduction}
Consider an online gaming platform supporting two-player games such as Chess,
Scrabble, or Street Fighter 4.\footnote{%
Leading gaming platforms include XBOX Live and the Playstation Network for
consoles, Steam for PCs, and web-based platforms such as Geewa, Pogo and Yahoo
Games.}
The platform tries to find a suitable opponent for each player connecting to
it;
matching two players initiates a new game between them.
The platform should minimize two criteria:
(i) the difference between the matched players' rating (a positive integer
that represents the player's skill), so that the game is challenging for both
players; and
(ii) the waiting time until a player is matched and can start playing since
waiting is boring.
(In reality, the $1$-dimensional player rating space is often generalized to a
more complex metric space by taking into account additional parameters such as
the network distance between the matched players.)
It turns out, though, that these two minimization criteria are often
conflicting:
What if the pool of players waiting for a suitable opponent does not contain
anyone whose rating is close to that of a new player?
Should the system match the new player to an opponent whose rating differs
significantly from hers?

The naive approach that matches players immediately does
a terrible job:
Murphy's Law may strike, and right after matching a player, a perfect opponent
will emerge:
Haste makes waste, unbounded waste in fact.
To cope with this challenge, we must allow the platform to delay its
service in a \emph{rent-or-buy} manner.

\paragraph{Model.}
Let
$\mathcal{M} = (V, \delta)$
be a finite metric space.
Consider a set $R$ of \emph{requests}, where each request $\rho \in R$
is characterized by its \emph{location}
$\Location(\rho) \in V$
\NotationLabel{model:location}
(also referred to as the point that \emph{hosts} $\rho$) and \emph{arrival
time}
$\aTime(\rho) \in \Reals_{\geq 0}$.\footnote{
For ease of reference, \Tabl{}~\ref{table:notation} provides an index for the
notation used throughout this paper.
}
\NotationLabel{model:arrival-time}
Assume for the time being that $|R|$ is even.\footnote{%
The problem presented here is not well defined if $|R|$ is odd however, later
on we discuss a variant of this problem that is well defined for any finite
$|R|$.
}
Notice that $R$ can have multiple requests with the same location (in
particular, $|R|$ is unbounded with respect to $|V|$);
for simplicity, we assume that each request has a unique arrival
time.\footnote{%
This assumption is without loss of generality as the arrival times can be
slightly perturbed.
}

The input to an online algorithm for the \emph{min-cost perfect matching with
delays (MPMD)} problem is a finite metric space $\mathcal{M}$,
provided to the algorithm before the execution commences, and a request set
$R$ over $\mathcal{M}$ such that each request $\rho \in R$ is presented to
the algorithm in an online fashion at its arrival time
$\aTime(\rho)$.
The goal of the algorithm is to construct a (perfect) \emph{matching} of the
request set --- namely, a partition of $R$ into $|R| / 2$ unordered request
pairs --- in an online fashion with no preemption.

The algorithm is allowed to delay the matching of any request in $R$ at a
cost.
More formally, the requests $\rho_{1}$ and $\rho_{2}$ can be matched at any time
$t \geq \max \{ \aTime(\rho_{1}), \aTime(\rho_{2}) \}$;
if algorithm $\A$ matches requests $\rho_{1}$ and $\rho_{2}$ at time $t$,
then it incurs a \emph{time cost} of
$\tCost_{\A}(\rho_{i}) = t - \aTime(\rho_{i})$
\NotationLabel{model:time-cost-request}
and a \emph{space cost} of
$\sCost_{\A}(\rho_{i}) = \delta(\Location(\rho_{1}), \Location(\rho_{2})) / 2$
\NotationLabel{model:space-cost-request}
for serving $\rho_{i}$,
$i \in \{ 1, 2 \}$.
The \emph{space cost} and \emph{time cost} of $\A$ for the whole request set
are
$\sCost_{\A}(R, \mathcal{M})
=
\sum_{\rho \in R} \sCost_{\A}(\rho)$
\NotationLabel{model:space-cost-instance}
and
$\tCost_{\A}(R, \mathcal{M})
=
\sum_{\rho \in R} \tCost_{\A}(\rho)$,
\NotationLabel{model:time-cost-instance}
respectively.
The objective is to minimize
$\Cost_{\A}(R, \mathcal{M}) = \sCost_{\A}(R, \mathcal{M}) + \tCost_{\A}(R,
\mathcal{M})$.
\NotationLabel{model:total-cost}
When $\A$ is clear from the context, we may drop the subscript.

Following the common practice in online computation (cf.\ \cite{BorodinE1998}),
the quality of an online MPMD algorithm is measured in terms of its
\emph{competitive ratio}.
Online MPMD algorithm $\A$ is said to be \emph{$\alpha$-competitive} if for
every finite metric space $\mathcal{M}$, there exists some
$\beta = \beta(\mathcal{M})$
such that for every (even size) request set $R$ over $\mathcal{M}$, it
is guaranteed that
$\Expect[\Cost_{\A}(R, \mathcal{M})]
\leq
\alpha \cdot \Cost_{\adv{\A}}(R, \mathcal{M}) + \beta$,
where the expectation is taken over the coin tosses of the algorithm
(if any) and $\adv{\A}$ is an optimal offline algorithm.
It is assumed that $\mathcal{M}$ and $R$ are generated by an \emph{oblivious
adversary} that knows $\A$, but not the realization of its coin tosses.

\paragraph{Related work.}
The \emph{rent-or-buy} feature is fundamental to
\LongVersion 
many online applications and thus, prominent in the theoretical study of
\LongVersionEnd 
online computation.
Classic online
\LongVersion 
problems in which the rent-or-buy feature constitutes the
sole source of difficulty
\LongVersionEnd 
\ShortVersion 
rent-or-buy problems
\ShortVersionEnd 
include
ski-rental
\cite{KarlinMRS1986,KarlinMMLO1990,KarlinKR2001} and
TCP acknowledgment
\cite{DoolyGS1998,DoolyGS2001,KarlinKR2001}.
\LongVersion 
In other problems, the rent-or-buy feature is
\LongVersionEnd 
\ShortVersion 
This feature is sometimes
\ShortVersionEnd 
combined with a complex combinatorial structure, enhancing an already
challenging online problem, e.g., the extension of online job scheduling
\cite{AwerbuchKP1992,AwerbuchALR2002} studied in
\cite{AverbakhX2007,AverbakhB2013,AzarEJV2016}.

The \emph{matching} problem is a combinatorial optimization celebrity ever
since the seminal work of Edmonds~\cite{Edmonds1965a,Edmonds1965b}.
The realm of \emph{online algorithms} also features an extensive literature on
matching and some generalizations thereof.
Online problems that have been studied in this regard include
maximum cardinality matching
\cite{KarpVV1990,BirnbaumM2008,GoelM2008,DevanurJK2013,Miyazaki2014,NaorW2015},
maximum vertex-weighted matching
\cite{AggarwalGKM2011,DevanurJK2013,NaorW2015},
maximum capacitated assignment (a.k.a.\ the AdWords problem)
\cite{MehtaSVV2005,BuchbinderJN2007,GoelM2008,AggarwalGKM2011,NaorW2015},
metric maximum weight matching
\cite{KalyanasundaramP1993,KhullerMV1994},
metric minimum cost perfect matching
\cite{KalyanasundaramP1993,MeyersonNP2006,BansalBGN2014}, and
metric minimum capacitated assignment (a.k.a.\ the transportation
problem) \cite{KalyanasundaramP2000};
see \cite{Mehta2013} for a comprehensive survey.
All these online problems are \emph{bipartite} matching versions, where the
nodes in one side of the graph are static and the nodes in the other side are
revealed in an online fashion together with their incident edges.

\paragraph{Discussion and results.}
To the best of our knowledge,
\LongVersion 
the MPMD problem
\LongVersionEnd 
\ShortVersion 
MPMD
\ShortVersionEnd 
is the first online
\emph{all-pairs} matching version.
Moreover, in contrast to the previously studied online matching versions, in
MPMD the graph (or metric space) is known a-priori and the algorithmic
challenge stems from the unknown locations and arrival times of the
\LongVersion 
requests (whose number is unbounded);
\LongVersionEnd 
\ShortVersion 
requests;
\ShortVersionEnd 
this is more in the spirit of online problems such as the classic $k$-server
problem.

\ShortVersion \sloppy \ShortVersionEnd
The main
\LongVersion technical \LongVersionEnd
result of this paper is a randomized online MPMD algorithm
whose competitive ratio is
$O (\log^{2} n + \log \Delta)$,
where $n$ is the number of points in the metric space $\mathcal{M}$ and
$\Delta$ is its aspect ratio.
This algorithm, presented in \Sect{}~\ref{section:algorithm}, is based on
\emph{exponential timers} that determine how long should we wait before
committing to a certain match.
The analysis, presented in \Sect{}~\ref{section:analysis}, relies heavily on
machinery we develop in the context of a new stochastic process named
\emph{alternating Poisson process}.
\ShortVersion \par\fussy \ShortVersionEnd

We also consider a variant of the online MPMD problem, referred to as
\emph{MPMDfp}, in which the algorithm can clear any unmatched request at a
fixed penalty.
This problem variant is motivated by noticing that clearing an unmatched
request may correspond to matching a player with a computer opponent in the
context of the aforementioned gaming platforms.
The MPMDfp problem is discussed further in
\LongVersion 
\Sect{}~\ref{section:fixed-penalty},
\LongVersionEnd 
\ShortVersion 
the full version,
\ShortVersionEnd 
where we show that our online algorithm can be adjusted to cope with this
variant as well.

It is not difficult to develop constant lower bounds on the competitive ratio
of online MPMD algorithms already for the special case of a $2$-point metric
\LongVersion 
space (note that this special case generalizes the ski rental problem).
\LongVersionEnd 
\ShortVersion 
space.
\ShortVersionEnd 
While a $2$-point metric space admits an $O (1)$-competitive online MPMD
algorithm, we conjecture that in the general case, the competitive ratio must
grow as a function of $n$.
In particular, we believe that this conjecture holds for the $1$-dimensional
metric spaces constructed in Appendix~C of \cite{EmekLW2015} (a variant of the
construction in Fig.~1 of \cite{ReingoldT1981}).
We also establish an algorithm-specific lower bound:
To demonstrate the role of randomness in the online algorithm presented in
\Sect{}~\ref{section:algorithm}, we show in
\LongVersion 
\Sect{}~\ref{section:specific-lower-bound}
\LongVersionEnd 
\ShortVersion 
the full version
\ShortVersionEnd 
that the competitive ratio of its natural deterministic counterpart is $\Omega
(n)$.

\paragraph{Online problems with delayed service.}
The online MPMD problem is obtained by augmenting the (offline) min-cost
perfect matching problem with the time axis over which service can be delayed
in a rent-or-buy manner.
This viewpoint seems to open a gate to a general framework of online problems
with delayed service since the approach of combining the rent-or-buy feature
with a combinatorial optimization offline problem can be applied to a class of
minimization problems much larger than just min-cost perfect matching.

To be more precise, consider a minimization problem $\mathcal{P}$ defined with
respect to some underlying combinatorial structure $\mathcal{C}$ with a ground
set
$\mathcal{E}_{\text{in}}$
of \emph{input entities} and a ground set
$\mathcal{E}_{\text{out}}$
of \emph{output entities}.
The input and output instances of $\mathcal{P}$ are multisets over
$\mathcal{E}_{\text{in}}$
and
$\mathcal{E}_{\text{out}}$,
respectively.
For each input instance $I$, problem $\mathcal{P}$
determines a collection $\mathcal{F}(I)$ of \emph{feasible}
output instances;
input instance $I$ is said to be \emph{admissible} if
$|\mathcal{F}(I)| \neq \emptyset$.
We restrict our attention to problems $\mathcal{P}$ satisfying
the property that for every two input instances
$I \subseteq J$,
if $I$ and $J$ are admissible, then so is
$J - I$.\footnote{%
We follow the standard multiset convention that for two multisets $M, N$ over
a ground set $S$ with multiplicity functions
$\mu_{M} : S \rightarrow \Integers_{\geq 0}$
and
$\mu_{N} : S \rightarrow \Integers_{\geq 0}$,
the relation
$M \subseteq N$
holds if
$\mu_{M}(x) \leq \mu_{N}(x)$
for every
$x \in S$; and
$N - M$
is the multiset whose multiplicity function
$\mu_{N - M} : S \rightarrow \Integers_{\geq 0}$
satisfies
$\mu_{N - M}(x) = \max\{ \mu_{N}(x) - \mu_{M}(x), 0 \}$
for every
$x \in S$.
}

Minimization problem $\Pi$ can be transformed into an online
problem with delayed service $\Pi_{\text{on}}$ by applying to it the
\emph{delayed service operator}:
Each request in $\Pi_{\text{on}}$ is characterized by its \emph{location} ---
an entity in $\mathcal{E}_{\text{in}}$ ---  and by its arrival time.
The algorithm can serve a collection $R$ of yet unserved requests by buying a
feasible (under $\mathcal{F}$) output instance $S$ for their location multiset
at any time $t$ after the arrival of all requests in $R$.
The payment for this service is the cost of $S$ plus the total waiting times
of the requests in $R$ up to time $t$.
Notice that this act of buying $S$ does not serve requests other than those in
$R$ including any request arriving at the locations of $R$ after time $t$.

\LongVersion 
The online MPMD problem
\LongVersionEnd 
\ShortVersion 
MPMD
\ShortVersionEnd 
is obtained by applying this delayed service operator to the
metric min-cost perfect matching problem, where
$\mathcal{C}$ is a finite metric space,
$\mathcal{C}_{\text{in}}$ is its points, and
$\mathcal{C}_{\text{out}}$ is the set of
\LongVersion 
unordered
\LongVersionEnd 
point pairs (a point multiset is an admissible input instance if its
cardinality is even).\footnote{%
In the offline version of the metric min-cost perfect matching problem it
suffices to consider only sets (rather than multisets) for the input and
output instances.
The generalization to multisets is necessary for the transition to the online
version of the problem.
}
This operator can also be applied to
the vertex cover problem
($\mathcal{C}$ is a graph, $\mathcal{C}_{\text{in}}$ is the edge set, and
$\mathcal{C}_{\text{out}}$ is the vertex set),
the dominating set problem
($\mathcal{C}$ is a graph and $\mathcal{C}_{\text{in}}$ and
$\mathcal{C}_{\text{out}}$ are the vertex set),
and many more combinatorial optimization problems.

\section{Preliminaries}
\label{section:preliminaries}

\paragraph{Tree notation and terminology.}
Consider a tree $T$ \emph{rooted} at some vertex $r$ with a \emph{leaf} set
$\Leaves$.
The notions \emph{parent}, \emph{ancestor}, \emph{child}, and \emph{sibling}
are used in their usual sense.
A \emph{binary} tree is called \emph{full} if every internal vertex has
exactly two children.

Let $v$ be some vertex in $T$.
The parent of $v$ in $T$ (assuming that $v \neq r$) is denoted by
$p(v)$.
\NotationLabel{tree:parent}
We denote the \emph{subtree} of $T$ rooted at $v$ by $T(v)$
\NotationLabel{tree:subtree}
and the leaf set of $T(v)$ by $\Leaves(v)$.
\NotationLabel{tree:subtree-leaves}
The set of ancestors of $v$ (excluding $v$ itself) is denoted by
$\Ancestors(v)$.
\NotationLabel{tree:ancestors}
The \emph{depth} of $v$ in $T$ --- i.e., the distance (in hops) from $v$ to
$r$ --- is denoted by
$\Depth(v)$
\NotationLabel{tree:depth}
and the \emph{height} of $T$ is denoted by
$\Height(T) = \max_{x \in \Leaves} \Depth(x)$.
\NotationLabel{tree:height}

A \emph{stilt} in $T$ is an oriented path connecting some vertex $v \in T$,
referred to as the \emph{head} of the stilt, with a leaf in $\Leaves(v)$,
referred to as the \emph{foot} of the stilt.
Given two leaves $x, y \in \Leaves$, their \emph{least common ancestor (LCA)}
in $T$ is denoted by
$\LCA(x, y)$.
\NotationLabel{tree-lca}

\paragraph{Probabilistic embedding in tree metric spaces.}
Let
$w : T \rightarrow \Reals_{\geq 0}$
be a weight function on the vertices of $T$ that satisfies
(i) $w(v) = 0$ for every leaf $v \in \Leaves$; and
(ii) $w(v) < w(p(v))$
for every vertex
$v \in T - \{ r \}$.
The pair
$(T, w)$
introduces a finite metric (in fact, an ultrametric) space over the leaf set
$\Leaves$ with distance function $\delta$ defined by setting
$\delta(x, y) = w(\LCA(x, y))$
for every $x, y \in \Leaves$.
A metric space that can be realized by such a $(T, w)$ pair is referred to as
a \emph{tree metric space}.
We subsequently identify a tree metric space with the pair $(T, w)$ that
realizes it.

Consider some real $\alpha > 1$.
A \emph{hierarchically well separated tree} with parameter $\alpha$
(cf.\ \cite{Bartal1996}), or \emph{$\alpha$-HST} in short, is a tree metric
space $(T, w)$ that, in addition to the aforementioned
requirements, satisfies
$w(p(v)) \geq \alpha \cdot w(v)$
for every vertex
$v \in T - \{ r \}$.
We refer to an $\alpha$-HST realized by a full binary tree $T$ (cf.\
\cite{CoteMP2008}) as an \emph{$\alpha$-HSBT}.

The following theorem is established by combining a celebrated construction of
Fakcharoenphol et al.~\cite{FakcharoenpholRT2004} (improving previous
constructions of Bartal \cite{Bartal1996,Bartal1998}) with a tree
transformation technique \cite{PattShamir2015}
\LongVersion 
(details are deferred to \Appendix{}~\ref{appendix:embedding-in-HSBT}).
\LongVersionEnd 
\ShortVersion 
(details are deferred to the full version).
\ShortVersionEnd 

\begin{theorem} \label{theorem:HSBT}
Consider some $n$-point metric space
$(V, \delta)$
of aspect ratio
$\Delta = \frac{\max_{x \neq y \in V} \delta(x, y)}{\min_{x \neq y \in V}
\delta(x, y)}$
and let $\mathcal{U}$ be the set of all $(1 + \Omega (1 / \log n))$-HSBTs
$(T, w)$
over $V$ with
$\Height(T) = O (\log \Delta + \log n)$
and with
distance functions $\delta_{\mathcal{T}}$ that dominate $\delta$ in the sense that
$\delta_{\mathcal{T}}(x, y) \geq \delta(x, y)$
for every $x, y \in V$.
There exists a probability distribution $\mathcal{P}$ over $\mathcal{U}$ such
that
$\Expect_{(V, \delta_{\mathcal{T}}) \in \mathcal{P}}[\delta_{\mathcal{T}}(x, y)]
\leq
O (\log n) \cdot \delta(x, y)$
for every $x, y \in V$.
Moreover, the probability distribution $\mathcal{P}$ can be sampled
efficiently.
\end{theorem}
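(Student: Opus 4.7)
The approach is to derive the distribution $\mathcal{P}$ by composing the FRT probabilistic embedding with a Patt-Shamir-style tree transformation. FRT supplies the randomized embedding together with the $O(\log n)$ distortion bound, while the transformation step rewires each sampled tree into a full binary HST with the required $1 + \Omega(1/\log n)$ weight ratio --- crucially without changing any leaf-to-leaf distance, so that the distortion guarantee is preserved.

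First I would invoke the FRT theorem on the given $n$-point metric $(V, \delta)$ of aspect ratio $\Delta$. After a suitable scaling, this yields a probability distribution over $2$-HSTs whose tree distances $\delta_{\mathcal{T}}$ dominate $\delta$ pointwise and satisfy $\Expect[\delta_{\mathcal{T}}(x,y)] \leq O(\log n) \cdot \delta(x,y)$ for all $x, y \in V$. Each sampled tree has depth $O(\log \Delta)$, its internal vertex weights form a geometric progression of common ratio $2$ spanning $\min_{x \neq y} \delta(x,y)$ to $\max_{x \neq y} \delta(x,y)$, and the distribution is efficiently samplable.

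Next I would apply the Patt-Shamir tree transformation to each sampled $2$-HST $T$, converting it into a full binary HST $T'$. Concretely, for every internal vertex $v$ of $T$ with children $c_1, \ldots, c_k$ one replaces the star $v \rightarrow c_1, \ldots, c_k$ by a balanced binary subtree on $c_1, \ldots, c_k$ rooted at $v$, inserting $O(\log k) = O(\log n)$ fresh intermediate vertices whose weights are assigned by a geometric sequence strictly between $w(v)$ and $\max_i w(c_i)$. Two invariants are then easy to verify: (a) for every $x, y \in \Leaves$, $\LCA(x,y)$ in $T'$ has the same weight as $\LCA(x,y)$ in $T$, so leaf-to-leaf distances --- and hence the expected distortion --- are unchanged; and (b) since the original parent-to-child weight ratio is at least $2$, distributing this factor across $O(\log n)$ intermediate levels forces every consecutive-vertex weight ratio to be at least $2^{1/O(\log n)} = 1 + \Omega(1/\log n)$, delivering the required HSBT parameter.

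The main obstacle is bounding $\Height(T')$ by $O(\log \Delta + \log n)$ rather than the naive multiplicative estimate $O(\log \Delta \cdot \log n)$ that one would get by binarizing each original level independently. Overcoming this requires the Patt-Shamir transformation to be weight-sensitive: the intermediate binarization vertices of adjacent original levels of $T$ must share structure so that the $O(\log n)$ overhead is amortized across the $O(\log \Delta)$ original levels of $T$ rather than being paid at each of them. Once this additive height bound is established, the remaining claims (domination of $\delta$, the expected distortion of $O(\log n)$, and efficient samplability of $\mathcal{P}$) follow immediately by composition with FRT.
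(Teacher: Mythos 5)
Your high-level route — compose the FRT embedding with a Patt-Shamir binarization — is exactly the one the paper takes, but invariant~(a) in your proposal is false, and the error is load-bearing. After binarizing the star at $v$ with children $c_1, \ldots, c_k$, two leaves $x \in \Leaves(c_i)$ and $y \in \Leaves(c_j)$ generically have their new least common ancestor at one of the \emph{freshly inserted} binarization vertices strictly below $v$, not at $v$ itself. Since you assign those intermediate vertices weights \emph{strictly between} $w(v)$ and $\max_i w(c_i)$, the new LCA weight is strictly less than $w(v) = w(\LCA_T(x,y))$. Hence $\delta_{\mathcal{T}'}(x,y) < \delta_{\mathcal{T}}(x,y)$ — leaf distances shrink rather than stay fixed — and the domination requirement $\delta_{\mathcal{T}'}(x,y) \geq \delta(x,y)$ can fail outright (the drop can approach a factor of $2$ when the geometric sequence descends all the way toward $\max_i w(c_i) \leq w(v)/2$). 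Domination is not cosmetic here: the downstream competitive-ratio analysis charges the algorithm's cost in the HSBT against the adversary's cost in the original metric, and that charging breaks if HSBT distances can undershoot $\delta$.

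The paper's construction repairs this by scaling: it sets $w'(f(v)) = 2\,w(v)$ on the image of the original vertices and lets each inserted binarization vertex take weight $w'(\text{parent})/(1 + \Omega(1/\log n))$. Because Patt-Shamir guarantees at most $O(\log n)$ inserted vertices between $f(p(v))$ and $f(v)$, and the constants are chosen so that $(1 + \Omega(1/\log n))^{O(\log n)} \leq 2$, every binarization vertex sitting between $f(v)$ and its images-of-children retains weight at least $w(v)$. This yields $\delta(x,y) \leq \delta_{\mathcal{T}'}(x,y) \leq 2\,\delta(x,y)$, so domination holds and the extra factor of $2$ is absorbed into the $O(\log n)$ expected distortion. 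Your secondary concern about the height bound is legitimate and you flag the right remedy — a naive per-level balanced binarization would give $O(\log \Delta \cdot \log n)$, and the transformation must amortize (e.g., by weighting the local binary trees by subtree sizes so the depth overhead telescopes along any root-to-leaf path); the paper simply cites Patt-Shamir's lemma, which delivers the additive $\Height(T') = O(\Height(T) + \log n)$ bound together with the $O(\log n)$ per-edge depth increase needed for the weight argument above.
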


\paragraph{Matching algorithm notation and terminology.}
Consider the operation of an MPMD algorithm on some HSBT $(T, w)$.
\LongVersion 
Recall that the input to the algorithm consists of a finite set $R$ of
requests, where each request
$\rho \in R$
is characterized by its location
$\Location(\rho) \in \Leaves$
and arrival time
$\aTime(\rho) \in \Reals_{\geq 0}$.
\LongVersionEnd 
Suppose that the algorithm matches requests $\rho$ and $\rho'$ with
$\Location(\rho) = x \in \Leaves$
and
$\Location(\rho') = x' \in \Leaves$,
$x \neq x'$.
Let $v$ be some vertex in the unique path connecting $x$ and $x'$ in $T$.
If
$v = \LCA(x, x')$,
then we refer to this matching operation as matching \emph{across} $v$;
otherwise, we refer to it as matching \emph{on top of} $v$.
Notice that matching across $v$ corresponds to matching a request located in
$\Leaves(u_{1})$ with a request located in $\Leaves(u_{2})$, where $u_{1}$ and
$u_{2}$ are the children of $v$ in $T$, whereas matching on top of $v$
corresponds to matching a request located in $\Leaves(v)$ with a request
located in $\Leaves - \Leaves(v)$.

If the algorithm matches request
\LongVersion 
$\rho \in R$
\LongVersionEnd 
\ShortVersion 
$\rho$
\ShortVersionEnd 
at time $t'$, then $\rho$ is said to be \emph{active} at all times
$\aTime(\rho) \leq t < t'$.
Given some vertex $v \in T$, we denote the set of active requests in
$\Leaves(v)$ at time $t$ by
$\Active_{v}(t)$
\NotationLabel{alg:active}
and write
$\Active(t) = \Active_{r}(t)$.
\NotationLabel{alg:active-root}
Vertex $v$ is said to be \emph{odd} at time $t$ if
$|\Active_{v}(t)| = 1 \pmod{2}$;
let
$\Odd(t)$
\NotationLabel{alg:odd}
be the set of odd vertices at time $t$.

A key observation is that the forest induced on $T$ by the vertex subset
$\Odd(t)$ is a collection --- denoted hereafter by $\Stilts(t)$ ---
\NotationLabel{alg:stilts}
of vertex disjoint stilts.
Moreover, if $v$ is the head of a stilt in $\Stilts(t)$ then either
(1) $v = r$ is the root of $T$ (which implies that $|\Active(t)|$ is odd); or
(2) the sibling of $v$ is also the head of a stilt in $\Stilts(t)$.
Let
$\Heads(t) \subseteq \Odd(t)$
\NotationLabel{alg:heads}
be the set of heads of stilts in $\Stilts(t)$.

Internal vertex $v \in T - \Leaves$ is said to be \emph{effective} at time $t$
if its two children are odd (which, in particular, means that $v$ is not odd);
let
$\Effective(t)$
\NotationLabel{alg:effective}
be the set of effective vertices at time $t$.
Notice that $v$ is effective if and only if its two children are in
$\Heads(t)$ and let
$S_{1}, S_{2} \in \Stilts(t)$
be their corresponding stilts.
We refer to the feet of $S_{1}$ and $S_{2}$ as the
\emph{supporting leaves} of $v$ at time $t$.

We shall apply the aforementioned matching algorithm definitions to both our
online MPMD algorithm, denoted by $\A$, and to the benchmark offline MPMD
algorithm, denoted by $\adv{\A}$.
To distinguish between the two, we reserve the
aforementioned notation system for the former and add a superscript asterisk
for the latter;
in particular, the set of vertices odd under $\adv{\A}$ at time $t$ is denoted
by
$\adv{\Odd}(t)$
\NotationLabel{alg:adv-odd}
(whereas the set of vertices odd under $\A$ at time $t$ is denoted by
$\Odd(t)$).

\section{An online MPMD algorithm}
\label{section:algorithm}
In this section, we present our online MPMD algorithm, referred to as the
\emph{stilt-walker algorithm} and denoted hereafter by $\A$;
its competitive ratio is analyzed in \Sect{}~\ref{section:analysis}.
The algorithm works in two stages:
a preprocessing stage, in which we employ \Thm{}~\ref{theorem:HSBT} to embed
the input metric space in a random $(1 + \Omega (1 / \log n))$-HSBT $(T, w)$,
and the actual online execution, in which $\A$ processes the requests arriving
at the leaves of $T$ and constructs the desired matching.
The remainder of this section is dedicated to describing the latter.

\paragraph{The matching policy.}
Although $\A$ operates in continuous time, it will be convenient to describe
it as if it progresses in discrete \emph{time steps}, taking the difference $d
t$ between two consecutive time steps to be infinitesimally small so that at
most one request arrives in each time step.

Fix some time step $t$.
If request $\rho$ arrives at this time step and $\Location(\rho)$ already
hosts another active (under $\A$) request $\rho'$, then the algorithm matches
$\rho$ and $\rho'$ immediately.
Assume hereafter that each leaf in $\Leaves$ hosts at most one active
request.

Consider some effective vertex $v \in \Effective(t)$ and let $x_{1}^{v},
x_{2}^{v}$ be its supporting leaves (the feet of the corresponding stilts in
$\Stilts(t)$).
By definition, $x_{i}^{v}$ hosts an odd number of active requests at time $t$ for
$i \in \{ 1, 2 \}$ and since it cannot host more than one active request, it
follows that there exists a unique active request $\rho_{i}^{v}$ at time $t$
with
$\Location(\rho_{i}^{v}) = x_{i}^{v}$;
we refer to $\rho_{1}^{v}$ and $\rho_{2}^{v}$ as the
\emph{supporting requests} of $v$.
The algorithm tosses an independent biased coin and matches its supporting
requests (i.e., matching across $v$) with probability
$d t / w(v)$.
In what follows, we attribute this coin toss to $v$ so that we can distinguish
between coin tosses of different (internal) vertices.
A pseudocode description of the stilt-walker algorithm is provided in
Pseudocode~\ref{algorithm:stilt-walker}.

\def\PseudocodeStiltWalker{
\begin{algorithm}
\caption{\label{algorithm:stilt-walker}%
The operation of $\A$ at time step $t$.
}
\begin{algorithmic}[1]
\If{$\exists \rho, \rho' \in \Active(t)$ with $\Location(\rho) =
\Location(\rho')$} \Comment{there can be at most one such request pair}
  \State{match $\rho$ and $\rho'$}
\EndIf
\ForAll{$v \in \Effective(t)$}
  \State{$x_{1}^{v}, x_{2}^{v} \leftarrow$ supporting leaves of $v$}
  \State{$\rho_{i}^{v} \leftarrow$ unique active request with
$\Location(\rho_{i}^{v}) = x_{i}^{v}$ for $i = 1, 2$}
  \State{$z = z(v, t) \leftarrow$ outcome of an independent Bernoulli trial
with parameter $d t / w(v)$}
  \If{$z = 1$}
    \State{match $\rho_{1}^{v}$ and $\rho_{2}^{v}$} \Comment{matching across $v$}
  \EndIf
\EndFor
\end{algorithmic}
\end{algorithm}
}

\LongVersion 
\PseudocodeStiltWalker{}
\LongVersionEnd 

An analogous ``continuous'' description of the stilt-walker algorithm's policy
regarding the effective vertices is as follows.
Consider some internal vertex $v \in T - \Leaves$ and suppose that the last
time $\A$ matched across $v$ was at time $t_{0}$ (take
$t_{0} = 0$
if $\A$ still has not matched across $v$).
Then, the next time the algorithm matches across $v$ is the minimum $t_{1}$
that satisfies
\begin{MathMaybe}
\int_{t_{0}}^{t_{1}} \Indicator(v \in \Effective(t)) \, d t
=
Z \, ,
\end{MathMaybe}
where
$\Indicator(\cdot)$
denotes the indicator operator and
$Z = Z(v, t_{0}) \sim \ExpDist(1 / w(v))$
is an (independent) random variable that obeys an exponential distribution
with rate $1 / w(v)$.

\begin{IntuitionSpotlight}
The reader may wonder about the role of the exponential timers maintained at
the internal vertices.
At first, we tried to analyze the deterministic version of the algorithm,
where the $(1 / w(v))$-rate exponential timer maintained at vertex $v \in T -
\Leaves$ is replaced by a deterministic $\Theta (w(v))$-timer.
This seemed to make sense because it allows the algorithm to wait for $\Theta
(w(v))$ time before it pays $w(v)$ in space cost (the usual approach to
rent-or-buy problems).
However, as demonstrated in
\LongVersion 
\Sect{}~\ref{section:specific-lower-bound},
\LongVersionEnd 
\ShortVersion 
the full version,
\ShortVersionEnd 
this is hopeless.
Switching to the randomized version resolves this obstacle because the
memoryless exponential timers allow us to analyze each vertex independently
and partition the time into periods so that each period can be analyzed
independently --- see \Sect{}~\ref{section:analysis-heart}.
\end{IntuitionSpotlight}

Notice that our algorithm is guaranteed to eventually match all requests with
probability $1$.
Indeed, if there are at least two active requests at time $t$, then there is
at least one effective vertex $v$ at time $t$ and $\A$ matches across it (thus
matching its supporting requests) at time $t + d t$ with probability $d t /
w(v)$.

\section{Analyzing the stilt-walker algorithm}
\label{section:analysis}
Our main goal in this section is to establish the following Theorem.

\begin{theorem} \label{theorem:main}
Fix some $1 < \alpha \leq 2$ and consider an $\alpha$-HSBT $\mathcal{T}$
realized by a full binary tree of height $h$.
Let $R$ be a request set over $\mathcal{T}$ and let $\adv{\A}$ be
some benchmark offline MPMD algorithm for $\mathcal{T}$, $R$.
The stilt-walker algorithm $\A$ guarantees that
\[
\Expect \left[ \Cost_{\A} \left( R, \mathcal{T} \right) \right]
\leq
O (1 / (\alpha - 1)) \cdot \sCost_{\adv{\A}} \left( R,
\mathcal{T} \right)
+
O (h) \cdot \tCost_{\adv{\A}} \left( R, \mathcal{T} \right)
+
\beta  \, ,
\]
where
$\beta = \beta(\mathcal{T})$
depends only on $\mathcal{T}$ and is independent of $R$.
\end{theorem}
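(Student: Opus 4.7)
My approach is to analyze the stilt-walker $\A$ on the HSBT $\mathcal{T}=(T,w)$ vertex by vertex, exploiting the memoryless property of the exponential timers to decouple the analyses at different internal vertices. Throughout, for each internal $v \in T - \Leaves$ I will let $\tau_v$ denote the random total length of time that $v$ is effective under $\A$ and let $N_v$ denote the number of times $\A$ matches across $v$; the entire cost analysis will be funneled through the single family $\{\Expect[\tau_v]\}_v$.

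\emph{Step 1: reduce both cost types to $\sum_v \Expect[\tau_v]$.} The Bernoulli trials at $v$ form a rate-$1/w(v)$ Poisson process that advances only while $v \in \Effective(t)$, so $\Expect[N_v \mid \tau_v] = \tau_v/w(v)$; since each match across $v$ costs exactly $w(v)$, this gives $\Expect[\sCost_\A(R,\mathcal{T})] = \sum_v \Expect[\tau_v]$. For the time cost, the stilt/head bookkeeping gives $|\Active(t)| = |\Heads(t)| = 2|\Effective(t)| + \Indicator(r \in \Heads(t))$, so $\tCost_\A(R,\mathcal{T}) = 2\sum_v \tau_v + T_{\text{root}}$, where $T_{\text{root}} = \int \Indicator(r \in \Heads(t))\,dt$. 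A parity argument (each arrival flips the parity of the active-request count identically under $\A$ and $\adv{\A}$) yields $T_{\text{root}} \leq \tCost_{\adv{\A}}$.

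\emph{Step 2: bound $\sum_v \Expect[\tau_v]$ by the adversary's cost.} For each fixed $v$ I would partition the time axis into maximal intervals between consecutive matches of $\A$ across $v$; the memoryless clock makes each interval analyzable in isolation, and this is where the \emph{alternating Poisson process} promised in the abstract enters. The target statement is: while $v$ is effective under $\A$, either (i) $\adv{\A}$ has unmatched requests in $\Leaves(v)$, contributing to $\tCost_{\adv{\A}}$ at rate $\Omega(1)$, or (ii) $\adv{\A}$ has performed a matching on top of some ancestor $u$ of $v$, contributing $w(u) \geq \alpha^{\Depth(v)-\Depth(u)} w(v)$ to $\sCost_{\adv{\A}}$. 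The HSBT property makes $\sum_{u \in \Ancestors(v)} 1/w(u) = O(1/((\alpha-1)w(v)))$, which is exactly the factor that converts the adversary's space cost at an ancestor into its charge to $\tau_v$; summing over $v$ produces the $O(1/(\alpha-1))$ factor on $\sCost_{\adv{\A}}$. For the time cost, each unmatched-under-$\adv{\A}$ request contributes to the ``type (i)'' charge of each of its at most $h$ ancestors, giving the $O(h)$ factor on $\tCost_{\adv{\A}}$.

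\emph{Main obstacle.} The technically demanding step will be the per-interval comparison in Step 2: $\A$'s parity structure evolves stochastically while $\adv{\A}$'s is fixed by an oblivious adversary, so the events ``$v$ is effective under $\A$ at time $t$'' and ``$\adv{\A}$ has some specific configuration at $v$ at time $t$'' are highly non-synchronous, and bounding their joint temporal measure is the crux for which the alternating Poisson process machinery is presumably tailored. A secondary subtlety is avoiding double-counting when an adversary matching on top of a vertex $u$ is charged against many descendants $v$: the full-binary-tree structure of $T$ together with the geometric growth of weights along root-to-leaf paths should ensure that the total charge against any single such matching is the desired $O(1/(\alpha-1))$ times its own cost. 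The additive $\beta(\mathcal{T})$ should absorb boundary effects from the ``tail'' period after all requests have arrived but $\A$ has not yet finished matching.
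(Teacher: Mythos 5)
Your Step 1 is sound and in fact yields a cleaner route to one of the paper's lemmas. The identity $\Expect[\sigma_v]=\Expect[\tau_v]$ (Lemma~\ref{lemma:bounding-space-potential}) follows directly from your observation that the Bernoulli trials at $v$ form a rate-$1/w(v)$ process that advances only while $v$ is effective: $\Expect[dN_v(t)\mid\text{history}] = \Indicator(v\in\Effective(t))\,dt/w(v)$, so $\Expect[N_v]=\Expect[\tau_v]/w(v)$ by taking expectations and integrating — a Doob/Wald-style argument that bypasses the per-phase APP computation of Lemma~\ref{lemma:phase-app}. The stilt/head accounting, $|\Heads(t)|=2|\Effective(t)|+\Indicator(r\in\Heads(t))$, and the parity argument showing $\Indicator(r\in\Odd(t))=\Indicator(r\in\adv{\Odd}(t))$ both agree with Lemma~\ref{lemma:expressing-costs-by-potentials}.

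Step 2 contains a genuine gap. The dichotomy you propose — while $v$ is effective under $\A$, either (i) $\adv{\A}$ has active requests in $\Leaves(v)$ contributing to its time cost, or (ii) $\adv{\A}$ has matched on top of some ancestor of $v$ — is not exhaustive. Write $X_i=\Indicator(u_i\in\Odd)$, $\adv{X}_i=\Indicator(u_i\in\adv{\Odd})$, and let $Y_i$ track the arrival parity in $\Leaves(u_i)$. Then $X_i\xor\adv{X}_i$ equals the parity of the difference between the number of times $\A$ and $\adv{\A}$ have matched on top of $u_i$. The dangerous scenario is $X_1=X_2=1$ while $\adv{X}_1=\adv{X}_2=0$: here $\adv{\A}$ may have zero active requests in $\Leaves(v)$ and may have done nothing at $v$ or any ancestor, yet $v$ is effective under $\A$ and accruing $\tau_v$. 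This situation arises precisely because $\A$ \emph{itself} has matched across $v$ an odd number of times more than $\adv{\A}$ — a cost of $\A$'s own making, with no adversary action to charge. Neither of your branches accounts for it. The paper isolates this as the ``$0$-subphase'' regime (Lemma~\ref{lemma:bound-time-cost-0-subphase}), where one must accept an additive $w(v)$ term in the per-subphase bound, and then the entire machinery of early vs.\ late phases (Lemma~\ref{lemma:partition-time-line-into-phases}, Corollary~\ref{corollary:low-bound-early-phase-opt}, Lemmas~\ref{lemma:bound-early-phases}--\ref{lemma:bound-late-phases}) exists solely to control how often this additive term is incurred, so that the per-vertex leftover is only a single $w(v)$ summed into $\beta$. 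You flagged that the per-interval comparison would be the crux, but the specific obstacle is not the ``non-synchronous configurations'' you mention; it is the self-inflicted parity mismatch above, which has no adversary charge attached and must be bounded probabilistically via the phase structure.

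A secondary, smaller issue: the $O(1/(\alpha-1))$ factor does come from a geometric sum along root-to-leaf paths, but in the paper it appears when converting $\sCost_{\adv{\A}}$ into $\sum_v\adv{\sigma}_v$ (a matching across $u$ deposits $w(u')$ into $\adv{\sigma}_{u'}$ for every $u'$ \emph{below or at} $u$ on the matched path, summing to $w(u)\cdot\alpha/(\alpha-1)$), not from charging an ancestor's space cost against $\tau_v$ via $\sum_{u\in\Ancestors(v)}1/w(u)$. Your charging direction would double-count across phases in a way that is not obviously bounded, whereas the paper's per-vertex space potential $\adv{\sigma}_v$ tallies each adversary match exactly once per path vertex and the geometric sum is applied once, in Lemma~\ref{lemma:expressing-costs-by-potentials}.
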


\ShortVersion 
Showing that combining \Thm{}\ \ref{theorem:HSBT} and \ref{theorem:main}
yields the desired upper bound on the competitive ratio of $\A$ is based on
relatively standard arguments and is deferred to the full version.
\ShortVersionEnd 
\LongVersion 
We will soon turn our attention to the proof of \Thm{}~\ref{theorem:main},
but first, let us show that it yields the desired upper bound on the
competitive ratio of $\A$.
To that end, fix some $n$-point metric space
$\mathcal{M} = (V, \delta)$
of aspect ratio $\Delta$ and a request set $R$ over $\mathcal{M}$ and let
$\widetilde{\A}^{*}$ be an optimal (offline) algorithm for $R$ (over
$\mathcal{M}$).
Let $\mathcal{P}$ be the probability distribution promised by
\Thm{}~\ref{theorem:HSBT} when applied to $\mathcal{M}$.
Denoting the coin tosses of $\A$ by $\chi$ and taking $\mathcal{T}$ to be some
HSBT in the support of $\mathcal{P}$, we can employ \Thm{}~\ref{theorem:main}
to conclude that
\[
\Expect_{\chi} \left[ \Cost_{\A} \left( R,
\mathcal{T} \right) \right]
\leq
O (\log n) \cdot \sCost_{\A^{*}} \left( R, \mathcal{T} \right)
+
O (\log \Delta + \log n) \cdot \tCost_{\A^{*}} \left( R, \mathcal{T}
\right)
+
\beta(\mathcal{T}) \, ,
\]
where $\A^{*}$ is the projection of $\widetilde{\A}^{*}$ on $\mathcal{T}$
(that is, same requests are matched at the same time, incurring possibly
different space costs).
Therefore,
\begin{align*}
\Expect_{\mathcal{P}, \chi} \left[ \Cost_{\A} \left( R, \mathcal{M} \right)
\right]
\, \leq \, &
\Expect_{\mathcal{T} \in \mathcal{P}} \left[
\Expect_{\chi} \left[ \Cost_{\A} \left( R, \mathcal{T} \right) \right]
\right] \\
\leq \, &
\Expect_{\mathcal{T} \in \mathcal{P}} \left[
O (\log n) \cdot \sCost_{\A^{*}} \left( R, \mathcal{T} \right)
+
O (\log \Delta + \log n) \cdot \tCost_{\A^{*}} \left( R, \mathcal{T}
\right)
+
\beta(\mathcal{T})
\right] \\
= \, &
O (\log n) \cdot \Expect_{\mathcal{T} \in \mathcal{P}} \left[ \sCost_{\A^{*}} \left(
R, \mathcal{T} \right) \right]
+
O (\log \Delta + \log n) \cdot \tCost_{\widetilde{\A}^{*}} \left( R, \mathcal{M}
\right)
+
\beta(\mathcal{M}) \\
\leq \, &
O \left( \log^{2} n \right) \cdot \sCost_{\widetilde{\A}^{*}} \left( R,
\mathcal{M} \right)
+
O (\log \Delta + \log n) \cdot \tCost_{\widetilde{\A}^{*}} \left( R, \mathcal{M}
\right)
+
\beta(\mathcal{M}) \\
\leq \, &
O \left( \log \Delta + \log^{2} n \right) \cdot \Cost_{\widetilde{\A}^{*}} \left(
R, \mathcal{M} \right)
+
\beta(\mathcal{M}) \, ,
\end{align*}
where
$\beta(\mathcal{M}) = \Expect_{\mathcal{T} \in \mathcal{P}}[\beta(\mathcal{T})]$,
the first transition holds since the distance functions in the
support of $\mathcal{P}$ dominate $\delta$,
the third transition holds since the time costs of
$\widetilde{\A}^{*}$ in $\mathcal{M}$ are the same as those of $\A^{*}$ in
$\mathcal{T}$, and
the fourth transition holds by \Thm{}~\ref{theorem:HSBT}.
\par
\LongVersionEnd 
The remainder of this section is dedicated to the proof of
\Thm{}~\ref{theorem:main} and is organized as follows:
First, in \Sect{}~\ref{section:alternating-Poisson-process}, we introduce a
new stochastic process, called \emph{alternating Poisson process (APP)},
together with some related machinery.
APPs play a major role in \Sect{}~\ref{section:analysis-heart} that forms the
heart of the analysis:
we prove \Thm{}~\ref{theorem:main} assuming that online algorithm $\A$
receives a special \emph{end-of-input} signal upon receiving the last request
in $R$ and responds to it by immediately matching all remaining active
requests.
\LongVersion 
Finally, in \Sect{}~\ref{section:lift-end-of-input-assumption}, we lift the
assumption of receiving the end-of-input signal, showing that it does not
affect the (multiplicative) competitive ratio.
\LongVersionEnd 
\ShortVersion 
Lifting the end-of-input signal assumption is relatively straightforward and
is deferred to the full version.
\ShortVersionEnd 

\subsection{Alternating Poisson processes}
\label{section:alternating-Poisson-process}
A major component of the analysis presented in
\Sect{}~\ref{section:analysis-heart} is a stochastic process (more
specifically, a point process) that we refer to as an \emph{alternating
Poisson process (APP)}.
This process is parametrized by its
\emph{start time} $t_{0} \in \Reals_{\geq 0}$,
\emph{length} $\gamma \in \Reals_{> 0}$,
\emph{rate} $\lambda \in \Reals_{> 0}$, and
a right-continuous \emph{coloring function}
$c : [t_{0}, t_{0} + \gamma) \rightarrow \{ 1, 2, \bot \}$
with finitely many discontinuity points.\footnote{%
The color $\bot$ is redundant for the analysis of the APPs carried out in the
present section.
We introduce it because it makes things simpler in
\Sect{}~\ref{section:analysis-heart} when we employ the APP framework in the
analysis of our online algorithm.
}
For simplicity, in the remainder of this section, we assume that
the APP starts at time
\LongVersion 
$t_{0} = 0$;
this assumption can be lifted by translating any time
$t \in [0, \gamma]$
to
$t + t_{0} \in [t_{0}, t_{0} + \gamma]$.
\LongVersionEnd 
\ShortVersion 
$t_{0} = 0$.
\ShortVersionEnd 

Given some $0 \leq t \leq t' \leq \gamma$, we define the \emph{$1$-volume} and
\emph{$2$-volume} of the interval $[t, t')$ as
\begin{MathMaybe}
V_{1}(t, t') = \int_{t}^{t'} \Indicator(c(x) = 1) \, d x
\end{MathMaybe}
and
\begin{MathMaybe}
V_{2}(t, t') = \int_{t}^{t'} \Indicator(c(x) = 2) \, d x \, ,
\end{MathMaybe}
respectively.
The APP is realized by independent and identically $\ExpDist(\lambda)$
distributed random variables
$Z_{1}, Z_{2}, \dots$
These determine the $[0, \gamma]$-valued random variables
$T_{1}, T_{2}, \dots$,
referred to as \emph{alternation times}, defined inductively by fixing
$T_{0} = 0$
and setting
\[
T_{j}
=
\left\{
\begin{array}{ll}
\max \left\{ t \leq \gamma : V_{1}(T_{j - 1}, t) \leq Z_{j} \right\}, &
\text{$j$ is odd} \\
\max \left\{ t \leq \gamma : V_{2}(T_{j - 1}, t) \leq Z_{j} \right\}, &
\text{$j$ is even}
\end{array}
\right.
\]
for $j = 1, 2, \dots$
Put differently, the alternation times divide the process into
\emph{iterations} so that iteration $j$ lasts from time $T_{j - 1}$ to time
$T_{j}$.
In odd (resp., even) iterations, the process \emph{digests} the $1$s (resp.,
$2$s), ignoring the $\bot$s and the $2$s (resp., $1$s).
If the iteration did not end by time
$T_{j - 1} < t < \gamma$
and
$c(t) = 1$
(resp.,
$c(t) = 2$),
then it ends at time $t + d t$ with probability
$\pi = \lambda d t$;
the iteration ends at time $\lambda$ if it did not end beforehand (an
illustration is provided in \Fig{}~\ref{figure:app}).

\def\FigureApp{
\LongVersion
\begin{figure}
\LongVersionEnd
\ShortVersion
\begin{figure}[h]
\ShortVersionEnd
\begin{center}
\includegraphics[width=\textwidth]{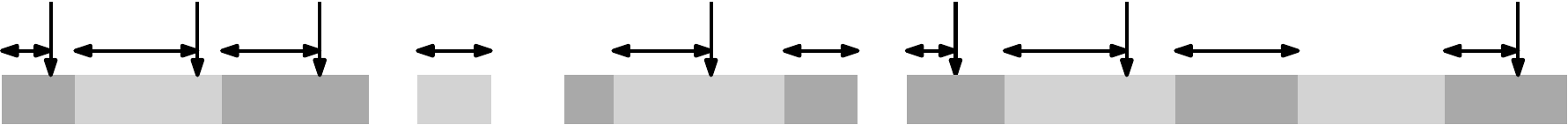}
\end{center}
\caption{ \label{figure:app}
A realization of an alternating Poisson process with time progressing from
left to right.
The dark gray, light gray, and white intervals represent the colors $1$, $2$,
and $\bot$, respectively.
The vertical arrows represent the meaningful alternation times and the
horizontal two-sided arrows depict the time intervals that contribute to the
digestion of the corresponding iterations.
}
\end{figure}
}
\LongVersion 
\FigureApp{}
\LongVersionEnd 

The definition of the alternation times implies, in particular, that if
$T_{j - 1} = \gamma$, then $T_{j} = \gamma$;
we say that the $j$th alternation time is \emph{meaningful} if
$0 < T_{j} < \gamma$.
Observe that if $T_{j}$ is meaningful and $j \geq 1$ is odd (resp., even),
then $c(T_{j})$ must be $1$ (resp., $2$).
Let
\begin{MathMaybe}
N
=
\max \{ j \in \Integers_{\geq 0} \mid T_{j} < \gamma \}
\end{MathMaybe}
be the random variable counting the number of meaningful alternation times.
\LongVersion 
\par
\LongVersionEnd 
Define the $[0, \gamma]$-valued random variables $G_{1}, G_{2}, \dots$ by setting
\[
G_{j}
=
\left\{
\begin{array}{ll}
V_{1}(T_{j - 1}, T_{j}), & \text{$j$ is odd} \\
V_{2}(T_{j - 1}, T_{j}), & \text{$j$ is even}
\end{array}
\right.
\]
and let
$G = \sum_{j = 1}^{\infty} G_{j}$.
We refer to $G_{j}$ as the \emph{digestion} of the $j$th iteration and to $G$
as the \emph{total digestion}.

\begin{lemma} \label{lemma:APP-low-bound-last-digestion}
For every $0 \leq t < \gamma$, we have
$\Expect[G_{j} \mid T_{j - 1} = t]
=
\frac{1}{\lambda} \left( 1 - e^{-\lambda \cdot V_{i}(t, \gamma)} \right)$,
where
$i = 1$
if $j$ is odd;
and
$i = 2$
if $j$ is even.\footnote{%
Recall that for every
$j > 1$,
an odd (resp., even) $j$ implies that
$c(t) = c(T_{j - 1})= 2$
(resp.,
$c(t) = c(T_{j - 1}) = 1$).
}
\end{lemma}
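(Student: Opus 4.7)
The plan is to reduce the computation to the expectation of a capped exponential random variable. Condition on $T_{j-1} = t$, and set $i \in \{1,2\}$ according to the parity of $j$ as in the statement. By definition, the $j$th iteration consumes from $Z_j \sim \ExpDist(\lambda)$, so
\begin{MathMaybe}
T_j = \max\{t' \leq \gamma : V_i(t, t') \leq Z_j\},
\end{MathMaybe}
and $G_j = V_i(t, T_j)$. The key observation is that $V_i(t, \cdot)$ is continuous and non-decreasing in its second argument (it is the integral of a $\{0,1\}$-valued function), so the supremum is attained and
\begin{MathMaybe}
G_j = V_i(t, T_j) = \min\bigl(Z_j,\, V_i(t, \gamma)\bigr).
\end{MathMaybe}
Indeed, if $Z_j \geq V_i(t, \gamma)$ the constraint is never violated on $[t, \gamma]$, so $T_j = \gamma$ and $G_j = V_i(t, \gamma)$; otherwise, by continuity there is a unique $t' \in (t, \gamma)$ with $V_i(t, t') = Z_j$ and this $t'$ equals $T_j$, yielding $G_j = Z_j$.

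Having reduced $G_j$ to a capped exponential, I would finish with the standard tail computation. Writing $v = V_i(t, \gamma)$ and using $\Expect[X] = \int_0^\infty \Prob[X > s]\,ds$ for the non-negative random variable $X = \min(Z_j, v)$, I get
\begin{MathMaybe}
\Expect[\min(Z_j, v)] = \int_0^v \Prob[Z_j > s]\,ds = \int_0^v e^{-\lambda s}\,ds = \frac{1}{\lambda}\bigl(1 - e^{-\lambda v}\bigr),
\end{MathMaybe}
which is exactly the claimed expression.

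I do not expect any real obstacle; the only subtle point is justifying the identity $G_j = \min(Z_j, V_i(t, \gamma))$, which needs the continuity of the integral $V_i(t, \cdot)$ and the fact that the $\max$ in the definition of $T_j$ is attained (on $[t, \gamma]$, which is compact). The hypothesis that $c$ has only finitely many discontinuities makes $\Indicator(c(\cdot) = i)$ Riemann integrable, so $V_i(t, \cdot)$ is indeed continuous and the argument goes through cleanly. The rest is a one-line computation using the memoryless/tail-integral identity for an exponential random variable.
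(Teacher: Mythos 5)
Your proof is correct and follows essentially the same route as the paper: both condition on $T_{j-1}=t$, identify $G_j$ as a truncated exponential $\min\{\ExpDist(\lambda), V_i(t,\gamma)\}$, and then compute the expectation (you via the tail-integral formula $\Expect[X]=\int_0^\infty\Prob[X>s]\,ds$, the paper via integration by parts, a cosmetic difference). One small slip worth noting: your claim that ``there is a unique $t'$ with $V_i(t,t')=Z_j$'' is not literally true since $V_i(t,\cdot)$ is merely non-decreasing and can be flat on intervals where $c\neq i$, but because $T_j$ is defined as the \emph{maximum} feasible $t'$ and $V_i(t,\cdot)$ is continuous, the identity $V_i(t,T_j)=Z_j$ (hence $G_j=\min(Z_j,V_i(t,\gamma))$) still holds and your conclusion is unaffected.
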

\LongVersion 
\begin{proof}
Assume without loss of generality that $j$ is odd and $i = 1$ (the case that
$j$ is even and $i = 2$ is proved following the same line of arguments).
The design of the APP implies that conditioned on
$T_{j - 1} = t$,
the random variable $G_{j}$ satisfies
$G_{j} \sim \min \{ \ExpDist(\lambda), V_{1}(t, \gamma) \}$,
that is, it is distributed identically to an exponential random variable with
rate $\lambda$, truncated at $V_{1}(t, \gamma)$.
Fixing
$\vartheta = V_{1}(t, \gamma)$,
the assertion follows by observing that
\begin{align*}
\Expect \left[ \min \left\{ \ExpDist(\lambda), \vartheta \right\}
\right]
\, = \, &
\int_{0}^{\vartheta} \lambda e^{-\lambda x} x \, d x \, + \, \vartheta e^{-\lambda
\vartheta} \\
= \, &
\left.
-e^{-\lambda x} x - \frac{1}{\lambda} e^{-\lambda x}
\right|_{0}^{\vartheta}
\, + \, \vartheta e^{-\lambda \vartheta} \\
= \, &
- \vartheta e^{-\lambda \vartheta}
- \frac{1}{\lambda} e^{-\lambda \vartheta}
+ \frac{1}{\lambda}
+ \vartheta e^{-\lambda \vartheta} \\
= \, &
\frac{1}{\lambda} \left( 1 - e^{-\lambda \vartheta} \right) \, ,
\end{align*}
where the second transition is derived using integration by parts with
$u(x) = x$
and
$v(x) = -e^{-\lambda x}$.
\end{proof}
\LongVersionEnd 

\begin{lemma} \label{lemma:APP-digestion-equals-number-meaningful}
$\Expect[G]
=
\Expect[N] / \lambda$.
\end{lemma}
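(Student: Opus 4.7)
The plan is to identify the expression $1 - e^{-\lambda V_i(t,\gamma)}$ that appears in Lemma~\ref{lemma:APP-low-bound-last-digestion} with the conditional probability that the $j$-th alternation time is meaningful. Once this identification is made, summing over $j$ and exchanging sum and expectation gives the result almost for free.

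Concretely, fix $j \geq 1$ and assume without loss of generality that $j$ is odd, so $i = 1$. By the construction of the APP, conditioned on $T_{j-1} = t < \gamma$, the $j$-th iteration terminates strictly before $\gamma$ if and only if the independent variable $Z_j \sim \ExpDist(\lambda)$ satisfies $Z_j < V_1(t,\gamma)$; hence
\[
\Prob\bigl[T_j < \gamma \,\big|\, T_{j-1} = t\bigr] \;=\; 1 - e^{-\lambda V_1(t,\gamma)}.
\]
Combining this with Lemma~\ref{lemma:APP-low-bound-last-digestion} yields
\[
\Expect\bigl[G_j \,\big|\, T_{j-1} = t\bigr] \;=\; \tfrac{1}{\lambda}\,\Prob\bigl[T_j < \gamma \,\big|\, T_{j-1} = t\bigr]
\]
for every $0 \le t < \gamma$. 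The identity is also trivially valid for $t = \gamma$ (both sides equal $0$, since $T_{j-1} = \gamma$ forces $T_j = \gamma$ and $G_j = 0$). Taking the expectation over $T_{j-1}$ therefore gives $\Expect[G_j] = \frac{1}{\lambda}\Prob[T_j < \gamma]$, and the same argument handles even $j$ with $i = 2$.

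It remains to relate the probabilities $\Prob[T_j < \gamma]$ to $\Expect[N]$. Because the alternation times are monotone nondecreasing and absorbing at $\gamma$ (i.e., $T_{j-1} = \gamma$ implies $T_j = \gamma$), we have the pointwise identity $N = \sum_{j=1}^\infty \Indicator(T_j < \gamma)$, so $\Expect[N] = \sum_{j=1}^\infty \Prob[T_j < \gamma]$ by monotone convergence. Applying the same exchange of sum and expectation to $G = \sum_{j=1}^\infty G_j$ (all summands are nonnegative, so Tonelli applies) we conclude
\[
\Expect[G] \;=\; \sum_{j=1}^{\infty} \Expect[G_j] \;=\; \tfrac{1}{\lambda}\sum_{j=1}^{\infty}\Prob[T_j < \gamma] \;=\; \tfrac{\Expect[N]}{\lambda}.
\]

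There is no real obstacle here: once one notices that Lemma~\ref{lemma:APP-low-bound-last-digestion} is exactly stating $\Expect[G_j \mid T_{j-1}] = \frac{1}{\lambda}\Prob[\text{iteration } j \text{ is meaningful} \mid T_{j-1}]$, the lemma becomes a one-line computation. The only place where care is needed is checking that the ``absorbing'' behavior at $\gamma$ makes $N$ equal to the sum of indicators $\Indicator(T_j < \gamma)$ without overcounting, and that nonnegativity justifies swapping the infinite sum with expectation.
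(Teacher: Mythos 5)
Your proof is correct and follows essentially the same route as the paper: both reduce the claim to $\Expect[G_j] = \tfrac{1}{\lambda}\Prob[T_j < \gamma]$ for each $j$, compute $\Prob[T_j < \gamma \mid T_{j-1}=t] = 1 - e^{-\lambda V_i(t,\gamma)}$, invoke Lemma~\ref{lemma:APP-low-bound-last-digestion}, and apply the law of total expectation together with Tonelli/monotone convergence. The only cosmetic difference is that you write $\Expect[N] = \sum_j \Prob[T_j < \gamma]$ directly from $N = \sum_j \Indicator(T_j < \gamma)$, whereas the paper phrases it as $\sum_j \Prob(N \geq j)$ --- the same identity since $N \geq j \iff T_j < \gamma$.
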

\LongVersion 
\begin{proof}
Let $I_{j}$, $j = 1, 2, \dots$, be an indicator random variable for the event
$T_{j} < \gamma$
and notice that
\[
\Expect[N]
\, = \,
\sum_{j = 1}^{\infty} \Prob \left( N \geq j \right)
\, = \,
\sum_{j = 1}^{\infty} \Expect \left[ I_{j} \right] \, .
\]
Recalling that
\[
\Expect[G]
\, = \,
\sum_{j = 1}^{\infty} \Expect \left[ G_{j} \right] \, ,
\]
it suffices to prove that
$\Expect[I_{j}] / \lambda = \Expect[G_{j}]$
for $j = 1, 2, \dots$
To that end, we show that
\[
\Expect \left[ \Expect \left[ I_{j} \mid T_{j - 1} \right] \right] / \lambda
\, = \,
\Expect \left[ \Expect \left[ G_{j} \mid T_{j - 1} \right] \right] \,
\]
which establishes the assertion by the law of total expectation.

The random variable
$\Expect[I_{j} \mid T_{j - 1}]$
maps the event
$T_{j - 1} = t$
to
\[
\Expect[I_{j} | T_{j - 1} = t]
\, = \,
\Prob(\ExpDist(\lambda) < V_{i}(t, \gamma))
\, = \,
1 - e^{-\lambda \cdot V_{i}(t, \gamma)} \, ,
\]
where
$i = 1$
if $j$ is odd; and
$i = 2$
if $j$ is even.
The proof is completed by \Lem{}~\ref{lemma:APP-low-bound-last-digestion} as
the random variable
$\Expect[G_{j} \mid T_{j - 1}]$
maps the event
$T_{j - 1} = t$
to
$\Expect[G_{j} \mid T_{j - 1} = t]$.
\end{proof}
\LongVersionEnd 

\ShortVersion \sloppy \ShortVersionEnd
\begin{lemma} \label{lemma:APP-bound-number-meaningful}
The random variable $N$ is stochastically dominated by
$1 + 2 Z$,
where
\LongVersion 
$Z \sim \PoisDist(\lambda \cdot \min \{ V_{1}(0, \gamma), V_{2}(0, \gamma) \})$
\LongVersionEnd 
\ShortVersion 
$Z$
\ShortVersionEnd 
is a Poisson random variable with parameter
$\lambda \cdot \min \{ V_{1}(0, \gamma), V_{2}(0, \gamma) \}$.
Moreover, if $K$ denotes the number of discontinuity points of the coloring
function $c$ in $[0, \gamma)$, then
$N \leq K + 1$
(with probability $1$).
\end{lemma}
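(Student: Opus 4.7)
The lemma contains two distinct assertions about the number $N$ of meaningful alternation times: the deterministic combinatorial bound $N \leq K + 1$ and the stochastic domination by $1 + 2Z$. Both follow from structural observations, and I would prove them separately.

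For $N \leq K + 1$, the central step is to show that whenever $T_j$ is meaningful, $c(T_j)$ equals the color targeted in iteration $j$, namely $c(T_j) = 1$ when $j$ is odd and $c(T_j) = 2$ when $j$ is even. This is the main obstacle, because the level set $\{t \in [T_{j-1}, \gamma] : V_i(T_{j-1}, t) = Z_j\}$ can be a nondegenerate interval stretching across a maximal gap of color $\ne i$, and one must verify that $T_j$ sits at the right endpoint of this gap. The argument I would use: continuity of $V_i$ ensures the maximum in the definition of $T_j$ satisfies $V_i(T_{j-1}, T_j) = Z_j$, and $V_i$ must be strictly increasing immediately to the right of $T_j$ (otherwise the maximum could be pushed further right), which combined with right-continuity of $c$ forces $c(T_j) = i$. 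Granting this, consecutive meaningful times $T_j, T_{j+1}$ for $1 \leq j < N$ must satisfy $c(T_j) \neq c(T_{j+1})$, so $c$ is discontinuous at some point of $(T_j, T_{j+1}] \subseteq [0, \gamma)$. The $N-1$ intervals $(T_j, T_{j+1}]$ are pairwise disjoint, so they witness $N-1$ distinct discontinuities and hence $N \leq K+1$.

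For the stochastic domination, I would split the meaningful times by parity: let $N_{\text{odd}}$ (resp.\ $N_{\text{even}}$) count meaningful $T_j$ with $j$ odd (resp.\ even). Since iterations alternate starting from $j=1$, we have $N = N_{\text{odd}} + N_{\text{even}}$ with $|N_{\text{odd}} - N_{\text{even}}| \leq 1$. The variables $\{Z_j\}_{j \text{ odd}}$ are iid $\ExpDist(\lambda)$, and each completed odd iteration consumes exactly $Z_j$ units of $1$-volume, so $\sum_{k=1}^{N_{\text{odd}}} Z_{2k-1} \leq V_1(0, \gamma)$. Viewing $Z_1, Z_3, Z_5, \ldots$ as the interarrival times of a rate-$\lambda$ Poisson process on $[0, \infty)$, this inequality couples $N_{\text{odd}}$ pointwise with the Poisson count, giving $N_{\text{odd}} \leq Y_1$ almost surely, where $Y_1 \sim \PoisDist(\lambda \cdot V_1(0, \gamma))$; an analogous coupling yields $N_{\text{even}} \leq Y_2 \sim \PoisDist(\lambda \cdot V_2(0, \gamma))$. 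If $V_1(0, \gamma) \leq V_2(0, \gamma)$, so $M = V_1(0, \gamma)$, then $N \leq 2 N_{\text{odd}} \leq 2 Y_1 \leq 1 + 2Z$ with $Z \sim \PoisDist(\lambda M)$; in the opposite case $V_2(0, \gamma) < V_1(0, \gamma)$ the same conclusion follows from $N \leq 2 N_{\text{even}} + 1 \leq 1 + 2 Y_2$. Since the bound holds pointwise under the coupling, $N$ is stochastically dominated by $1 + 2Z$ as claimed.
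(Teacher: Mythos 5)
Your proof is correct and follows essentially the same route as the paper's: the paper defines $N_1 = |\{j \ge 0 : T_{2j+1} < \gamma\}|$ and $N_2 = |\{j \ge 1 : T_{2j} < \gamma\}|$ (your $N_{\mathrm{odd}}, N_{\mathrm{even}}$), observes $N = N_1 + N_2$ and $N_2 \le N_1 \le N_2 + 1$, couples $N_i$ with a $\PoisDist(\lambda V_i)$ variable exactly as you do, and concludes $N \le 1 + 2N_i$. For the $N \le K + 1$ bound the paper instead invokes that $N_i$ is at most the number of maximal intervals of color $i$ and that these together number at most $K+1$, whereas you count discontinuities directly between consecutive meaningful times $T_j, T_{j+1}$; the two arguments are interchangeable. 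Your justification that $c(T_j)$ equals the targeted color of iteration $j$ (via right-continuity of $c$ and the maximality in the definition of $T_j$) is stated without proof in the paper just before the lemma, so supplying it is a welcome addition rather than a deviation.
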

\ShortVersion \par\fussy \ShortVersionEnd
\LongVersion 
\begin{proof}
Fix
$V_{1} = V_{1}(0, \gamma)$
and
$V_{2} = V_{2}(0, \gamma)$
and define the random variables
\[
N_{1}
\, = \,
\left| \left\{ j \in \Integers_{\geq 0} \mid T_{2 j + 1} < \gamma \right\}
\right|
\quad \text{and} \quad
N_{2}
\, = \,
\left| \left\{ j \in \Integers_{\geq 1} \mid T_{2 j} < \gamma \right\} \right|
\, .
\]
The definition of the APP ensures the following four properties:
\begin{DenseEnumerate}

\item[(P1)]
$N = N_{1} + N_{2}$;

\item[(P2)]
$N_{2} \leq N_{1} \leq N_{2} + 1$;

\item[(P3)]
$N_{i}$, $i \in \{ 1, 2 \}$, is stochastically dominated by $\PoisDist(\lambda
\cdot V_{i})$; and

\item[(P4)]
$N_{i}$, $i \in \{ 1, 2 \}$, is bounded from above by the number of (set-wise)
maximal intervals
$I \subseteq [0, \gamma)$
satisfying
$c(t) = i$
for all $t \in I$.

\end{DenseEnumerate}
The second part of the assertion follows directly from properties (P1) and
(P4).
For the first part, we employ (P1) and (P2) to conclude that
$N \leq 1 + 2 N_{i}$
for $i \in \{ 1, 2 \}$.
Then, by (P3), it follows that $N$ is stochastically dominated by
$1 + 2 \cdot \PoisDist(\lambda \cdot V_{i})$
for $i \in \{ 1, 2 \}$,
thus it is stochastically dominated by
$1 + 2 \cdot \PoisDist(\lambda \cdot \min\{ V_{1}, V_{2} \})$.
\end{proof}
\LongVersionEnd 

\LongVersion 
It will be convenient to also consider a generalization of the APP, referred to
as a \emph{rate-varying APP}, in which the fixed rate parameter $\lambda$ is
replaced by a \emph{rate function}
$\lambda' : [0, \gamma) \rightarrow \Reals_{> 0}$
that may vary in time.
This affects the aforementioned iteration termination probability $\pi$ so
that an odd (resp., even) iteration $j$ that did not end by time
$T_{j - 1} < t < \gamma$,
$c(t) = 1$
(resp.,
$c(t) = 2$),
will now end at time $t + d t$ with probability
$\pi = \pi(t) = \lambda'(t) d t$.
Given some (fixed)
$\lambda \in \Reals_{> 0}$,
it is straightforward to verify that if the rate function $\lambda'(t)$ is
bounded from above by $\lambda$, i.e.,
$\lambda'(t) \leq \lambda$
for all
$0 \leq t < \lambda$,
then \Lem{} \ref{lemma:APP-low-bound-last-digestion} and
\ref{lemma:APP-bound-number-meaningful} hold also for rate-varying APPs,
only that in the former, we should replace the equality in
$\Expect[G_{j} \mid T_{j - 1} = t]
=
\frac{1}{\lambda} \left( 1 - e^{-\lambda \cdot V_{i}(t, \gamma)} \right)$
with a $\geq$ inequality.
\LongVersionEnd 

\begin{IntuitionSpotlight}
APPs are utilized in the analysis conducted in
\Sect{}~\ref{section:analysis-heart} as they capture the behavior of the
stilt-walker algorithm in what can be informally described as ``toggling
situations''.
Such situations turn out to appear in multiple parts of the analysis (see
\Lem{} \ref{lemma:phase-app}, \ref{lemma:bound-time-cost-0-subphase}, and
\ref{lemma:partition-time-line-into-phases}).
\end{IntuitionSpotlight}

\subsection{Analysis under the end-of-input signal assumption}
\label{section:analysis-heart}
Let $\mathcal{T}$ be an $n$-point $\alpha$-HSBT of aspect ratio $\Delta$
and let $T$ and
$w : T \rightarrow \Reals_{\geq 0}$
be the full binary tree and weight function that realize $\mathcal{T}$.
Assume without loss of generality that the minimum positive distance in
$\mathcal{T}$ is scaled to $1$ so that $\Delta$ is the diameter of
$\mathcal{T}$.

Our goal in this section is to establish \Thm{}~\ref{theorem:main} under the
end-of-input signal assumption.\footnote{%
For the convenience of the reader, \Fig{}~\ref{figure:claim-chart} provides a
schematic overview of the analysis presented in this section.
}
More formally, assume that the online algorithm is signaled at time
$\EndTime = \max \{ \aTime(\rho) \mid \rho \in R \}$
\NotationLabel{analysis:end-time}
(the arrival time of the last request in $R$);
upon receiving this signal, the algorithm clears the remaining active
requests by immediately matching across $v$ for every effective vertex
$v \in \Effective(\EndTime)$
(this is guaranteed as the number of active requests at time $\EndTime$ must
be even).
Let
$\sEndCost$
\NotationLabel{analysis:end-cost}
be the space cost of these matching operations and observe that
$\sEndCost \leq (n / 2) \cdot \Delta$.
\LongVersion 
(Although it does not affect our analysis, it is interesting to point out that
$\sEndCost$ is, in fact, the cost of an optimal matching of the remaining
requests.)
\LongVersionEnd 
We start the analysis with the following ``warmup''
\LongVersion 
observation regarding the operation of the stilt-walker algorithm.
\LongVersionEnd 
\ShortVersion 
observation.
\ShortVersionEnd 

\begin{observation*}
Consider an internal vertex $v \in T - \Leaves$ with children $u_{1}, u_{2}$.
The design of $\A$ ensures that:
\begin{DenseEnumerate}

\item \label{item:independence}
the random variable
$\Indicator(v \in \Odd(t))$
is independent of the coin tosses of all vertices $u \in T(v)$ (including
$v$);

\item \label{item:match-on-top}
$\A$ can match on top of $v$ only when $v$ is odd; and

\item \label{item:match-across}
if $\A$ matched across or on top of $v$ at time $t$, then
\LongVersion 
$v$, $u_{1}$, and $u_{2}$ are not odd immediately following time
$t$, i.e.,
\LongVersionEnd 
$v, u_{1}, u_{2} \notin \Odd(t + d t)$
for infinitesimally small
$d t > 0$.

\end{DenseEnumerate}
\end{observation*}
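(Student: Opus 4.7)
The plan is to carry out all three parts via a single piece of bookkeeping: every matching step of $\A$ is a ``match across'' some effective vertex $w$ (triggered by $w$'s own coin toss), and the two requests it involves sit at the feet of the two stilts rooted at $w$'s children, hence both lie in $\Leaves(w)$. Combined with the stilt-decomposition of the induced forest on $\Odd(t)$ from the preliminaries---each internal $v \in \Odd(t)$ has exactly one child in $\Odd(t)$, so components are paths from a head down to a leaf---this settles all three items.

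I would dispose of parts (2) and (3) first, as they are purely structural. For (2), if $\A$ matches on top of $v$ at time $t$, the match is across some $w \neq v$ with $v$ strictly on the path between the two matched leaves $x_1^w, x_2^w$; outside $w$ that path is exactly the two stilts through $w$'s children, which live in $\Odd(t)$, so $v \in \Odd(t)$. For (3), in the ``match across $v$'' case, $v$ is effective, so $u_1, u_2 \in \Heads(t) \subseteq \Odd(t)$ while $v \notin \Odd(t)$; the match removes one leaf from each of $\Leaves(u_1), \Leaves(u_2)$, flipping the parities of $u_1$ and $u_2$ and keeping that of $v$ unchanged (two removals from $\Leaves(v)$). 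In the ``match on top of $v$'' case, part (2) gives $v \in \Odd(t)$, so $v$ lies on some stilt $S$; exactly one of its children---say $u_1$---continues $S$ downward, and $u_2 \notin \Odd(t)$ (else the edge $(v,u_2)$ would sit in the induced forest and extend $S$ beyond a path). The match removes the foot of $S$, a leaf in $\Leaves(u_1) \subseteq \Leaves(v) \setminus \Leaves(u_2)$, together with $x_2^w \notin \Leaves(v)$, so the parities of $v$ and $u_1$ flip from odd to even while $u_2$ sees no removal in its subtree and stays even.

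For part (1), I would prove by induction on (discretized) time the slightly stronger statement that, for every $u$ that is not a proper descendant of $v$---i.e.\ $u = v$, an ancestor of $v$, or incomparable with $v$---the trajectory of $\Indicator(u \in \Odd(\cdot))$ is a deterministic function of the arrival stream together with the coin tosses at vertices outside $T(v)$. In the inductive step, a coin toss fired inside $T(v)$ produces a match whose two endpoints both lie in $\Leaves(v)$, so it flips no parity at any outside vertex; a coin toss fired at some $w \notin T(v)$ triggers a match iff $w \in \Effective(t)$, which depends only on the parities of $w$'s children (both outside, by the inductive hypothesis), and its effect on the parity of an outside $u$ is controlled entirely by whether the relevant stilt at $w$ passes through $u$---a condition determined by the parities of the outside vertices on the segment from $w$'s child down to $u$. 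Specializing $u = v$ yields (1).

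The main obstacle I anticipate is closing this induction cleanly: one must ensure that the stilt shapes needed to locate ``where an outside match lands'' never secretly reference coin tosses inside $T(v)$. The restriction to $u$ not a proper descendant of $v$ is what makes this work: in the only case where an outside match actually changes an outside parity (namely when $u$ is a proper descendant of $w$, hence an ancestor of $v$ or $v$ itself), the segment of the stilt one has to trace lives entirely on vertices that are themselves outside $T(v)$.
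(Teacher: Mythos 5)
Your proof is correct and takes essentially the same route as the paper's, though it is considerably more detailed. For parts (2) and (3) you reproduce the paper's reasoning: matching on top of $v$ places $v$ strictly on the stilt path of one of $w$'s children, hence $v\in\Odd(t)$; and for part (3) you trace the single removal in $\Leaves(v)$ along the odd child, exactly as the paper does (the paper phrases this as ``$\A$ matched on top of $u_i$ at time $t$'').

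For part (1), the paper's proof is only a one-liner (coin tosses at $u\in T(v)$ trigger matches across $u$, which remove two requests from $\Leaves(v)$ and so preserve $v$'s parity), which addresses the \emph{direct} effect of an inside coin toss but leaves implicit the \emph{indirect} effect --- that an inside coin toss cannot, by changing the state, alter whether later outside coin tosses fire or where those outside matches land. Your induction (that the entire parity trajectory of every vertex that is not a proper descendant of $v$ is a deterministic function of the arrival stream and the coin tosses outside $T(v)$) is precisely the fill-in needed to make the paper's argument airtight, and your observation that the stilt one must trace from $w$'s child down to such a $u$ consists entirely of vertices outside $T(v)$ is the correct justification for why the induction closes. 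So: not a different approach, but a more careful execution of the same one.

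One small imprecision worth flagging: your opening sentence asserts that every matching step of $\A$ is a ``match across'' some effective vertex triggered by its coin toss, but $\A$ also performs immediate matches of two co-located requests at a single leaf, which are not triggered by any coin toss. This does not damage the argument (such a match neither matches on top of nor across any internal vertex, and removes two requests from $\Leaves(u)$ for every ancestor $u$, preserving all internal parities), but the bookkeeping claim as stated is not literally true and should be qualified.
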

\begin{proof}
To establish property \ref{item:independence}, notice that the coin tosses of
vertex $u$ determine the decisions of $\A$ to match across $u$.
Matching across $u$ decreases $|\Active_{v}(t)|$ by $2$, hence it does not
affect its parity.
\LongVersion \par \LongVersionEnd
Property \ref{item:match-on-top} is proved by recalling that matching on top
of $v$ at time $t$ is realized by matching a request located in some leaf
$x \in \Leaves(v)$
to a request located in some leaf
$x' \in \Leaves - \Leaves(v)$.
Since $v \neq \LCA(x, x')$, it must belong to the stilt in $\Stilts(t)$
whose foot is $x$ which establishes the assertion by the definition of
$\Stilts(t)$.
\LongVersion \par \LongVersionEnd
Finally, observe that property \ref{item:match-across} holds trivially if $\A$
matched across $v$ at time $t$ because this means that
$u_{1}, u_{2} \in \Odd(t)$
and thus,
$v, u_{1}, u_{2} \notin \Odd(t + d t)$.
Otherwise, if $\A$ matched on top of $v$ at time $t$, then
$v \in \Odd(t)$
which means that
$u_{i} \in \Odd(t)$
and
$u_{3 - i} \notin \Odd(t)$
for some
$i \in \{ 1, 2 \}$.
This also means that $\A$ matched on top of $u_{i}$ at time $t$, therefore
$v, u_{i}, u_{3 - i} \notin \Odd(t + d t)$.
\end{proof}

\begin{IntuitionSpotlight}
A key ingredient in the analysis of $\A$'s competitive ratio is an alternative
method for measuring its time and space cost on a \emph{per-vertex basis}.
This is facilitated by the definitions of time and space potentials for each
internal vertex $v$.
\end{IntuitionSpotlight}

\paragraph{Time and space potentials.}
Consider some internal vertex $v \in T - \Leaves$ with children $u_{1},
u_{2}$ and some
$0 \leq t_{0} < t_{1} \leq \EndTime$.
The \emph{time potentials} of $v$, denoted $\tau_{v}$
and $\adv{\tau}_{v}$,
capture the contributions of $v$ to $\tCost_{\A}(R, \mathcal{T})$ and
$\tCost_{\adv{\A}}(R, \mathcal{T})$, respectively, in a certain time interval.
They are defined by setting
\[
\tau_{v}([t_{0}, t_{1}))
\, = \,
\int_{t_{0}}^{t_{1}} \Indicator(v \in \Effective(t)) \, d t
\quad \text{and} \quad
\adv{\tau}_{v}([t_{0}, t_{1}))
\, = \,
\int_{t_{0}}^{t_{1}} \Indicator(u_{1} \in \adv{\Odd}(t)) + \Indicator(u_{2}
\in \adv{\Odd}(t)) \, d t \, ;
\]
\NotationLabel{analysis:tau}
\NotationLabel{analysis:adv-tau}
in other words, a $d t$ amount is deposited into $\tau_{v}$ whenever $v \in
\Effective(t)$ and into $\adv{\tau}_{v}$ whenever $u_{i} \in \adv{\Odd}(t)$
for $i \in \{ 1, 2 \}$.

\ShortVersion \sloppy \ShortVersionEnd
The \emph{space potentials} of $v$, denoted $\sigma_{v}$
\NotationLabel{analysis:sigma}
and $\adv{\sigma}_{v}$,
\NotationLabel{analysis:adv-sigma}
capture the contributions of $v$ to $\sCost_{\A}(R, \mathcal{T})$ and
$\sCost_{\adv{\A}}(R, \mathcal{T})$, respectively, in a certain time interval.
An amount of $w(v)$ is deposited into $\sigma_{v}$ whenever $\A$ matches
across $v$;
an amount of $w(v)$ is deposited into $\adv{\sigma}_{v}$ whenever
$\adv{\A}$ matches across or on top of $v$.
In other words, given two requests $\rho, \rho'$ with
$x = \Location(\rho)$
and
$x' = \Location(\rho')$,
if $\A$ matches requests $\rho$ and $\rho'$,
then we deposit an amount of $w(u)$ into $\sigma_{u}$ for
$u = \LCA(x, x')$;
if $\adv{\A}$ matches requests $\rho$ and $\rho'$,
then we deposit an amount of $w(u)$ into $\adv{\sigma}_{u}$ for every internal
vertex $u$ along the unique path connecting $x$ and $x'$ in $T$.
Let $\sigma_{v}([t_{0}, t_{1}))$ and $\adv{\sigma}_{v}([t_{0}, t_{1}))$ be the
total amount deposited into $\sigma_{v}$ and $\adv{\sigma}_{v}$, respectively,
during the time interval $[t_{0}, t_{1})$.
\ShortVersion \par\fussy \ShortVersionEnd

For clarity of the exposition, we often write $\tau_{v}(t_{0}, t_{1})$,
$\adv{\tau}_{v}(t_{0}, t_{1})$, $\sigma_{v}(t_{0}, t_{1})$, and
$\adv{\sigma}_{v}(t_{0}, t_{1})$ instead of the aforementioned notations.
We also extend the definition of these four notations from intervals to
collections of disjoint intervals in the natural manner.
\Thm{}~\ref{theorem:main} is established by proving the following three
lemmas.

\begin{IntuitionSpotlight}
\Lem{}~\ref{lemma:expressing-costs-by-potentials} allows us to
express the time and space costs by means of the per-vertex potentials.
\Lem{}~\ref{lemma:bounding-time-potential} then means that we can
bound the time potential of $v$ under $\A$ by the time and space potentials of
$v$ under $\adv{\A}$, charging the extra $w(v)$ on the additive term of the
competitive ratio, whereas \Lem{}~\ref{lemma:bounding-space-potential} means
that we can bound the space potential of $v$ under $\A$ by its time
potential.
\end{IntuitionSpotlight}

\begin{lemma} \label{lemma:expressing-costs-by-potentials}
There exists some
$\zeta = \zeta(R)$
such that the time potentials satisfy
\[
\tCost_{\A}(R, \mathcal{T})
\leq
\zeta + \sum_{v \in T - \Leaves} O (\tau_{v}(0, \EndTime))
\quad \text{and} \quad
\tCost_{\adv{\A}}(R, \mathcal{T})
\geq
\zeta / h + \sum_{v \in T - \Leaves} \Omega (\adv{\tau}_{v}(0, \EndTime) / h)
\]
(recall that $h$ denotes the height of $T$).
The space potentials satisfy
\[
\sCost_{\A}(R, \mathcal{T})
\leq
\sEndCost + \sum_{v \in T - \Leaves} O (\sigma_{v}(0, \EndTime))
\quad \text{and} \quad
\sCost_{\adv{\A}}(R, \mathcal{T})
\geq
\sum_{v \in T - \Leaves} \Omega ((\alpha - 1) \cdot \adv{\sigma}_{v}(0, \EndTime))
\]
(recall that the parameter $\alpha$ is set in \Thm{}~\ref{theorem:main}).
\end{lemma}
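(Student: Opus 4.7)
The plan is to compute each cost as a direct integral over time (or sum over matches) and align it with the corresponding potential, leveraging the stilt structure from the preliminaries and the geometric weight decay of the $\alpha$-HST.

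\textbf{Space costs.} Every match of $\A$ is either across some internal $v$ (cost $w(v)$, which is exactly what is deposited into $\sigma_{v}$), an immediate same-leaf match (cost $0$), or part of the end-of-input cascade (absorbed into $\sEndCost$). Summing over matches in $[0, \EndTime)$ gives $\sCost_{\A} = \sEndCost + \sum_{v} \sigma_{v}(0, \EndTime)$ directly. For $\adv{\A}$, a single match of requests located at $x, x'$ pays $w(\LCA(x, x'))$, while the bookkeeping deposits $w(v)$ into $\adv{\sigma}_{v}$ for every internal $v$ on the $x$-to-$x'$ path; the $\alpha$-HST decay forces the weights on the two descending pieces of the path to form a geometric series of ratio at most $1/\alpha$, bounding the path sum by $(1 + 2/(\alpha - 1)) \cdot w(\LCA(x, x'))$. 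Aggregating over all matches of $\adv{\A}$ yields $\sCost_{\adv{\A}} \geq \Omega(\alpha - 1) \cdot \sum_{v} \adv{\sigma}_{v}(0, \EndTime)$.

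\textbf{Time costs.} Write $\tCost_{\A} = \int_{0}^{\EndTime} |\Active(t)| \, d t$. Under $\A$ every leaf hosts at most one active request (duplicates are matched immediately), so stilt-feet coincide with active-request locations and $|\Active(t)| = |\Heads(t)|$. By the pairing observation preceding the definition of $\Stilts(t)$, every non-root head has a sibling that is also a head (their common parent being effective), giving $|\Heads(t)| = 2 |\Effective(t)| + \Indicator(r \in \Odd(t))$. Crucially, $r \in \Odd(t)$ iff an odd number of requests have arrived by time $t$---a purely input-dependent condition, so this indicator is identical for $\A$ and $\adv{\A}$. Setting
\[
\zeta \;=\; \int_{0}^{\EndTime} \Indicator(r \in \Odd(t)) \, d t,
\]
integration yields $\tCost_{\A} = \zeta + 2 \sum_{v} \tau_{v}(0, \EndTime)$. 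For $\adv{\A}$, every stilt spans at most $h + 1$ vertices, so $|\adv{\Odd}(t)| \leq (h + 1) |\adv{\Active}(t)|$ (each foot is a leaf hosting at least one active request). Since $\int_{0}^{\EndTime} |\adv{\Odd}(t)| \, d t = \zeta + \sum_{v} \adv{\tau}_{v}(0, \EndTime)$, we obtain $\tCost_{\adv{\A}} \geq \bigl( \zeta + \sum_{v} \adv{\tau}_{v}(0, \EndTime) \bigr) / (h + 1)$, which implies the claimed lower bound.

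The main obstacle will be establishing the structural identity $|\Active(t)| = 2 |\Effective(t)| + \Indicator(r \in \Odd(t))$ and its $\adv{\A}$-side counterpart $|\adv{\Odd}(t)| \leq (h + 1) |\adv{\Active}(t)|$. Both rely on the sibling pairing of stilt heads and on the observation that the root's odd-parity status depends only on the arrival pattern, not on the algorithm. Once these identities are secured, the remainder reduces to a per-match tally for the space costs (with the $\alpha$-HST geometric series producing the $(\alpha - 1)$ factor) and a direct integration for the time costs.
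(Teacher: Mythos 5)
Your proposal is correct and follows essentially the same route as the paper: decompose the time cost of $\A$ via $|\Heads(t)| = 2|\Effective(t)| + \Indicator(r \in \Odd(t))$ (the paper phrases this as a per-request charging to stilt heads, but the identity is the same), lower-bound the time cost of $\adv{\A}$ by noting each active request can witness at most $O(h)$ odd vertices, use the algorithm-independence of $r$'s parity to equate the two $\zeta$ integrals, and control the $\adv{\sigma}$ overcounting via the $\alpha$-HSBT geometric decay along the $x$-to-$x'$ path. The only cosmetic difference is that the paper bounds $\sum_v \Indicator(v \in \adv{\Odd}(t))$ directly by a covering argument rather than through the stilt-length bound you invoke, and your $(1 + 2/(\alpha-1))$ constant for the path sum differs from the paper's $\alpha/(\alpha-1)$; both are absorbed by the $O$-notation.
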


\begin{lemma} \label{lemma:bounding-time-potential}
For every $v \in T - \Leaves$, it holds that
$\Expect[\tau_{v}(0, \EndTime)]
\leq
O (\adv{\tau}_{v}(0, \EndTime) + \adv{\sigma}_{v}(0, \EndTime) + w(v))$.
\end{lemma}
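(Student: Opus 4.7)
My plan is to charge the expected time $v$ is effective under $\A$ to $\adv{\A}$'s corresponding activity at the children $u_1, u_2$ of $v$. Set $a_i(t) := \Indicator(u_i \in \adv{\Odd}(t))$ and $b_i(t) := \Indicator(u_i \in \Odd(t))$, and partition $[0, \EndTime)$ into maximal \emph{phases} on which $(a_1, a_2)$ is constant; the phase boundaries are exactly the arrivals in $T(v)$ and the crossing matches of $\adv{\A}$ at $v$ (each of the latter contributing $w(v)$ to $\adv{\sigma}_v$). I call a phase \emph{busy} if some $a_i = 1$ on it and \emph{quiet} otherwise.

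For a busy phase $I$ the inequality $\adv{\tau}_v(I) \geq |I| \geq \tau_v(I)$ holds trivially, so summing over busy phases yields a contribution of at most $\adv{\tau}_v$. For a quiet phase I observe that $(b_1, b_2)$ can change during the phase only via $\A$'s own crossing matches at $v$ (arrivals in $T(v)$ and $\adv{\A}$'s crossings are phase boundaries, while $\A$'s within-$T(u_i)$ matches do not flip $b_i$), and each such match drives $(b_1, b_2)$ downward into $(0, 0)$. Consequently each quiet phase hosts at most one effective episode, whose expected duration is at most $w(v)$ by the memoryless $\ExpDist(1/w(v))$ timer at $v$ (property~\ref{item:independence} of the observation ensures that this timer is genuinely independent of $v$'s effective status). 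Hence the quiet contribution is at most $w(v) \cdot \Expect[K]$, where $K$ is the number of quiet phases that start in state $(b_1, b_2) = (1, 1)$.

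To bound $\Expect[K]$ I invoke the APP machinery of Section~\ref{section:alternating-Poisson-process} with rate $\lambda = 1/w(v)$, choosing a coloring function whose color $1$ marks ``$\A$-effective and quiet'' times and whose color $2$ is a carefully chosen complementary state whose total volume $V_2$ is controlled by the $\adv{\A}$-crossings at $v$. Each quiet $(1,1)$-phase is arranged to correspond to a meaningful alternation, so by Lemma~\ref{lemma:APP-bound-number-meaningful} one obtains $\Expect[K] \leq O(1 + \lambda V_2) = O(1 + \adv{\sigma}_v / w(v))$, giving a quiet contribution of $O(w(v) + \adv{\sigma}_v)$. Adding the busy-phase bound completes the proof of the desired $O(\adv{\tau}_v + \adv{\sigma}_v + w(v))$.

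The hard part will be the precise specification of color $2$. It must (i) be measurable with respect to a $\sigma$-algebra independent of $v$'s coin tosses, so that $\A$'s matching-across-$v$ timer legitimately realizes the rate-$\lambda$ iteration-ending of the APP; and (ii) have its total volume $V_2$ controlled by the number of $\adv{\A}$-crossings at $v$ (contributing an $O(w(v))$ amount each) rather than by the \emph{a priori} unbounded number of arrivals in $T(v)$. I expect color $2$ to be defined through short ``recovery'' windows triggered by each $\adv{\A}$ crossing match at $v$, coupled so that every $(1,1)$-initialized quiet phase is necessarily preceded by a meaningful color-$2$ alternation.
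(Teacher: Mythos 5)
Your decomposition into phases of constant $(\adv{X}_1,\adv{X}_2)$ is genuinely different from the paper's (which cuts at $\A$'s matches on top of $v$, giving $\Ancestors(v)$-random boundaries, and then refines by $\adv{\A}$'s matches at $v$). Your treatment of busy phases and your observation that each quiet phase contains at most one effective episode of expected length $\leq w(v)$ are both correct. However, the plan breaks at Step~3: the claim $\Expect[K]\leq O(1+\adv{\sigma}_v/w(v))$ is simply false, not merely hard to formalize. Consider requests arriving at times $0,1,2,3,\ldots$, alternating between $\Leaves(u_1)$ and $\Leaves(u_2)$ (each at a fresh leaf), with $\adv{\A}$ matching request $4k$ with $4k+2$ inside $T(u_1)$ at time $4k+2$, and $4k+1$ with $4k+3$ inside $T(u_2)$ at time $4k+3$. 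Then $\adv{\sigma}_v=0$ throughout, the intervals $[4k+3,4k+4)$ are quiet phases (they begin with an \emph{arrival} driving $(\adv{X}_1,\adv{X}_2)$ to $(0,0)$, not with an $\adv{\A}$-crossing), and each such phase begins in state $(X_1,X_2)=(1,1)$ with probability bounded away from zero (precisely when $\A$ matched across $v$ in the preceding busy stretch). So $\Expect[K]=\Theta(\EndTime)$ while $\adv{\sigma}_v=0$. The final bound is still satisfied in this example because $\adv{\tau}_v=\Theta(\EndTime)$, but that $\adv{\tau}_v$ term lives entirely in the \emph{busy} phases, which you already spent on the busy-phase bound; your accounting has no channel for charging quiet-phase waste against $\adv{\tau}_v$.

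This is exactly the obstacle the paper's early/late machinery is built to overcome: Corollary~\ref{corollary:low-bound-early-phase-opt}, fed by Lemma~\ref{lemma:partition-time-line-into-phases}(P1), shows that each ``early'' problematic phase forces $\Omega(w(v))$ of \emph{combined} $\adv{\tau}_v+\adv{\sigma}_v$ charge, and Lemma~\ref{lemma:partition-time-line-into-phases}(P3) caps the remaining ``late'' phases at $O(1)$ in expectation. Both of these rely on running the APP over a coloring determined by the \emph{arrival} parities $Y(t)$, which are deterministic; your proposed color $1$ (``$\A$-effective and quiet'') depends on $v$'s own coins, so the alternation times it is meant to count are not an APP realization with respect to that coloring --- you flag this as ``the hard part,'' but it is a fatal circularity rather than a bookkeeping detail. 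To repair the argument you would need to (a) replace the coloring by something $v$-coin-independent (as the paper does with $Y$) and (b) charge the $w(v)$-per-phase term against $\adv{\tau}_v+\adv{\sigma}_v$ rather than $\adv{\sigma}_v$ alone, at which point you would essentially be re-deriving Lemma~\ref{lemma:partition-time-line-into-phases} and Corollary~\ref{corollary:low-bound-early-phase-opt}.
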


\begin{lemma} \label{lemma:bounding-space-potential}
For every $v \in T - \Leaves$, it holds that
$\Expect[\sigma_{v}(0, \EndTime)]
\leq
\Expect[\tau_{v}(0, \EndTime)]$.
\end{lemma}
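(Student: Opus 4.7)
The plan is to exploit the fact that the matching policy of $\A$ at an effective vertex $v$ is calibrated precisely so that the expected rate of space-potential accumulation equals the rate of time-potential accumulation. By the pseudocode, whenever $v \in \Effective(t)$ the algorithm matches across $v$ during the infinitesimal interval $[t, t+dt)$ with probability exactly $dt / w(v)$, depositing $w(v)$ into $\sigma_v$ if it does; hence, conditional on $v \in \Effective(t)$, the expected deposit into $\sigma_v$ during $[t, t+dt)$ is $w(v) \cdot dt/w(v) = dt$, which matches the deposit into $\tau_v$. Since the algorithm never matches across $v$ when $v \notin \Effective(t)$, no deposit into $\sigma_v$ occurs outside the intervals that contribute to $\tau_v$.

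To formalize this, I would introduce $M_v(t)$, the random number of times $\A$ has matched across $v$ during $[0, t)$, so that $\sigma_v(0, t) = w(v) \cdot M_v(t)$. Using linearity of expectation together with Fubini's theorem,
\[
\Expect\!\left[\sigma_v(0, \EndTime)\right]
\;=\;
w(v) \cdot \Expect\!\left[M_v(\EndTime)\right]
\;=\;
w(v) \int_0^{\EndTime} \frac{1}{w(v)} \Expect\!\left[\Indicator(v \in \Effective(t))\right] dt
\;=\;
\Expect\!\left[\tau_v(0, \EndTime)\right],
\]
where the middle equality uses that the instantaneous probability of matching across $v$ in $[t, t+dt)$ is $(dt/w(v)) \cdot \Indicator(v \in \Effective(t))$. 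A fully rigorous version replaces this infinitesimal computation with the standard compensator argument: the process $M_v(t) - \tau_v(0, t)/w(v)$, associated with the continuous description given in Section~\ref{section:algorithm}, is a mean-zero martingale with respect to the natural filtration, and since $\EndTime$ is a bounded stopping time (in fact a deterministic quantity determined by $R$), optional stopping delivers the same equality.

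Notice that the argument actually produces an equality $\Expect[\sigma_v(0, \EndTime)] = \Expect[\tau_v(0, \EndTime)]$, which is strictly stronger than the $\leq$ claimed in the lemma. One subtlety to address is the end-of-input signal: when $\A$ receives it at time $\EndTime$, it matches across every $v \in \Effective(\EndTime)$, but the resulting space cost is accounted for separately in $\sEndCost$ (see Lemma~\ref{lemma:expressing-costs-by-potentials}); since $\sigma_v(0, \EndTime)$ is defined over the half-open interval $[0, \EndTime)$, these terminal matches do not contribute to it, and no correction term is required.

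I do not anticipate a significant obstacle: the lemma is essentially a direct translation of the algorithm's exponential-timer design (the rate at $v$ is $1/w(v)$, chosen so that an expected $w(v)$ space cost amortizes against $w(v)$ units of effective time) into expectation. The only mildly delicate points are justifying the Fubini interchange (routine, since $M_v(\EndTime)$ is bounded above by $|R|/2$ almost surely) and verifying that the event $\{v \in \Effective(t)\}$ is measurable with respect to the filtration generated by the arrivals and the coin tosses of all vertices up to time $t$, which is immediate from its combinatorial definition.
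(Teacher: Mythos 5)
Your proof is correct and, as you note, actually gives equality; it also takes a genuinely different route from the paper. The paper derives this lemma through the alternating Poisson process (APP) machinery: it partitions $[0, \EndTime)$ into phases determined solely by the coin tosses of $\Ancestors(v)$, proves in Lemma~\ref{lemma:phase-app} that within each phase the times at which $\A$ matches across $v$ are exactly the meaningful alternation times of an APP whose total digestion is $\tau_{v}$ restricted to the phase, invokes Lemma~\ref{lemma:APP-digestion-equals-number-meaningful} to get $\Expect_{v}[\sigma_{v}(\phi)] = \Expect_{v}[\tau_{v}(\phi)]$ per phase, and then sums by linearity after averaging over the ancestors' coins. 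Your argument bypasses the phase decomposition and the APP formalism entirely: the counting process $M_v$ has intensity $(1/w(v)) \Indicator(v \in \Effective(t))$, an adapted process, so $M_v(t) - \tau_{v}(0,t)/w(v)$ is a mean-zero martingale and the equality follows by evaluating at the deterministic horizon $\EndTime$. For this lemma alone your route is shorter and more self-contained; the paper's detour pays off because the same APP machinery is the backbone of the much harder Lemma~\ref{lemma:bounding-time-potential} (via Lemmas~\ref{lemma:bound-time-cost-0-subphase} and \ref{lemma:partition-time-line-into-phases}), so the authors deploy it uniformly. Your closing remark that $\sigma_{v}(0, \EndTime)$ is taken over the half-open interval, so the end-of-input clearing cost is attributed to $\sEndCost$ rather than to $\sigma_{v}$, correctly clarifies a point the paper leaves implicit.
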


\begin{proof}[Proof of \Lem{}~\ref{lemma:expressing-costs-by-potentials}]
We first note that
\begin{MathMaybe}
\tCost_{\A}(R, \mathcal{T})
=
\sum_{v \in T} \int_{0}^{\EndTime} \Indicator(v \in \Heads(t)) \, d t \, .
\end{MathMaybe}
Indeed, as each leaf contains at most one active request, an active request
$\rho \in \Active(t)$ is accounted for in exactly one term of the sum in the
RHS of the equation, that is, the term corresponding to the head
of the stilt whose foot is $\Location(\rho)$.
Since an internal vertex is effective at time $t$ if and only if its two
children are in $\Heads(t)$, the last equation can be rewritten as
\[
\tCost_{\A}(R, \mathcal{T})
=
\int_{0}^{\EndTime} \Indicator(r \in \Odd(t)) \, d t
+
2 \cdot \sum_{v \in T - \Leaves} \tau_{v}(0, \EndTime) \, .
\]
On the other hand, the inequality
\begin{MathMaybe}
\tCost_{\adv{\A}}(R, \mathcal{T})
\geq
\frac{1}{h} \cdot \sum_{v \in T} \int_{0}^{\EndTime} \Indicator(v \in
\adv{\Odd}(t)) \, d t
\end{MathMaybe}
holds since each active request under $\adv{\A}$ is accounted for in at
most $h$ terms of the sum in the RHS of the inequality, therefore
\[
\tCost_{\adv{\A}}(R, \mathcal{T})
\geq
\frac{1}{h} \left(
\int_{0}^{\EndTime} \Indicator(r \in \adv{\Odd}(t)) \, d t
+
\sum_{v \in T - \Leaves} \adv{\tau}_{v}(0, \EndTime)
\right) \, .
\]
The first part of the assertion is established by observing that
$r \in \Odd(t)$
if and only if
$r \in \adv{\Odd}(t)$,
hence we can fix
\begin{MathMaybe}
\zeta
=
\int_{0}^{\EndTime} \Indicator(r \in \Odd(t)) \, d t
=
\int_{0}^{\EndTime} \Indicator(r \in \adv{\Odd}(t)) \, d t \, .
\end{MathMaybe}

The contribution to $\sCost_{\A}(R, \mathcal{T})$ of matching requests $\rho$
and $\rho'$ by $\A$ is $w(\LCA(x, x'))$;
this is also its contribution to the space potentials $\sigma$, hence
\[
\sCost_{\A}(R, \mathcal{T})
=
\sEndCost
+
\sum_{v \in T - \Leaves} \sigma_{v}(0, \EndTime) \, .
\]
The contribution to
$\sCost_{\adv{\A}}(R, \mathcal{T})$
of matching requests $\rho$ and $\rho'$ by $\adv{\A}$ is $w(\LCA(x, x'))$,
whereas since
$\mathcal{T} = (T, w)$
is an $\alpha$-HSBT (recall that
$1 < \alpha \leq 2$),
its contribution to the space potentials $\adv{\sigma}$ is bounded
from above by
$\sum_{i = 0}^{h} w(\LCA(x, x')) \cdot (1 / \alpha)^{i}
<
w(\LCA(x, x')) \cdot \alpha / (\alpha - 1)$,
hence,
\[
\sCost_{\adv{\A}}(R, \mathcal{T})
\geq
\Omega (\alpha - 1) \cdot \sum_{v \in T - \Leaves}
\adv{\sigma}_{v}(0, \EndTime)
\]
which completes the proof.
\end{proof}

\paragraph{Convenient notation.}
The remainder of this section is dedicated to the proofs of \Lem{}
\ref{lemma:bounding-time-potential} and \ref{lemma:bounding-space-potential}.
To this end, we fix some internal vertex $v \in T - \Leaves$ with children
$u_{1}$ and $u_{2}$ which facilitates switching to a shorter and simpler
notation:
Denote
$\tau = \tau_{v}$,
$\adv{\tau} = \adv{\tau}_{v}$,
$\sigma = \sigma_{v}$, and
$\adv{\sigma} = \adv{\sigma}_{v}$.
Given some time $t \in [0, \EndTime)$, we write for short
\[
X_{i}(t) = \Indicator(u_{i} \in \Odd(t))
\qquad
\adv{X}_{i}(t) = \Indicator(u_{i} \in \adv{\Odd}(t))
\]
\NotationLabel{analysis:variable-X-i}
\NotationLabel{analysis:variable-adv-X-i}
for $i \in \{ 1, 2 \}$ and
\[
X(t) = X_{1}(t) \xor X_{2}(t)
\qquad
\adv{X}(t) = \adv{X}_{1}(t) \xor \adv{X}_{2}(t) \, .
\]
\NotationLabel{analysis:variable-X}
\NotationLabel{analysis:variable-adv-X}
\LongVersion 
Notice that
\LongVersionEnd 
\ShortVersion 
As
\ShortVersionEnd 
$\Indicator(v \in \Effective(t))
=
X_{1}(t) \cdot X_{2}(t)$
and
$\Indicator(u_{1} \in \adv{\Odd}(t)) + \Indicator(u_{2} \in \adv{\Odd}(t))
=
\adv{X}(t) + 2 \cdot \adv{X}_{1}(t) \cdot \adv{X}_{2}(t)$,
\LongVersion 
thus
\LongVersionEnd 
\ShortVersion 
we get
\ShortVersionEnd 
\[
\tau(t_{0}, t_{1})
=
\int_{t_{0}}^{t_{1}} X_{1}(t) \cdot X_{2}(t) \, d t 
\quad \text{and} \quad
\adv{\tau}(t_{0}, t_{1})
=
\int_{t_{0}}^{t_{1}} \adv{X}(t) + 2 \cdot \adv{X}_{1}(t) \cdot \adv{X}_{2}(t) \,
d t \, .
\]

It will be convenient to also define
\begin{MathMaybe}
Y_{i}(t)
=
|\{ \rho \in R \mid \Location(\rho) \in \Leaves(u_{i}) \land
\aTime(\rho) \leq t \}| \pmod{2}
\end{MathMaybe}
\NotationLabel{analysis:variable-Y-i}
for $i \in \{ 1, 2 \}$ and
\begin{MathMaybe}
Y(t) = Y_{1}(t) \xor Y_{2}(t) \, ,
\end{MathMaybe}
\NotationLabel{analysis:variable-Y}
observing that the parity of the number of times $\A$
matched on top of $u_{i}$ (resp., $v$) up to time $t$ equals
$X_{i}(t) \xor Y_{i}(t)$
(resp.,
$X(t) \xor Y(t)$).

\paragraph{Phases and subphases.}
We partition the time line
$[0, \EndTime)$
into \emph{phases} (defined with respect to $v$),
where each phase is a time interval that starts when the previous phase
ends (or at time $0$ if this is the first phase) and ends when $\A$ matches on
top of $v$ (or at time $\EndTime$ if this is the last phase).
A crucial observation is that this partition is fully determined by the coin
tosses of $\Ancestors(v)$ (namely, the ancestors of $v$) independently of the
coin tosses of $v$.

We further partition every phase
$\phi = [t_{0}, t_{1})$
of $v$ into \emph{subphases}, where each subphase is
a time interval that starts when the previous subphase ends (or at time
$t_{0}$ if this is the first subphase of $\phi$) and ends when
$\adv{\A}$ matches across \emph{or} on top of $v$ (or at time $t_{1}$ if this
is the last subphase of $\phi$).
Notice that matching operations across $v$ performed by $\A$ (fully
determined by the coin tosses of $v$) can occur at the midst of a subphase.

\begin{lemma} \label{lemma:phase-app}
For every phase
$\phi = [t_{0}, t_{1})$
of $v$, it holds that
$\Expect_{v}[\sigma(\phi)]
=
\Expect_{v}[\tau(\phi)]$.
\end{lemma}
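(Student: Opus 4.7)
The plan is to recognize the stilt-walker's dynamics at $v$ inside the phase $\phi$ as exactly an alternating Poisson process, and then to invoke \Lem{}~\ref{lemma:APP-digestion-equals-number-meaningful}. Fix $\phi = [t_0, t_1)$. Phases are determined entirely by the input and the coin tosses of $\Ancestors(v)$, so under the conditioning implicit in $\Expect_{v}$ (expectation over $v$'s coin tosses only) the endpoints $t_0, t_1$ and the values $X_1(t_0), X_2(t_0)$ may be treated as constants. The first preparatory step I would carry out is the parity identity
\[
X_i(t) \;=\; X_i(t_0) \xor Y'_i(t) \xor k_{\phi}(t) \pmod{2} \qquad (i \in \{1,2\},\ t \in \phi),
\]
where $Y'_i(t)$ denotes the parity of arrivals in $\Leaves(u_i)$ during $(t_0, t]$ and $k_{\phi}(t)$ denotes the number of matches $\A$ has made across $v$ during $(t_0, t]$. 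This uses the fact that no match on top of $v$ occurs inside $\phi$, hence every match on top of $u_i$ inside $\phi$ must pair a request in $\Leaves(u_i)$ with one in $\Leaves(u_{3-i})$, i.e., it must be a match across $v$; matches across descendants of $u_i$ do not affect the parity of $|\Active_{u_i}|$.

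I would then instantiate the APP framework with start time $t_0$, length $\gamma = t_1 - t_0$, rate $\lambda = 1/w(v)$, and coloring
\[
c(t) \;=\; \begin{cases} 1 & \text{if } (Y'_1(t),\, Y'_2(t)) = (X_1(t_0) \xor 1,\; X_2(t_0) \xor 1), \\ 2 & \text{if } (Y'_1(t),\, Y'_2(t)) = (X_1(t_0),\; X_2(t_0)), \\ \bot & \text{otherwise}. \end{cases}
\]
Using the parity identity, one checks that $v \in \Effective(t)$ holds if and only if $c(t)$ equals the color targeted by the current APP iteration: at phase start $k_{\phi} = 0$ and the target is color $1$, and each match across $v$ flips the parity of $k_{\phi}$, which is exactly the APP toggle that switches the target to the other color. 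The algorithm's behavior of matching across $v$ at rate $1/w(v)$ over intervals where $v$ is effective thus realizes the $Z_j \sim \ExpDist(\lambda)$ wait times of the APP, by memorylessness of the exponential distribution.

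Under this identification, the number $N$ of meaningful alternation times equals the number of matches $\A$ performs across $v$ during $\phi$, so $\sigma(\phi) = w(v) \cdot N$, while the total digestion equals
\[
G \;=\; \sum_{j \geq 1} G_j \;=\; \int_{t_0}^{t_1} X_1(t) X_2(t)\,dt \;=\; \tau(\phi).
\]
\Lem{}~\ref{lemma:APP-digestion-equals-number-meaningful} then yields $\Expect_{v}[\tau(\phi)] = \Expect_{v}[G] = \Expect_{v}[N]/\lambda = w(v) \cdot \Expect_{v}[N] = \Expect_{v}[\sigma(\phi)]$.

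The main obstacle is the careful alignment of the APP iteration structure with the algorithm's reset after each match across $v$: in particular, verifying that matches across $v$ cannot occur while $c(t) = \bot$ or while $c(t)$ has the ``wrong'' value for the current iteration, which reduces to the parity identity together with property \ref{item:match-across} of the Observation (both children become even immediately after a match across $v$, forcing the target color to flip before the next match can fire).
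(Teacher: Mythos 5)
Your proposal is correct and follows essentially the same route as the paper's proof: identifying the dynamics of $(X_1, X_2)$ inside the phase as an alternating Poisson process with rate $1/w(v)$ and coloring determined by the parity of arrivals, then applying \Lem{}~\ref{lemma:APP-digestion-equals-number-meaningful}. The paper arrives at the same coloring by using $Y_i(t)$ relative to $(y_1, y_2) = (Y_1(t_0), Y_2(t_0))$ and exploiting that $(X_1(t_0), X_2(t_0)) = (0,0)$ at the start of a phase, whereas you make the parity identity $X_i(t) = X_i(t_0) \xor Y'_i(t) \xor k_\phi(t)$ explicit and carry $X_i(t_0)$ into the coloring; these are the same construction in slightly different bookkeeping.
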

\LongVersion 
\begin{proof}
We investigate the dynamics of
$(X_{1}(t), X_{2}(t))_{t \in \phi}$
and
$(Y_{1}(t), Y_{2}(t))_{t \in \phi}$
that take values in
$\{ 0, 1 \}^{2}$
(an illustration is provided in \Fig{}~\ref{figure:phase-app}).
Observe that a new request arriving in $\Leaves(u_{i})$, $i \in \{ 1, 2 \}$,
flips $X_{i}$ and $Y_{i}$ without affecting $X_{3 - i}$ and $Y_{3 - i}$.
While $(Y_{1}, Y_{2})$ is affected only by new request arrivals, the dynamic
of $(X_{1}, X_{2})$ is tied to the actions of $\A$ too.
Specifically, $\A$ can match across $v$ (recall that $\A$ does not match on
top of $v$ in the midst of phase $\phi$) only when
$(X_{1}, X_{2}) = (1, 1)$
and if
$(X_{1}, X_{2}) = (1, 1)$
throughout the infinitesimally small time interval $[t - d t, t)$,
then $\A$ matches across $v$ at time $t$ with probability
$d t / w(v)$
(depending solely on the coin tosses of $v$),
in which case
$(X_{1}, X_{2})$
flips to
$(X_{1}(t), X_{2}(t)) = (0, 0)$.
Moreover, we know that
$(X_{1}(t_{0}), X_{2}(t_{0})) = (0, 0)$.

Let
$(y_{1}, y_{2}) = (Y_{1}(t_{0}), Y_{2}(t_{0}))$.
We color the times in $\phi$ using the coloring function
$c : \phi \rightarrow \{ 1, 2, \bot \}$
by setting
\[
c(t)
=
\left\{
\begin{array}{ll}
1, & (Y_{1}(t), Y_{2}(t)) = (\neg y_{1}, \neg y_{2}) \\
2, & (Y_{1}(t), Y_{2}(t)) = (y_{1}, y_{2}) \\
\bot, & \text{o.w.}
\end{array}
\right.
\]
The key observation now is that the times at which $\A$ matches across $v$
can be viewed as the meaningful alternation times of an APP $\Pi_{\phi}$
defined over the time interval $\phi$ with coloring function $c(\cdot)$ and
rate $1 / w(v)$.
(Notice that the role of $(y_{1}, y_{2})$ in the validity of this observation
is simply to adjust the dynamic of $(X_{1}, X_{2})$, starting with
$(X_{1}(t_{0}), X_{2}(t_{0})) = (0, 0)$,
to the APP framework in which the first digested color is defined to be $1$.)
Taking $N$ to be the random variable counting the number of meaningful
alternation times in $\Pi_{\phi}$ and $G$ to be its total digestion, we
conclude that
$\sigma(\phi)
=
w(v) \cdot N$
and
$\tau(\phi)
=
G$.
The assertion follows by
\Lem{}~\ref{lemma:APP-digestion-equals-number-meaningful}.
\end{proof}
\LongVersionEnd 

\def\FigurePhaseApp{
\LongVersion
\begin{figure}
\LongVersionEnd
\ShortVersion
\begin{figure}[h]
\ShortVersionEnd
\begin{center}
\includegraphics[width=\textwidth]{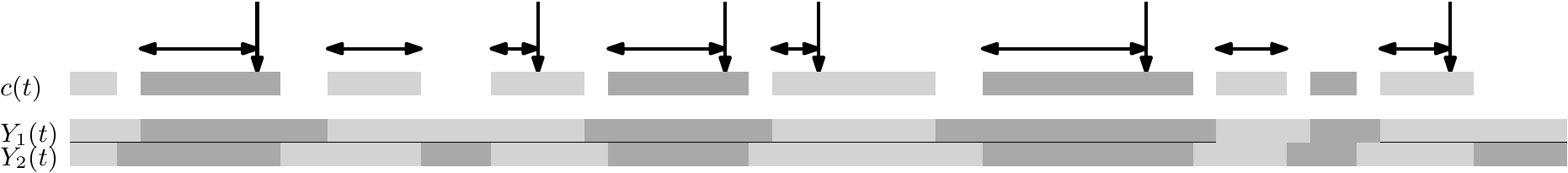}
\end{center}
\caption{ \label{figure:phase-app}
Phase $\phi$ with time progressing from left to right, assuming that $(y_{1},
y_{2}) = (0, 0)$.
Bottom rows:
the dark gray and light gray intervals represent the times $t$ at which
$Y_{i}(t) = 1$
and
$Y_{i}(t) = 0$,
respectively.
Top row:
the dark gray, light gray, and white intervals represent the times $t$ at
which
$c(t) = 1$,
$c(t) = 2$,
and
$c(t) = \bot$,
respectively.
The vertical arrows represent the times at which $\A$ matches across $v$ and
the horizontal two-sided arrows depict the time intervals that contribute to
$\tau(\phi)$, i.e., when
$(X_{1}, X_{2}) = (1, 1)$.
Notice that towards $\phi$'s end, we must have
$X = X_{1} \xor X_{2} = 1$
unless $\phi$ is the last phase.
}
\end{figure}
}
\LongVersion 
\FigurePhaseApp{}
\LongVersionEnd 

Fixing the coin tosses in $\Ancestors(v)$ and thus, fixing the partition of
$[0, \EndTime)$ into phases, we can apply \Lem{}~\ref{lemma:phase-app}
to the each individual phase, thus establishing
\Lem{}~\ref{lemma:bounding-space-potential} by the linearity of expectation.
The remainder of this section is dedicated to proving
\Lem{}~\ref{lemma:bounding-time-potential}.
The first step towards achieving this goal is to bound the time potential of
$\A$ per subphase based on the following subphase classification.

\paragraph{$0$- and $1$-subphases.}
Fix some subphase $\varphi$ of $v$.
Notice that matching across $v$ (by $\A$) does not affect $\adv{X}_{i}$, $i
\in \{ 1, 2 \}$, nor does it change
$X_{1} \xor X_{2}$.
Thus, there exists some
$b = b(\varphi) \in \{ 0, 1 \}$
such that
$X_{1}(t) \xor X_{2}(t) \xor \adv{X}_{1}(t) \xor \adv{X}_{2}(t) = b$
for all
$t \in \varphi$;
in what follows, we distinguish between two types of subphases:
\emph{$0$-subphases}, for which $b = 0$, and \emph{$1$-subphases}, for which
$b = 1$.

\begin{observation} \label{observation:bound-time-cost-1-subphase}
If $\varphi$ is a $1$-subphase, then
$\tau(\varphi)
\leq
\adv{\tau}(\varphi)$.
\end{observation}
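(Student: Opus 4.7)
The plan is to prove the inequality pointwise on the integrands and then integrate over $\varphi$. More precisely, I would argue that for every $t \in \varphi$,
\[
X_{1}(t) \cdot X_{2}(t) \;\leq\; \adv{X}(t) + 2 \cdot \adv{X}_{1}(t) \cdot \adv{X}_{2}(t),
\]
from which the observation follows immediately by integrating both sides over $\varphi$ and invoking the definitions of $\tau(\varphi)$ and $\adv{\tau}(\varphi)$ recorded just before the observation.

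To establish the pointwise inequality, I would split on the value of the left-hand side. If $X_{1}(t) \cdot X_{2}(t) = 0$, the inequality is trivial since the right-hand side is a nonnegative integer. The only interesting case is $X_{1}(t) \cdot X_{2}(t) = 1$, which forces $X_{1}(t) = X_{2}(t) = 1$ and hence $X_{1}(t) \xor X_{2}(t) = 0$. Now I would invoke the defining property of a $1$-subphase, namely that $X_{1}(t) \xor X_{2}(t) \xor \adv{X}_{1}(t) \xor \adv{X}_{2}(t) = 1$ throughout $\varphi$; together with $X_{1}(t) \xor X_{2}(t) = 0$ this yields $\adv{X}(t) = \adv{X}_{1}(t) \xor \adv{X}_{2}(t) = 1$, so the right-hand side is at least $1$.

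There is essentially no obstacle here: the argument is a simple case analysis driven entirely by the parity identity that defines a $1$-subphase, and no probabilistic reasoning or APP machinery is needed. The statement is really saying that whenever $\A$ is ``paying'' through an effective $v$ on a $1$-subphase, the two children of $v$ must disagree in parity under $\adv{\A}$, so $\adv{\A}$ itself is charged at the same rate through $\adv{\tau}$. The $2 \cdot \adv{X}_{1}(t) \cdot \adv{X}_{2}(t)$ slack on the right-hand side is not even needed for $1$-subphases; it will be the crucial term later when bounding $0$-subphases.
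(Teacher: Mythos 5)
Your proof is correct and takes essentially the same route as the paper: both establish the pointwise implication that $(X_{1}(t), X_{2}(t)) = (1,1)$ forces $\adv{X}_{1}(t) \xor \adv{X}_{2}(t) = 1$ via the defining parity identity of a $1$-subphase, and then integrate over $\varphi$. Your write-up is just a slightly more explicit rendering of the paper's two-line argument.
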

\begin{proof}
Recall that
$\tau(\varphi)
=
\int_{\varphi} X_{1}(t) \cdot X_{2}(t) d t$
and
$\adv{\tau}(\varphi)
\geq
\int_{\varphi} \adv{X}_{1}(t) \xor \adv{X}_{2}(t) d t$.
The assertion follows by the definition of a $1$-subphase ensuring that for
every
$t \in \varphi$,
if
$(X_{1}(t), X_{2}(t)) = (1, 1)$,
then
$(\adv{X}_{1}(t), \adv{X}_{2}(t)) \in \{ (0, 1), (1, 0) \}$.
\end{proof}

\begin{lemma} \label{lemma:bound-time-cost-0-subphase}
If $\varphi$ is a $0$-subphase, then
$\Expect_{v}[\tau(\varphi)]
\leq
\adv{\tau}(\varphi) + w(v)$.
\end{lemma}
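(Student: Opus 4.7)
I plan to adapt the APP framework from the proof of Lemma~\ref{lemma:phase-app} to the subphase level. During the $0$-subphase $\varphi=[t_0,t_1)$, the only events that alter $X_1,X_2,\adv{X}_1,\adv{X}_2$ are (i) request arrivals in $\Leaves(u_1)\cup\Leaves(u_2)$, which flip one $X_i$ and the corresponding $\adv{X}_i$; and (ii) $\A$-matches across $v$, which flip both $X_1$ and $X_2$. In particular, $\adv{\A}$ cannot match across or on top of $v$ within $\varphi$, so $\adv{\A}$-matches involving $\Leaves(v)$ stay inside $\Leaves(u_1)$ or $\Leaves(u_2)$ and do not alter any $\adv{X}_i$. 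Writing $a_i=X_i(t_0)\xor\adv{X}_i(t_0)$ and $m(t)$ for the parity of $\A$-matches across $v$ in $[t_0,t)$, the dynamics give $X_i(t)=\adv{X}_i(t)\xor a_i\xor m(t)$; the $0$-subphase condition $b(\varphi)=0$ forces $a_1=a_2=:a$, so $(X_1,X_2)(t)=(1,1)$ exactly when $(\adv{X}_1,\adv{X}_2)(t)=(\neg(a\xor m(t)),\neg(a\xor m(t)))$.

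Define a coloring $c:\varphi\to\{1,2,\bot\}$ by setting $c(t)=1$ if $(\adv{X}_1,\adv{X}_2)=(\neg a,\neg a)$, $c(t)=2$ if $(\adv{X}_1,\adv{X}_2)=(a,a)$, and $c(t)=\bot$ otherwise. Since $\adv{X}_i$ changes only on request arrivals, $c$ has finitely many discontinuities. Now set up an APP $\Pi_\varphi$ on $\varphi$ with rate $\lambda=1/w(v)$ and coloring $c$. The key correspondence is that the meaningful alternation times of $\Pi_\varphi$ coincide exactly with the times at which $\A$ matches across $v$ in $\varphi$: during iteration $j$ we have $m\equiv(j-1)\bmod 2$, so the color digested in iteration $j$ (namely color~$1$ if $j$ is odd, color~$2$ if $j$ is even) is precisely the color where $(X_1,X_2)=(1,1)$, and $\A$'s matching rate in that state is $\lambda$. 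Therefore the total digestion satisfies $G=\tau(\varphi)$.

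Applying Lemma~\ref{lemma:APP-digestion-equals-number-meaningful} and Lemma~\ref{lemma:APP-bound-number-meaningful} yields
\[
\Expect_v[\tau(\varphi)]=\Expect_v[G]=\Expect_v[N]/\lambda\leq \bigl(1+2\lambda\min\{V_1(\varphi),V_2(\varphi)\}\bigr)/\lambda=w(v)+2\min\{V_1,V_2\}.
\]
To close, observe that one of $V_1,V_2$ equals $V_{(1,1)}:=\int_\varphi \adv{X}_1(t)\adv{X}_2(t)\,dt$, so $\min\{V_1,V_2\}\leq V_{(1,1)}$, while the definition of $\adv{\tau}$ gives $\adv{\tau}(\varphi)\geq 2V_{(1,1)}$. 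Chaining the bounds yields $\Expect_v[\tau(\varphi)]\leq w(v)+\adv{\tau}(\varphi)$, as required. The main obstacle is validating the APP correspondence: one must carefully track how the parity of $m$ advances together with the iteration index and check that the Poisson clock of $\Pi_\varphi$ fires precisely when and at the rate $\A$ matches across $v$. Everything after that is a direct application of the APP machinery combined with the crude bound $\min\{V_1,V_2\}\leq V_{(1,1)}$.
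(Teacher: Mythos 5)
Your proof is correct and takes essentially the same approach as the paper's: the paper's "agreement/disagreement" dichotomy is precisely your $a\xor m(t)$ bookkeeping, and your coloring $c$ coincides with the paper's $c_{\text{agree}}$ (when $a=0$) or $c_{\text{disagree}}$ (when $a=1$). The remaining steps — identifying $\A$'s match times with the APP's meaningful alternation times, applying Lemmas~\ref{lemma:APP-digestion-equals-number-meaningful} and~\ref{lemma:APP-bound-number-meaningful}, and closing with $\adv{\tau}(\varphi)\geq 2\min\{V_1,V_2\}$ — are exactly as in the paper.
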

\ShortVersion \sloppy \ShortVersionEnd
\begin{proof}
We investigate the dynamics of
$(X_{1}(t), X_{2}(t))_{t \in \varphi}$
and
$(\adv{X}_{1}(t), \adv{X}_{2}(t))_{t \in \varphi}$
that take values in
$\{ 0, 1 \}^{2}$
(an illustration is provided in \Fig{}~\ref{figure:subphase-app}).
By the definition of a $0$-subphase, at any time
$t \in \varphi$,
either
$(X_{1}(t), X_{2}(t)) = (\adv{X}_{1}(t), \adv{X}_{2}(t))$
or
$(X_{1}(t), X_{2}(t)) = (\neg\adv{X}_{1}(t), \neg\adv{X}_{2}(t))$;
we refer to the former (resp., latter) as an \emph{agreement} (resp.,
\emph{disagreement}) state of $\A$ and $\adv{\A}$.

Observe that a new request arriving in $\Leaves(u_{i})$, $i \in \{ 1, 2 \}$,
flips $X_{i}$ and $\adv{X}_{i}$ without affecting $X_{3 - i}$ and $\adv{X}_{3
- i}$.
While $(\adv{X}_{1}, \adv{X}_{2})$ is affected only by new request arrivals
(recall that $\adv{\A}$ does not match across or on top of $v$ in the midst of
subphase $\varphi$), the dynamic of $(X_{1}, X_{2})$ is tied to the actions of
$\A$ too.
Specifically, $\A$ can match across $v$ (recall that $\A$ does not match on
top of $v$ in the midst of subphase $\varphi$) only when
$(X_{1}, X_{2}) = (1, 1)$
and if 
$(X_{1}, X_{2}) = (1, 1)$
throughout the infinitesimally small time interval
$[t - d t, t)$,
then $\A$ matches across $v$ at time $t$ with probability
$d t / w(v)$
(depending solely on the coin tosses of $v$), in which case $(X_{1}, X_{2})$
flips to
$(X_{1}(t), X_{2}(t)) = (0, 0)$,
thus toggling the agreement/disagreement state.

Define the functions
$c_{\text{agree}} : \varphi \rightarrow \{ 1, 2, \bot \}$
and
$c_{\text{disagree}} : \varphi \rightarrow \{ 1, 2, \bot \}$
as follows:
\[
c_{\text{agree}}(t)
=
\left\{
\begin{array}{ll}
1, & (\adv{X}_{1}(t), \adv{X}_{2}(t)) = (1, 1) \\
2, & (\adv{X}_{1}(t), \adv{X}_{2}(t)) = (0, 0) \\
\bot, & \text{o.w.}
\end{array}
\right.
\qquad
c_{\text{disagree}}(t)
=
\left\{
\begin{array}{ll}
1, & (\adv{X}_{1}(t), \adv{X}_{2}(t)) = (0, 0) \\
2, & (\adv{X}_{1}(t), \adv{X}_{2}(t)) = (1, 1) \\
\bot, & \text{o.w.}
\end{array}
\right. \, .
\]
We color the times in $\varphi$ using the coloring function
$c : \varphi \rightarrow \{ 1, 2, \bot \}$
by setting
$c = c_{\text{agree}}$
if the subphase starts in an agreement
\LongVersion 
state; and
\LongVersionEnd 
\ShortVersion 
state; and
\ShortVersionEnd 
$c = c_{\text{disagree}}$
\LongVersion 
if the subphase starts in a disagreement state.
\LongVersionEnd 
\ShortVersion 
otherwise.
\ShortVersionEnd 
The key observation now is that the times at which $\A$ matches across $v$
can be viewed as the meaningful alternation times of an APP $\Pi_{\varphi}$
defined over the time interval $\varphi$ with coloring function $c(\cdot)$ and
rate $1 / w(v)$.
(Notice that the role of the $c_{\text{agree}}$ vs.\ $c_{\text{disagree}}$
distinction in the validity of this observation is simply to adjust the
dynamic of $(X_{1}, X_{2})$, starting in an agreement/disagreement state, to
the APP framework in which the first digested color is defined to be $1$.)

Taking $G$ to be the total digestion of $\Pi_{\varphi}$, we notice that
$\tau(\varphi) = G$.
Moreover, the construction of the coloring function $c(\cdot)$ ensures that
$\adv{\tau}(\varphi)
\geq
2 \int_{\varphi} \adv{X}_{1}(t) \cdot \adv{X}_{2}(t) d t
\geq
2 \min \{ V_{1}, V_{2} \}$,
where $V_{1}$ and $V_{2}$ are the total $1$- and $2$-volumes of $\Pi_{\varphi}$,
respectively.
The assertion follows by \Lem{}\
\ref{lemma:APP-digestion-equals-number-meaningful} and
\ref{lemma:APP-bound-number-meaningful}.
\end{proof}
\ShortVersion \par\fussy \ShortVersionEnd

\def\FigureSubphaseApp{
\LongVersion
\begin{figure}
\LongVersionEnd
\ShortVersion
\begin{figure}[h]
\ShortVersionEnd
\begin{center}
\includegraphics[width=\textwidth]{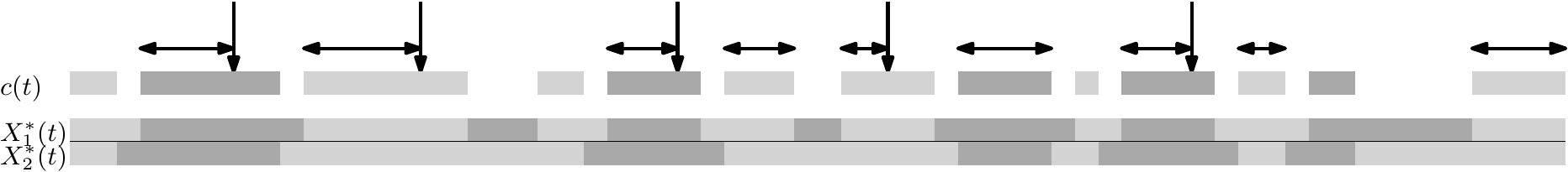}
\end{center}
\caption{ \label{figure:subphase-app}
Subphase $\varphi$ with time progressing from left to right, assuming that the
subphase starts in an agreement state.
Bottom rows:
the dark gray and light gray intervals represent the times $t$ at which
$\adv{X}_{i}(t) = 1$
and
$\adv{X}_{i}(t) = 0$,
respectively.
Top row:
the dark gray, light gray, and white intervals represent the times $t$ at
which
$c(t) = 1$,
$c(t) = 2$,
and
$c(t) = \bot$,
respectively.
The vertical arrows represent the times at which $\A$ matches across $v$ and
the horizontal two-sided arrows depict the time intervals that contribute to
$\tau(\phi)$, i.e., when
$(X_{1}, X_{2}) = (1, 1)$.
Notice that by the definition of $\adv{\tau}$, times $t$ at which
$\adv{X}_{1}(t) \xor \adv{X}_{2}(t) = 1$
(marked as white intervals in the top row)
also contribute to $\adv{\tau}(\phi)$, but this contribution is ignored by our
analysis.
}
\end{figure}
}
\LongVersion 
\FigureSubphaseApp{}
\LongVersionEnd 

\paragraph{$0$- and $1$-phases.}
Phase $\phi$ of $v$ is said to be a \emph{$0$-phase} (resp., a
\emph{$1$-phase}) if it starts with a $0$-subphase (resp., a $1$-subphase).
Let
$P^{0}$
\NotationLabel{analysis:P-0}
(resp.,
$P^{1}$)
\NotationLabel{analysis:P-1}
be the set of $0$-phases (resp., $1$-phases) of
$v$.
Using \Obs{}~\ref{observation:bound-time-cost-1-subphase} and
\Lem{}~\ref{lemma:bound-time-cost-0-subphase}, we establish
\Lem{}~\ref{lemma:bounding-time-potential} (our goal in the remainder of this
section) by proving the following inequalities:
\begin{align}
&
\Expect_{v, \Ancestors(v)} \left[ \tau(P^{0}) \right]
\leq
O \left( \adv{\tau}(0, \EndTime) + \adv{\sigma}(0, \EndTime) + w(v) \right)
\label{equation:target-bound-0-phases} \\
&
\Expect_{v, \Ancestors(v)} \left[ \tau(P^{1}) \right]
\leq
O \left( \adv{\tau}(0, \EndTime) + \adv{\sigma}(0, \EndTime) \right) \, .
\label{equation:target-bound-1-phases}
\end{align}
\Lem{}~\ref{lemma:0-and-1-phases} (a combination of
\Obs{}~\ref{observation:bound-time-cost-1-subphase} and
\Lem{}~\ref{lemma:bound-time-cost-0-subphase} essentially) plays an important
role in the desired proofs.

\begin{lemma} \label{lemma:0-and-1-phases}
If $\phi$ is a $0$-phase, then
\begin{MathMaybe}
\Expect_{v} \left[ \tau(\phi) \right]
\leq
\adv{\tau}(\phi) + 2 \adv{\sigma}(\phi) + w(v) \, ;
\end{MathMaybe}
if $\phi$ is a $1$-phase, then
\begin{MathMaybe}
\Expect_{v} \left[ \tau(\phi) \right]
\leq
\adv{\tau}(\phi) + 2 \adv{\sigma}(\phi) \, .
\end{MathMaybe}
\end{lemma}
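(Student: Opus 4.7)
The plan is to decompose $\phi$ into its constituent subphases and invoke the per-subphase bounds of \Obs{}~\ref{observation:bound-time-cost-1-subphase} and \Lem{}~\ref{lemma:bound-time-cost-0-subphase} on each of them separately. Writing $\phi$ as the chronological disjoint union of its subphases $\varphi_{1}, \ldots, \varphi_{k}$, I would use additivity of the integrals defining $\tau$ and $\adv{\tau}$ to get $\tau(\phi) = \sum_{j} \tau(\varphi_{j})$ and $\adv{\tau}(\phi) = \sum_{j} \adv{\tau}(\varphi_{j})$. Summing \Obs{}~\ref{observation:bound-time-cost-1-subphase} over the $1$-subphases (a deterministic bound) and \Lem{}~\ref{lemma:bound-time-cost-0-subphase} over the $0$-subphases yields $\Expect_{v}[\tau(\phi)] \leq \adv{\tau}(\phi) + k_{0} \cdot w(v)$, where $k_{0}$ denotes the number of $0$-subphases in $\phi$.

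The main task is then to bound $k_{0}$ in terms of $\adv{\sigma}(\phi)$. Every subphase transition inside $\phi$ is triggered by $\adv{\A}$ matching across or on top of $v$, and each such match deposits $w(v)$ into $\adv{\sigma}_{v}$ by the very definition of the space potential; hence the total number of subphases satisfies $k \leq 1 + \adv{\sigma}(\phi)/w(v)$. To relate $k_{0}$ to this total I will track the value of $b = X_{1} \oplus X_{2} \oplus \adv{X}_{1} \oplus \adv{X}_{2}$ through all events that can occur during $\phi$: a new arrival at $\Leaves(u_{i})$ flips both $X_{i}$ and $\adv{X}_{i}$, an $\A$-match across $v$ flips both $X_{1}, X_{2}$, an $\adv{\A}$-match across $v$ flips both $\adv{X}_{1}, \adv{X}_{2}$, and by the definition of a phase $\A$ never matches on top of $v$ inside $\phi$ (any $\A$-match on top of $u_{i}$ within $\phi$ must therefore be a match across $v$, which is already handled). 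All of these events preserve $b$. The only event that toggles $b$ is an $\adv{\A}$ on-top-of-$v$ match, which flips exactly one of $\adv{X}_{1}, \adv{X}_{2}$; this confirms that $b$ is constant throughout each subphase and changes only at a subset of the subphase boundaries.

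Finally, I will split on the two cases in the lemma statement. If $\phi$ is a $0$-phase, the first subphase has type $0$ and I can only bound $k_{0} \leq k \leq 1 + \adv{\sigma}(\phi)/w(v)$, giving $k_{0} \cdot w(v) \leq w(v) + \adv{\sigma}(\phi) \leq w(v) + 2\adv{\sigma}(\phi)$. If $\phi$ is a $1$-phase, the first subphase has type $1$, so $k_{0} \leq k - 1 \leq \adv{\sigma}(\phi)/w(v)$, giving $k_{0} \cdot w(v) \leq \adv{\sigma}(\phi) \leq 2\adv{\sigma}(\phi)$. Combining either bound with the inequality from the first paragraph completes the proof. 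I expect the main obstacle to be the case analysis in the middle paragraph: one has to enumerate every intra-phase event and verify that only $\adv{\A}$'s on-top-of-$v$ matches can change $b$, which in turn licenses the clean telescoping between per-subphase bounds and the global $\adv{\sigma}(\phi)$-budget; everything else is straightforward bookkeeping about what happens at the two endpoints of $\phi$.
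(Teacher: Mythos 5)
Your proof is correct and takes essentially the same route as the paper: decompose $\phi$ into subphases, invoke \Obs{}~\ref{observation:bound-time-cost-1-subphase} and \Lem{}~\ref{lemma:bound-time-cost-0-subphase} per subphase, and relate the number of subphases to $1 + \adv{\sigma}(\phi)/w(v)$. The paper's final accounting differs slightly --- it splits on whether $\phi$ has one or several subphases and uses $\adv{\sigma}(\phi) \geq w(v)$ in the latter case to absorb the $w(v)$ into $2\adv{\sigma}(\phi)$ --- whereas your observation that the first subphase of a $1$-phase is not a $0$-subphase (so $k_{0} \leq k - 1$) handles both phase types uniformly and in fact yields the slightly tighter bound $\adv{\tau}(\phi) + \adv{\sigma}(\phi)$ without needing the factor $2$.
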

\LongVersion 
\begin{proof}
Let
$U^{b}(\phi)$
be the set of $b$-subphases of $\phi$ for $b \in \{ 0, 1 \}$.
If
$U^{0}(\phi) = \emptyset$
and
$U^{1}(\phi) = \{ \varphi \}$,
then we can employ \Obs{}~\ref{observation:bound-time-cost-1-subphase} to
conclude that
$\Expect_{v}[\tau(\phi)]
\leq
\adv{\tau}(\varphi)
=
\adv{\tau}(\phi)$.
If
$U^{0}(\phi) = \{ \varphi \}$
and
$U^{1}(\phi) = \emptyset$,
then we can employ \Lem{}~\ref{lemma:bound-time-cost-0-subphase} to conclude
that
$\Expect_{v}[\tau(\phi)]
\leq
w(v) + \adv{\tau}(\varphi)
=
w(v) + \adv{\tau}(\phi)$.

Since all but the last subphases of $\phi$ end when $\adv{\A}$ matches across
or on top of $v$, it follows by the definition of $\adv{\sigma}$ that
$|U^{0}(\phi) \cup U^{1}(\phi)| = 1 + \adv{\sigma}(\phi) / w(v)$.
Therefore, if
$|U^{0}(\phi) \cup U^{1}(\phi)| > 1$,
then we can employ \Obs{}~\ref{observation:bound-time-cost-1-subphase} and
\Lem{}~\ref{lemma:bound-time-cost-0-subphase} to conclude that
\begin{align*}
\Expect_{v} \left[ \tau(\phi) \right]
\leq &
\sum_{\varphi \in U^{0}(\phi)} \left( w(v) + \adv{\tau}(\varphi) \right)
+
\sum_{\varphi \in U^{1}(\phi)} \adv{\tau}(\varphi) \\
= & \,
\adv{\tau}(\phi) + |U^{0}(\phi)| \cdot w(v) \\
\leq & \,
\adv{\tau}(\phi) + |U^{0}(\phi) \cup U^{1}(\phi)| \cdot w(v) \\
= & \,
\adv{\tau}(\phi)
+
\left( 1 + \adv{\sigma}(\phi) / w(v) \right) \cdot w(v) \\
= & \,
\adv{\tau}(\phi) + w(v) + \adv{\sigma}(\phi)
\, \leq \,
\adv{\tau}(\phi) + 2 \cdot \adv{\sigma}(\phi) \, ,
\end{align*}
where the last transition holds since
$|U^{0}(\phi) \cup U^{1}(\phi)| > 1$
implies that
$\adv{\sigma}(\phi) \geq w(v)$.
The assertion follows.
\end{proof}
\LongVersionEnd 

Fixing the coin tosses in $\Ancestors(v)$ (and thus, fixing the partition of
$[0, \EndTime)$ into phases), we can apply \Lem{}~\ref{lemma:0-and-1-phases}
to each individual $1$-phase, hence obtaining
(\ref{equation:target-bound-1-phases}) by the linearity of expectation.

\begin{IntuitionSpotlight}
It remains to establish (\ref{equation:target-bound-0-phases}) which turns out
to be more demanding:
for $0$-phases $\phi$, the upper bound on $\Expect_{v}[\tau(\phi)]$ promised by
\Lem{}~\ref{lemma:0-and-1-phases} includes an additive $w(v)$ term and we have
to make sure that it does not dominate the $\adv{\tau}(\phi)$ and
$\adv{\sigma}(\phi)$ terms too often.
This is done via a classification of the phases with respect to
their starting time.
\end{IntuitionSpotlight}

\paragraph{Early and late phases.}
Recall the definition of
$Y(t) = Y_{1}(t) \xor Y_{2}(t)$
and let $\LateTime$ be the smallest
$t \in [0, \EndTime)$
such that
$\min \{ \int_{t}^{\EndTime} Y(s) d s,  \int_{t}^{\EndTime} \neg Y(s) d s \}
\leq w(v)$.
\NotationLabel{analysis:late-time}
Phase $\phi$ with starting time $t$ is
\LongVersion 
said to be an \emph{early}
phase if
$t < \LateTime$
and a \emph{late} phase if
$t \geq \LateTime$.
\LongVersionEnd 
\ShortVersion 
called \emph{early} if
$t < \LateTime$
and \emph{late} if
$t \geq \LateTime$.
\ShortVersionEnd 
(Intuitively, this means that when an early phase starts, we still have more
than $w(v)$ time units of
$Y(t) = 0$
and more than $w(v)$ time units of
$Y(t) = 1$.)
Let
$P_{\mathrm{early}}$
\NotationLabel{analysis:P-early}
and
$P_{\mathrm{late}}$
\NotationLabel{analysis:P-late}
be the sets of early and late phases, respectively.
Let
$K$
\NotationLabel{analysis:K}
be the number of discontinuity points of $Y(t)$ in the interval
$[0, \LateTime)$.

We would like to take a closer look at the partition of $[0, \EndTime)$ into
phases.
\LongVersion 
To that end, consider
\LongVersionEnd 
\ShortVersion 
Consider
\ShortVersionEnd 
some phase $\phi$ with starting time $T^{-}$ and end
time $T^{+}$.
Fixing
$T^{-} = t$
for some
$t \in [0, \EndTime)$,
the end time $T^{+}$ is a random variable fully determined by the coin tosses
in $\Ancestors(v)$ after time $t$.
\LongVersion 
An important property of this random variable is cast in the following lemma
(together with
two other important properties of the partition of $[0, \EndTime)$ into
phases).
\LongVersionEnd 

\begin{lemma} \label{lemma:partition-time-line-into-phases}
The partition of $[0, \EndTime)$ into phases satisfies the following three
properties: \\
(P1)
if
$t < \LateTime$,
then
$\Expect_{\Ancestors(v)} \left[ \int_{T^{-}}^{T^{+}} X(s) d s \mid T^{-} = t
\right]
\geq
w(v) (1 - 1 / e)$; \\
(P2)
$|P_{\mathrm{early}}| \leq K + 1$; and \\
(P3)
$\Expect_{\Ancestors(v)}[|P_{\mathrm{late}}|] = O (1)$
with an exponentially vanishing upper tail.
\end{lemma}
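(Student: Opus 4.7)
The plan is to model the sequence of phases of $v$ as iterations of a (rate-varying) alternating Poisson process, so that (P1) and (P3) reduce to instances of the APP lemmas from \Sect{}~\ref{section:alternating-Poisson-process}, while (P2) follows from a short combinatorial observation on the alternating phase parities.

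The key structural claim that I would establish first is that within any phase $\phi$, the parity $X(s) \xor Y(s)$ is constant. Indeed, during a phase $\A$ never matches on top of $v$, so $X_{i}(s)$ can change only via arrivals in $\Leaves(u_{i})$ (which flip $X_{i}$ and $Y_{i}$ together) or via cross-matches at $v$ (which flip both $X_{1}$ and $X_{2}$, leaving $X$ unchanged); hence $X \xor Y$ is preserved. At each phase boundary, $X$ flips (due to the top-match) while $Y$ does not, toggling $X \xor Y$. Since $X(0) = Y(0) = 0$, in phase $j$ we have $X(s) \xor Y(s) = (j - 1) \bmod 2$, so $X(s) = 1$ coincides with $Y(s) = 1$ when $j$ is odd and with $Y(s) = 0$ when $j$ is even. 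I would then introduce a rate-varying APP $\Pi_{v}$ on $[0, \EndTime)$ with coloring $c(t) = 1$ if $Y(t) = 1$ and $c(t) = 2$ otherwise, and rate function $\lambda'(t) = 1 / w(p(u^{*}(t)))$, where $u^{*}(t)$ is the head of the stilt containing $v$ at time $t$ (interpreted as $\lambda'(t) = 0$ when no such $p(u^{*}(t))$ exists). Since $w(p(u^{*}(t))) \geq w(p(v)) \geq \alpha w(v)$, we have $\lambda'(t) \leq \lambda := 1/w(v)$, and the $j$-th iteration of $\Pi_{v}$ corresponds exactly to the $j$-th phase of $v$: an iteration ends when a Poisson tick of rate $\lambda'$ occurs during a color-$i$ time, which matches the event that $\A$ matches on top of $v$ at a moment when $X(s) = 1$.

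Under this correspondence, $\int_{T^{-}}^{T^{+}} X(s) \, d s$ equals the digestion $G_{j}$ of iteration $j$, and applying the rate-varying extension of \Lem{}~\ref{lemma:APP-low-bound-last-digestion} gives
\[
\Expect_{\Ancestors(v)} \left[ G_{j} \mid T_{j - 1} = t \right]
\, \geq \,
\frac{1}{\lambda} \left( 1 - e^{-\lambda \cdot V_{i}(t, \EndTime)} \right) ,
\]
where $V_{i}(t, \EndTime)$ equals $\int_{t}^{\EndTime} Y(s) \, d s$ or $\int_{t}^{\EndTime} (1 - Y(s)) \, d s$ depending on the parity of $j$; both exceed $w(v)$ by the definition of $\LateTime$ once $t < \LateTime$. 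Plugging $\lambda = 1/w(v)$ into the right-hand side gives at least $w(v) (1 - 1/e)$, proving (P1).

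For (P2), note that phase $j$ ends at some time $t^{*}_{j}$ with $X(t^{*-}_{j}) = 1$, hence $Y(t^{*-}_{j}) = 1 \xor ((j - 1) \bmod 2) = j \bmod 2$; the values of $Y$ at the sequence $0, t^{*-}_{1}, t^{*-}_{2}, \dots$ therefore alternate as $0, 1, 0, 1, \dots$, and since $Y$ is piecewise constant at least one of its discontinuities must lie strictly between any two consecutive entries. Hence the number of phase ends in $[0, \LateTime)$ is at most $K$, yielding $|P_{\mathrm{early}}| = 1 + \#\{ t^{*}_{j} < \LateTime \} \leq K + 1$. For (P3), I would apply the rate-varying extension of \Lem{}~\ref{lemma:APP-bound-number-meaningful} to $\Pi_{v}$ restricted to the suffix $[\LateTime, \EndTime)$: the number of meaningful alternation times there is stochastically dominated by $1 + 2 Z$ with $Z \sim \PoisDist(\lambda \cdot \min\{V_{1}, V_{2}\})$ and $V_{i}$ the total color-$i$ volume in this suffix. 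The definition of $\LateTime$ gives $\min\{V_{1}, V_{2}\} \leq w(v)$, so $\lambda \cdot \min\{V_{1}, V_{2}\} \leq 1$, and $\Expect_{\Ancestors(v)}[|P_{\mathrm{late}}|] = O(1)$ with an exponentially vanishing upper tail inherited from the Poisson distribution. The main obstacle in executing this plan is justifying the APP correspondence rigorously: the rate $\lambda'(t)$ depends on the very $\Ancestors(v)$ coin tosses that drive the phase ends, so one must argue --- via domination by the deterministic upper bound $\lambda'(t) \leq 1 / w(v)$ --- that the iterations of $\Pi_{v}$ can indeed be analyzed within the rate-varying APP framework of \Sect{}~\ref{section:alternating-Poisson-process}.
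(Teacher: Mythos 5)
Your proposal is correct and, for parts (P1) and (P3), takes essentially the same route as the paper: model the sequence of phase boundaries (top-matches at $v$) as iterations of a rate-varying APP on $[0, \EndTime)$ with coloring determined by $Y(\cdot)$ and rate bounded above by $1/w(v)$, then invoke the rate-varying versions of \Lem{}~\ref{lemma:APP-low-bound-last-digestion} and \Lem{}~\ref{lemma:APP-bound-number-meaningful}. Your preliminary claim that $X \xor Y$ is constant on each phase and toggles at phase boundaries is exactly the invariant that makes the coloring $c(t) = \Indicator(Y(t) = 1)$ work, and your identification $\int_{T^-}^{T^+} X\,ds = G_j$ is the same digestion correspondence the paper uses for (P1). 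Your rate function $\lambda'(t) = 1/w(p(u^*(t)))$ with $u^*(t)$ the head of $v$'s stilt is if anything stated more carefully than in the paper, which labels the matched-across vertex as the stilt head rather than its parent; both formulations give the needed bound $\lambda'(t) \leq 1/w(v)$.

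For (P2) you depart from the paper in a minor but genuinely different way: rather than invoking the second part of \Lem{}~\ref{lemma:APP-bound-number-meaningful} on the $[0, \LateTime)$-restriction, you give a direct combinatorial argument that the $Y$-values recorded just before consecutive top-matches alternate, hence each completed early phase must contain a discontinuity of $Y$ in $[0, \LateTime)$. This is cleaner and in fact tighter: a naive application of the APP lemma gives the number of top-matches in $[0,\LateTime)$ bounded by $K + 1$, hence $|P_{\mathrm{early}}| \leq K + 2$, whereas your argument (using $Y(0) = 0$, so the first iteration digests the color that is \emph{not} present at time $0$) gives the stated bound $K + 1$ directly. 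Your explicit caveat about the rate function $\lambda'(t)$ being driven by the same $\Ancestors(v)$ coin tosses that it governs is a fair point: the paper's rate-varying APP is formally defined for a deterministic rate function, and the paper glosses over the domination/coupling step as much as you do. You should also be slightly careful that $|P_{\mathrm{late}}|$ is not literally the number of meaningful alternation times of the suffix-restricted APP (there is an offset of at most one coming from the phase that straddles $\LateTime$), but the paper absorbs this the same way you do, into the ``$+1$'' slack.
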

\LongVersion 
\begin{proof}
We investigate the dynamics of
$(X(t))_{t \in [0, \EndTime)}$
and
$(Y(t))_{t \in [0, \EndTime)}$
(an illustration is provided in \Fig{}~\ref{figure:time-line-app}).
A new request arriving in $\Leaves(v)$ flips $X$ and $Y$.
While $Y$ is affected only by new request arrivals, the dynamic of $X$ is tied
to the actions of $\A$ too.
Specifically, the design of the stilt-walker algorithm ensures that $\A$ can
match on top of $v$ only when $X = 1$ (recall that matching across $v$ does
not affect the partition of $[0, \EndTime)$ to phases).
Suppose that $X = 1$ throughout the infinitesimally small time
interval
$I = [t - d t, t)$;
let $S$ be the stilt in $\Stilts(t')$ to which $v$ belongs for all $t' \in I$
and let
$v' \in \Ancestors(v)$
be the head of $S$.
Then $\A$ matches across $v'$ and on top of $v$ at time $t$ with probability
$\pi(t) = d t / w(v')$
(depending solely on the coin tosses of $v'$), in which case $X$ flips to
$X(t) = 0$.
Since $v'$ is an ancestor of $v$, we know that
$\pi(t) < d t / w(v)$.

We color the time line using the coloring function
$c : [0, \EndTime) \rightarrow \{ 1, 2, \bot \}$
by setting
\[
c(t)
=
\left\{
\begin{array}{ll}
1, & Y(t) = 1 \\
2, & Y(t) = 0
\end{array}
\right.
\]
(note that $\bot$, whose preimage under $c$ is empty, is included in the range
of $c$ for the sake of compatibility with the APP framework).
The key observation now is that the times at which $\A$ matches on top of $v$
can be viewed as the meaningful alternation times of a rate-varying APP
$\Pi_{[0, \EndTime)}$ defined over the time interval $[0, \EndTime)$ with
coloring function $c(\cdot)$ and rate function bounded from above by $1 /
w(v)$ (recall that a rate-varying APP is a generalization of an APP defined in
the end of \Sect{}~\ref{section:alternating-Poisson-process} of the full
version).

\sloppy
Taking $G_{\phi}$ to be the digestion of the iteration in
$\Pi_{[0, \EndTime)}$ that starts at time
$T^{-} = t$,
we notice that
$\int_{T^{-}}^{T^{+}} X(s) d s = G_{\phi}$;
recalling that the definition of $\LateTime$ guarantees that
$\min \left\{ \int_{t}^{\EndTime} \Indicator(c(t) = 1) d t, \int_{t}^{\EndTime}
\Indicator(c(t) = 2) d t \right\} > w(v)$
for every
$t < \LateTime$,
we obtain property (P1) by applying (the rate-varying version of)
\Lem{}~\ref{lemma:APP-low-bound-last-digestion} to $\Pi_{[0, \EndTime)}$.
Property (P2) holds simply by applying
\Lem{}~\ref{lemma:APP-bound-number-meaningful} to the $[0,
\LateTime)$-restriction of $\Pi_{[0, \EndTime)}$.
To obtain property (P3), we consider the $[\LateTime, \EndTime)$-restriction
of $\Pi_{[0, \EndTime)}$, denote its number of meaningful alternation times by
$N$, and observe that $|P_{\mathrm{late}}|$ is stochastically dominated by $N
+ 1$;
the property then follows by \Lem{}~\ref{lemma:APP-bound-number-meaningful}
since
$\min \left\{ \int_{\LateTime}^{\EndTime} \Indicator(c(t) = 1) d t,
\int_{\LateTime}^{\EndTime} \Indicator(c(t) = 2) d t \right\} \leq w(v)$.
\end{proof}
\par\fussy
\LongVersionEnd 

\def\FigureTimeLineApp{
\LongVersion
\begin{figure}
\LongVersionEnd
\ShortVersion
\begin{figure}[h]
\ShortVersionEnd
\begin{center}
\includegraphics[width=\textwidth]{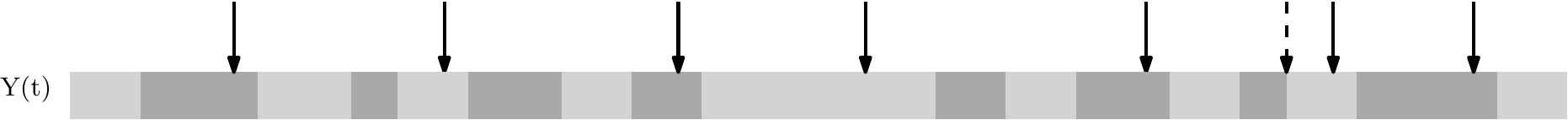}
\end{center}
\caption{ \label{figure:time-line-app}
Interval $[0, \EndTime)$ with time progressing from left to right.
The dark gray and light gray intervals represent the times $t$ at which
$Y(t) = 1$
and
$Y(t) = 0$,
respectively.
The solid vertical arrows represent the times at which $\A$ matches on top of
$v$.
The dashed vertical arrow represent time $\LateTime$.
}
\end{figure}
}
\LongVersion 
\FigureTimeLineApp{}
\LongVersionEnd 

\begin{corollary} \label{corollary:low-bound-early-phase-opt}
If
$\phi$ is a $0$-phase that starts at time
$T^{-} = t < \LateTime$,
then
$\Expect_{\Ancestors(v)}[\adv{\tau}(\phi) + \adv{\sigma}(\phi) \mid T^{-} = t]
\geq
\Omega (w(v))$.
\end{corollary}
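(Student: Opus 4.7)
The proof proceeds by a pointwise dichotomy on whether $\adv{\A}$ performs any matching operation across or on top of $v$ within $\phi$, followed by taking expectations over the coin tosses of $\Ancestors(v)$.

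If $\adv{\A}$ performs no such matching during $\phi$, then $\phi$ consists of a single subphase which is, by assumption, a $0$-subphase. Its defining relation $X_{1}(s) \xor X_{2}(s) \xor \adv{X}_{1}(s) \xor \adv{X}_{2}(s) = 0$ rearranges to $X(s) = \adv{X}(s)$ for every $s \in \phi$, so $\adv{\tau}(\phi) \geq \int_{\phi} \adv{X}(s)\,ds = \int_{\phi} X(s)\,ds$. If, on the other hand, $\adv{\A}$ performs at least one matching across or on top of $v$ during $\phi$, then each such matching deposits $w(v)$ into $\adv{\sigma}_{v}$, so $\adv{\sigma}(\phi) \geq w(v)$. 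Either way, we obtain the pointwise inequality
\[
\adv{\tau}(\phi) + \adv{\sigma}(\phi)
\, \geq \,
\min\!\left\{\int_{\phi} X(s)\,ds,\; w(v)\right\}.
\]

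It then suffices to show
$\Expect_{\Ancestors(v)}\!\left[\min\!\left\{\int_{\phi} X(s)\,ds,\; w(v)\right\} \mid T^{-} = t\right] \geq \Omega(w(v))$.
To this end I revisit the proof of property~(P1) in \Lem{}~\ref{lemma:partition-time-line-into-phases}, which identifies $\int_{T^{-}}^{T^{+}} X(s)\,ds$ with the digestion $G_{\phi}$ of the iteration of the rate-varying APP $\Pi_{[0, \EndTime)}$ that starts at $T^{-} = t$. Since that APP's rate function is bounded above by $1/w(v)$, lowering the rate only delays iteration termination, and so $G_{\phi}$ stochastically dominates $\min\{\ExpDist(1/w(v)),\, V_{i}(t, \EndTime)\}$ for the relevant $i \in \{1, 2\}$. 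The definition of $\LateTime$ ensures $V_{i}(t, \EndTime) > w(v)$ for every $t < \LateTime$ and each $i \in \{1, 2\}$, so applying the monotone map $x \mapsto \min\{x, w(v)\}$ shows that $\min\{G_{\phi}, w(v)\}$ stochastically dominates $\min\{\ExpDist(1/w(v)),\, w(v)\}$. An integration-by-parts calculation identical to the one in the proof of \Lem{}~\ref{lemma:APP-low-bound-last-digestion} gives $\Expect[\min\{\ExpDist(1/w(v)), w(v)\}] = w(v)(1 - 1/e) = \Omega(w(v))$, completing the bound.

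The main obstacle is the last step. The mean bound $\Expect[G_{\phi} \mid T^{-} = t] \geq w(v)(1 - 1/e)$ supplied by (P1) is, on its own, insufficient, since truncation at $w(v)$ could in principle erase most of the mass contributing to the expectation. Overcoming this forces us to extract the stochastic-dominance form of the APP digestion bound from inside the proof of (P1), rather than using its mean as a black box, and then to compose that stochastic bound with the monotone truncation $x \mapsto \min\{x, w(v)\}$ as sketched above.
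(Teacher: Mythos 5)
Your proof is correct and follows the same plan as the paper's (very terse) proof: the pointwise dichotomy --- if $\phi$ is a single $0$-subphase then $\adv{\tau}(\phi) \geq \int_{\phi} X(s)\,d s$, otherwise $\adv{\sigma}(\phi) \geq w(v)$ --- yields $\adv{\tau}(\phi) + \adv{\sigma}(\phi) \geq \min\{G_{\phi}, w(v)\}$ pointwise, and the expectation of the truncated digestion is then bounded via the APP machinery. The paper compresses the final step into ``the assertion follows from \Lem{}~\ref{lemma:partition-time-line-into-phases}(P1),'' and you are right to flag that (P1) as stated is only a first-moment lower bound on $G_{\phi}$, which alone cannot lower-bound $\Expect[\min\{G_{\phi}, w(v)\}]$: a mean of $w(v)(1 - 1/e)$ is compatible with the truncated expectation being arbitrarily small (e.g., mass concentrated at $0$ and at $V_{i}(t, \gamma) \gg w(v)$). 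Closing the gap requires exactly the distributional fact you extract from the proof of \Lem{}~\ref{lemma:APP-low-bound-last-digestion}, namely that $G_{\phi}$ stochastically dominates $\min\{\ExpDist(1 / w(v)), V_{i}(t, \gamma)\}$; pushing the monotone truncation $x \mapsto \min\{x, w(v)\}$ through and invoking $V_{i}(t, \gamma) > w(v)$ for $t < \LateTime$ then gives the bound. Your argument is therefore not merely correct but also makes explicit a step the paper leaves to the reader.
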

\begin{proof}
As $T^{+}$ is a random variable fully determined by the coin tosses in
$\Ancestors(v)$ after time $t$, $\adv{\tau}(\phi)$ and $\adv{\sigma}(\phi)$
are also random variables fully determined by the coin tosses in
$\Ancestors(v)$ after time $t$.
Since
$\adv{\tau}(\phi)
\geq
\int_{t}^{T^{+}} \adv{X}(s) d s$
and since $\phi$ starts with a $0$-subphase $\varphi$ during which
$X = 1$
implies
$\adv{X} = 1$,
the assertion follows from
\Lem{}~\ref{lemma:partition-time-line-into-phases}(P1), recalling that if
$\phi$ contains any subphase other than $\varphi$ (in particular, a
$1$-subphase during which
$X = 1$
does not imply
$\adv{X} = 1$),
then
$\adv{\sigma}(\phi) \geq w(v)$.
\end{proof}

We are now ready to establish (\ref{equation:target-bound-0-phases}).
This is done by defining
$P^{0}_{\mathrm{early}} = P^{0} \cap P_{\mathrm{early}}$
and
$P^{0}_{\mathrm{late}} = P^{0} \cap P_{\mathrm{late}}$
to be the sets of early and late $0$-phases, respectively, and proving the
following two lemmas.

\begin{lemma} \label{lemma:bound-early-phases}
$\Expect_{v, \Ancestors(v)} \left[ \tau(P^{0}_{\mathrm{early}}) \right]
=
O \left( \adv{\tau}(0, \EndTime) + \adv{\sigma}(0, \EndTime) \right)$.
\end{lemma}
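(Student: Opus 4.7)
\textbf{Proof plan for Lemma~\ref{lemma:bound-early-phases}.} The plan is to reduce the bound on $\tau(P^{0}_{\mathrm{early}})$ to a sum of per-phase contributions via Lemma~\ref{lemma:0-and-1-phases}, and then pay the additive $w(v)$ that appears for each $0$-phase by charging it against the corresponding optimal potentials using Corollary~\ref{corollary:low-bound-early-phase-opt}. The key fact I will rely on (beyond the two results above) is that the partition of $[0, \EndTime)$ into phases and the early/late and $0$/$1$ classifications are all measurable with respect to the coin tosses of $\Ancestors(v)$ alone, since $X_1 \xor X_2$ and $v \in \Odd$ are invariant under the coin tosses of $v$ and its descendants (by the warmup observation).

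First I would condition on the coin tosses of $\Ancestors(v)$, which fixes the (random) sets $P^{0}_{\mathrm{early}}, P^{0}_{\mathrm{late}}$ together with the phase endpoints and the per-phase values of $\adv{\tau}$ and $\adv{\sigma}$. Applying Lemma~\ref{lemma:0-and-1-phases} to each $\phi \in P^{0}_{\mathrm{early}}$ individually and summing yields
\[
\Expect_{v} \left[ \tau(P^{0}_{\mathrm{early}}) \mid \Ancestors(v) \right]
\leq
\adv{\tau}(P^{0}_{\mathrm{early}}) + 2 \cdot \adv{\sigma}(P^{0}_{\mathrm{early}}) + w(v) \cdot |P^{0}_{\mathrm{early}}| \, .
\]
Taking the outer expectation over $\Ancestors(v)$, the first two terms are bounded by $\adv{\tau}(0, \EndTime) + 2 \cdot \adv{\sigma}(0, \EndTime)$ since the phases in $P^{0}_{\mathrm{early}}$ form a disjoint subfamily of intervals in $[0, \EndTime)$. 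So the entire argument reduces to proving $w(v) \cdot \Expect[|P^{0}_{\mathrm{early}}|] = O(\Expect[\adv{\tau}(P^{0}_{\mathrm{early}}) + \adv{\sigma}(P^{0}_{\mathrm{early}})])$.

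For the reduced bound I would index the phases as $\phi_1, \phi_2, \ldots$ and write $\Expect[|P^{0}_{\mathrm{early}}|] = \sum_{j} \Prob[\phi_j \in P^{0}_{\mathrm{early}}]$. For each $j$, the event that $\phi_j$ is a $0$-phase starting at some time $T^{-}_j = t < \LateTime$ is determined by the coin tosses of $\Ancestors(v)$ up to time $t$, while $\adv{\tau}(\phi_j)$ and $\adv{\sigma}(\phi_j)$ depend (through $T^{+}_j$) on the post-$t$ coin tosses of $\Ancestors(v)$. Hence by the tower property and Corollary~\ref{corollary:low-bound-early-phase-opt},
\[
\Expect \left[ (\adv{\tau}(\phi_j) + \adv{\sigma}(\phi_j)) \cdot \Indicator(\phi_j \in P^{0}_{\mathrm{early}}) \right]
\geq
\Omega(w(v)) \cdot \Prob[\phi_j \in P^{0}_{\mathrm{early}}] \, .
\]
Summing over $j$ and using linearity of expectation yields $\Expect[\adv{\tau}(P^{0}_{\mathrm{early}}) + \adv{\sigma}(P^{0}_{\mathrm{early}})] \geq \Omega(w(v)) \cdot \Expect[|P^{0}_{\mathrm{early}}|]$, giving the needed bound on the leftover $w(v) \cdot \Expect[|P^{0}_{\mathrm{early}}|]$ term. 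Combining everything and again using disjointness of the phase intervals to bound $\Expect[\adv{\tau}(P^{0}_{\mathrm{early}})] \leq \adv{\tau}(0, \EndTime)$ and $\Expect[\adv{\sigma}(P^{0}_{\mathrm{early}})] \leq \adv{\sigma}(0, \EndTime)$ completes the proof.

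The main obstacle is the conditional-independence bookkeeping required to invoke Corollary~\ref{corollary:low-bound-early-phase-opt} uniformly across all $j$: one must verify that the classification of $\phi_j$ as an early $0$-phase really is a stopping-time event relative to the filtration generated by $\Ancestors(v)$'s coin tosses, so that the conditional expectation over post-$T^{-}_j$ randomness appearing in the corollary applies as claimed. The remaining subtleties (disjointness of phase intervals and linearity across $j$) are routine once that measurability is pinned down.
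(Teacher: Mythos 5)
Your proof is correct and follows essentially the same route as the paper's: decompose $\tau(P^{0}_{\mathrm{early}})$ per-phase via Lemma~\ref{lemma:0-and-1-phases}, then charge the $w(v)$ term against $\adv{\tau}+\adv{\sigma}$ using Corollary~\ref{corollary:low-bound-early-phase-opt}, all justified by the measurability of the phase partition with respect to $\Ancestors(v)$'s coin tosses.

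One organizational difference worth noting: the paper first invokes Lemma~\ref{lemma:partition-time-line-into-phases}(P2) to get the deterministic bound $|P^{0}_{\mathrm{early}}| \leq K+1$ and then pads the early-$0$-phase sequence with dummy phases so it can sum over a fixed index set $\{1,\dots,K+1\}$, establishing $\Expect_{v,\Ancestors(v)}[\tau(\phi_j)] \leq O(\Expect_{\Ancestors(v)}[\adv{\tau}(\phi_j)+\adv{\sigma}(\phi_j)])$ for each $j$ and summing. You instead sum over \emph{all} phase indices $j$ with the indicator $\Indicator(\phi_j \in P^{0}_{\mathrm{early}})$, avoiding the deterministic bound on $|P^{0}_{\mathrm{early}}|$ and the dummy-phase bookkeeping entirely: the tower-property identity $\Expect\bigl[(\adv{\tau}(\phi_j)+\adv{\sigma}(\phi_j))\Indicator(\phi_j\in P^{0}_{\mathrm{early}})\bigr] \geq \Omega(w(v))\cdot\Prob[\phi_j\in P^{0}_{\mathrm{early}}]$ does all the work, and summing gives $w(v)\Expect[|P^{0}_{\mathrm{early}}|] = O(\Expect[\adv{\tau}(P^{0}_{\mathrm{early}})+\adv{\sigma}(P^{0}_{\mathrm{early}})])$ directly. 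This is a modest simplification: P2 is not actually needed for this lemma (only P1, through the corollary). Both routes yield the same bound, so the difference is purely in presentation. Your closing remark about verifying that $\phi_j$'s classification is a stopping-time event is the right thing to flag; it follows from the warmup observation exactly as you indicate, and the paper handles the same point by phrasing the conditioning as $T^{-}=t$ inside Lemma~\ref{lemma:partition-time-line-into-phases}(P1) and Corollary~\ref{corollary:low-bound-early-phase-opt}.
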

\LongVersion 
\begin{proof}
\Lem{}~\ref{lemma:partition-time-line-into-phases}(P2) ensures
that
$|P^{0}_{\mathrm{early}}| \leq K + 1$.
Let
$\phi_{1}, \dots, \phi_{K + 1}$
be the sequence of early $0$-phases, where, for the sake of the analysis, we
introduce a suffix of empty \emph{dummy} phases so that each dummy phase
$\phi_{j}$, $|P^{0}_{\mathrm{early}}| + 1 \leq j \leq K + 1$,
starts and ends at some arbitrary dummy time
$\hat{t} > \EndTime$,
thus ensuring that
$\tau(\phi_{j}) = \adv{\tau}(\phi_{j}) = \adv{\sigma}(\phi_{j}) = 0$.

Fix some
$1 \leq j \leq K + 1$
and let $T^{-}$ and $T^{+}$ be the random variables that capture the starting
time and end time of $\phi_{j}$.
We argue that
\begin{equation} \label{equation:bound-single-phase-conditional}
\Expect_{v, \Ancestors(v)} \left[ \tau(\phi_{j}) | T^{-} = t \right]
\, \leq \,
O \left( \Expect_{\Ancestors(v)} \left[
\adv{\tau}(\phi_{j}) + \adv{\sigma}(\phi_{j}) \mid T^{-} = t
\right] \right)
\end{equation}
for any $t$ in the support of $T^{-}$.
This clearly holds if $\phi_{j}$ is an empty dummy phase (which means that
$t = \hat{t}$),
so assume that
$t < \LateTime$.
Consider the random variable
$\Expect_{v}[\tau(\phi_{j}) | T^{-} = t, T^{+}]$
that maps the event
$T^{+} = s$
(defined over the coin tosses in $\Ancestors(v)$)
to
$\Expect_{v}[\tau(\phi_{j}) | T^{-} = t, T^{+} = s]$.
By \Lem{}~\ref{lemma:0-and-1-phases}, the latter satisfies
$\Expect_{v}[\tau(\phi_{j}) | T^{-} = t, T^{+} = s]
\leq
\adv{\tau}(t, s) + 2 \adv{\sigma}(t, s) + w(v)$.
Therefore,
\begin{align*}
\Expect_{\Ancestors(v)} \left[
\Expect_{v} \left[ \tau(\phi_{j}) | T^{-} = t, T^{+} \right]
\right]
\, \leq \, &
\Expect_{\Ancestors(v)} \left[
\adv{\tau}(\phi_{j}) + 2 \adv{\sigma}(\phi_{j}) \mid T^{-} = t
\right] + w(v) \\
\leq \, &
O \left( \Expect_{\Ancestors(v)} \left[
\adv{\tau}(\phi_{j}) + \adv{\sigma}(\phi_{j}) \mid T^{-} = t
\right] \right) \, ,
\end{align*}
where the last transition follows from
\Cor{}~\ref{corollary:low-bound-early-phase-opt},
thus establishing (\ref{equation:bound-single-phase-conditional}) by the law
of total expectation.

Consider the random variable
$\Expect_{v, \Ancestors(v)}[\tau(\phi_{j}) | T^{-}]$
that maps the event
$T^{-} = t$
(defined over the coin tosses in $\Ancestors(v)$)
to
$\Expect_{v, \Ancestors(v)}[\tau(\phi_{j}) | T^{-} = t]$.
Using the bound provided for the latter by
(\ref{equation:bound-single-phase-conditional}) and applying the law of total
expectation, we conclude that
\[
\Expect_{v, \Ancestors(v)} \left[ \tau(\phi_{j}) \right]
\, = \,
\Expect_{\Ancestors(v)} \left[
\Expect_{v, \Ancestors(v)} \left[ \tau(\phi_{j}) | T^{-} \right]
\right]
\, \leq \,
O \left( \Expect_{\Ancestors(v)} \left[ 
\adv{\tau}(\phi_{j}) + \adv{\sigma}(\phi_{j})
\right] \right) \, .
\]
Therefore, by the linearity of expectation, we derive
\begin{align*}
\Expect_{v, \Ancestors(v)} \left[ \tau(P^{0}_{\mathrm{early}}) \right]
\, = \, &
\sum_{j = 1}^{K + 1} \Expect_{v, \Ancestors(v)} \left[ \tau(\phi_{j}) \right] \\
\leq \, &
\sum_{j = 1}^{K + 1}
O \left( \Expect_{\Ancestors(v)} \left[
\adv{\tau}(\phi_{j}) + \adv{\sigma}(\phi_{j})
\right] \right)
\, = \,
O \left( \Expect_{\Ancestors(v)} \left[
\adv{\tau}(P^{0}_{\mathrm{early}}) + \adv{\sigma}(P^{0}_{\mathrm{early}})
\right] \right)
\end{align*}
which establishes the assertion.
\end{proof}
\LongVersionEnd 
\ShortVersion 
\sloppy
\begin{proof}[Proof (sketch).]
Consider an early $0$-phase $\phi$ and let $T^{-}$ and $T^{+}$ be the random
variables that capture its starting time and end time, respectively.
By \Lem{}~\ref{lemma:0-and-1-phases}, we know that
$\Expect_{v}[\tau(\phi) | T^{-} = t, T^{+} = s]
\leq
\adv{\tau}(t, s) + 2 \adv{\sigma}(t, s) + w(v)$.
Employing \Cor{}~\ref{corollary:low-bound-early-phase-opt}, we can
show that
$\Expect_{v, \Ancestors(v)}[\tau(\phi) | T^{-} = t]
\leq
O (\Expect_{\Ancestors(v)}[\adv{\tau}(\phi) + \adv{\sigma}(\phi) \mid T^{-} =
t])$,
hence, applying the law of total expectation, we prove that
$\Expect_{v, \Ancestors(v)}[\tau(\phi)]
\leq
O (\Expect_{\Ancestors(v)}[\adv{\tau}(\phi) + \adv{\sigma}(\phi)])$.
Then, we employ \Lem{}~\ref{lemma:partition-time-line-into-phases}(P2) to
bound the number of early $0$-phases, so we can use the linearity of
expectation to establish the assertion.
\end{proof}
\par\fussy
\ShortVersionEnd 

\begin{lemma} \label{lemma:bound-late-phases}
$\Expect_{v, \Ancestors(v)} \left[ \tau(P^{0}_{\mathrm{late}}) \right]
\leq
O \left( \adv{\tau}(0, \EndTime) + \adv{\sigma}(0, \EndTime) + w(v) \right)$.
\end{lemma}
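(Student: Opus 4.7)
The plan is to combine Lemma~\ref{lemma:0-and-1-phases} (applied phase-by-phase) with property (P3) of Lemma~\ref{lemma:partition-time-line-into-phases}, which bounds the expected number of late phases by $O(1)$. Unlike the early case, I no longer need Corollary~\ref{corollary:low-bound-early-phase-opt} to ``absorb'' the stray $w(v)$ additive terms into $\adv{\tau}(\phi) + \adv{\sigma}(\phi)$; instead I can afford to let them accumulate because there are only finitely many of them in expectation.

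More concretely, first I would condition on the coin tosses in $\Ancestors(v)$, which fixes the partition of $[0, \EndTime)$ into phases and thereby fixes the (random) set $P^{0}_{\mathrm{late}}$. Applying Lemma~\ref{lemma:0-and-1-phases} to each $0$-phase $\phi \in P^{0}_{\mathrm{late}}$ separately and using linearity of expectation over the coin tosses of $v$ gives
\[
\Expect_{v}\left[\tau(P^{0}_{\mathrm{late}})\right]
\, \leq \,
\adv{\tau}(P^{0}_{\mathrm{late}}) + 2\,\adv{\sigma}(P^{0}_{\mathrm{late}}) + |P^{0}_{\mathrm{late}}| \cdot w(v) \, .
\]
Now I take expectation over the coin tosses in $\Ancestors(v)$. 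Since the late $0$-phases are, in particular, disjoint subintervals of $[0, \EndTime)$, and since $\adv{\tau}, \adv{\sigma}$ are (deterministic) non-negative measures on the time line induced by the fixed offline algorithm $\adv{\A}$, we have $\adv{\tau}(P^{0}_{\mathrm{late}}) \leq \adv{\tau}(0, \EndTime)$ and $\adv{\sigma}(P^{0}_{\mathrm{late}}) \leq \adv{\sigma}(0, \EndTime)$ pointwise, so these two terms contribute the first two terms of the claimed bound.

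The remaining term, $\Expect_{\Ancestors(v)}[|P^{0}_{\mathrm{late}}|] \cdot w(v)$, is handled using the inclusion $P^{0}_{\mathrm{late}} \subseteq P_{\mathrm{late}}$ together with Lemma~\ref{lemma:partition-time-line-into-phases}(P3), which yields $\Expect_{\Ancestors(v)}[|P^{0}_{\mathrm{late}}|] \leq \Expect_{\Ancestors(v)}[|P_{\mathrm{late}}|] = O(1)$, contributing $O(w(v))$. Summing the three contributions and applying the law of total expectation completes the proof.

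The only step that genuinely uses non-trivial machinery is the appeal to (P3), and I do not expect it to be an obstacle since it has already been established via the APP framework in Lemma~\ref{lemma:partition-time-line-into-phases}. A minor subtlety worth being explicit about is that $P^{0}_{\mathrm{late}}$ is a random set measurable with respect to the coin tosses of $\Ancestors(v)$ alone (the classification of a phase as late depends on $\LateTime$, which is a function of the input $R$ only, and the starting time of the phase, which depends only on $\Ancestors(v)$), so the conditioning used above is legitimate; this is the point where the proof could most easily go wrong if one were not careful about measurability.
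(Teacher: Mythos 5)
Your proof is correct and follows essentially the same route as the paper's: apply Lemma~\ref{lemma:0-and-1-phases} phase-by-phase to the late $0$-phases, conditioned on the coin tosses of $\Ancestors(v)$, and then invoke Lemma~\ref{lemma:partition-time-line-into-phases}(P3) to control $\Expect_{\Ancestors(v)}[|P^{0}_{\mathrm{late}}|]$ and thereby the accumulated $w(v)$ terms. The only cosmetic difference is that you pass directly from $\adv{\tau}(P^{0}_{\mathrm{late}}), \adv{\sigma}(P^{0}_{\mathrm{late}})$ to the global quantities $\adv{\tau}(0, \EndTime), \adv{\sigma}(0, \EndTime)$ whereas the paper stops at $\adv{\tau}(P_{\mathrm{late}}), \adv{\sigma}(P_{\mathrm{late}})$; both are fine.
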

\LongVersion 
\begin{proof}
Conditioned on $|P^{0}_{\mathrm{late}}| = m$,
\Lem{}~\ref{lemma:0-and-1-phases} guarantees that
\[
\Expect_{v} \left[ \tau(P^{0}_{\mathrm{late}}) \right]
\, \leq \,
\adv{\tau}(P^{0}_{\mathrm{late}}) + 2 \adv{\sigma}(P^{0}_{\mathrm{late}}) + m
\cdot w(v)
\, \leq \,
\adv{\tau}(P_{\mathrm{late}}) + 2 \adv{\sigma}(P_{\mathrm{late}}) + m \cdot
w(v) \, .
\]
By \Lem{}~\ref{lemma:partition-time-line-into-phases}(P3),
\[
\Expect_{\Ancestors(v)}[|P^{0}_{\mathrm{late}}|]
\, \leq \,
\Expect_{\Ancestors(v)}[|P_{\mathrm{late}}|]
\, \leq \,
O (1)
\]
with an exponentially vanishing upper tail, thus
\[
\Expect_{v, \Ancestors(v)} \left[ \tau(P^{0}_{\mathrm{late}}) \right]
\, \leq \,
O \left(
\adv{\tau}(P_{\mathrm{late}}) + \adv{\sigma}(P_{\mathrm{late}}) + w(v) \right)
\]
which establishes the assertion.
\end{proof}
\LongVersionEnd 
\ShortVersion 
\begin{proof}[Proof (sketch).]
\Lem{}~\ref{lemma:0-and-1-phases} guarantees that 
$\Expect_{v}[\tau(\phi)]
\leq
O (\adv{\tau}(\phi) + \adv{\sigma}(\phi) + w(v))$
for each phase $\phi \in P^{0}_{\mathrm{late}}$.
The assertion follows by
\Lem{}~\ref{lemma:partition-time-line-into-phases}(P3) that practically
provides a constant bound on
$|P^{0}_{\mathrm{late}}| \leq |P_{\mathrm{late}}|$.
\end{proof}
\ShortVersionEnd 

\LongVersion 
\subsection{Lifting the end-of-input signal assumption}
\label{section:lift-end-of-input-assumption}
We now turn to lift the end-of-input signal assumption, showing that
\Thm{}~\ref{theorem:main} holds also without it.
Recall that
$\EndTime = \max \{ \aTime(\rho) \mid \rho \in R \}$
denotes the arrival time of the last request in $R$ and let
$C = \Active(\EndTime)$
and
$F = \Effective(\EndTime)$
be the set of remaining active requests and the set of effective vertices at
time $\EndTime$, respectively.
The analysis presented in \Sect{}~\ref{section:analysis-heart} relies on the
assumption that upon receiving the end-of-input signal at time $\EndTime$, the
algorithm immediately clears all the requests in $C$ by matching across every
vertex in $F$ which contributes
$\sEndCost = \sum_{v \in F} w(v)$
to the space cost of $\A$ (this contribution to the space cost of $\A$ is
taken into account in \Sect{}~\ref{section:analysis-heart}).

An examination of the matching policy of the stilt-walker algorithm reveals
that in reality, the requests in $C$ are indeed cleared by matching across the
vertices in $F$, only that these matching operations are not performed
immediately at time $\EndTime$, but rather at slightly later (random) times,
thus introducing an additional contribution to the time cost of $\A$.
Specifically, taking
$\rho, \rho' \in C$
to be the supporting requests of some effective vertex
$v \in F$,
notice that on expectation, $\A$ matches across $v$ at time
$\EndTime + w(v)$
which accounts for an additional contribution of a $2 w(v)$ term to the
algorithm's expected time cost.
Summing over all vertices in $F$, we conclude that by adding
\[
2 \sum_{v \in F} w(v)
\, < \,
2 \sum_{v \in T} w(v)
\]
to the $\beta$ term in \Thm{}~\ref{theorem:main}, we can lift the end-of-input
assumption as promised.
\LongVersionEnd 

\LongVersion 
\section{A fixed penalty for clearing requests}
\label{section:fixed-penalty}
In this section, we consider the online \emph{MPMDfp} problem:
a variant of MPMD in which the algorithm is allowed to clear any request
$\rho \in R$
at time
$t \geq t(\rho)$
without matching it to another request, incurring a fixed penalty
$p > 0$
(a parameter of the problem), on top of the time cost
$t - t(\rho)$
of $\rho$,
that adds to its total cost.
Notice that in contrast to MPMD, the MPMDfp problem is well defined also for
odd values of $|R|$.

\begin{theorem}
There exists a randomized online MPMDfp algorithm for $\mathcal{M}$ whose
competitive ratio is
$O \left( \log^{2} n + \log \Delta \right)$,
where $n$ is the number of points in the underlying metric space and $\Delta$
is its aspect ratio.
\end{theorem}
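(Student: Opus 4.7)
The plan is to reduce MPMDfp on $\mathcal{M} = (V, \delta)$ to MPMD on an augmented metric and invoke \Thm{}~\ref{theorem:main}. Construct $\mathcal{M}' = (V \cup \{c\}, \delta')$ by adjoining a single \emph{cancellation point} $c$ with $\delta'(x, c) = D$ for every $x \in V$, where $D = \Theta(\max(p, \mathrm{diam}(\mathcal{M})))$ is chosen to preserve the triangle inequality (with $\mathrm{diam}(\mathcal{M}) = \max_{x, y \in V} \delta(x, y)$). Then $\mathcal{M}'$ has $n + 1$ points and its aspect ratio stays within a polynomial factor of the original after rescaling to absorb any $\log p$ contribution into $\log \Delta$; the extreme regimes $p \ll \min_{x \neq y} \delta(x, y)$ and $p \gg \mathrm{diam}(\mathcal{M})$ can be handled separately (always-clear, or pure MPMD with at most one forced clear when $|R|$ is odd).

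The MPMDfp algorithm runs the stilt-walker on $\mathcal{M}'$ fed with an augmented request stream $R'$: for every real arrival $\rho$ at time $\aTime(\rho)$, additionally inject a virtual twin $\rho^*$ at location $c$ at the same time. Each pair output by the stilt-walker is interpreted as follows: a real--real pair becomes an MPMDfp match, a real--virtual pair becomes a clearing of the real request (paying the penalty $p$), and a virtual--virtual pair is simply discarded (its augmented space cost is already zero since both endpoints sit at $c$).

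The analysis has two halves. First, we lift any MPMDfp solution for $(R, \mathcal{M})$ to an MPMD solution for $(R', \mathcal{M}')$ of comparable cost: each MPMDfp match $(\rho_1, \rho_2)$ is extended by pairing $(\rho_1^*, \rho_2^*)$ at $c$ at the same moment (zero space cost, time cost at most twice that of the real match), and each MPMDfp clearing of $\rho$ is realized by matching $(\rho, \rho^*)$ at space cost $D = \Theta(p)$ and the same time cost. This yields $\Cost_{\adv{\A}'}(R', \mathcal{M}') = O(\Cost^{\mathrm{fp}}_{\adv{\A}}(R, \mathcal{M}))$. Second, \Thm{}~\ref{theorem:main} combined with \Thm{}~\ref{theorem:HSBT} (exactly as in the proof of the MPMD result) gives $\Expect[\Cost_{\A}(R', \mathcal{M}')] = O(\log^{2}(n+1) + \log \Delta') \cdot \Cost_{\adv{\A}'}(R', \mathcal{M}') + \beta$. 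Interpreting the stilt-walker output back into MPMDfp can only decrease cost: real--real and real--virtual pairs contribute exactly their MPMDfp counterpart cost (up to the $D/p = O(1)$ factor), while virtual--virtual pairs contribute nothing to MPMDfp. Composing the two halves yields the claimed $O(\log^{2} n + \log \Delta)$ competitive ratio.

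The main obstacle is controlling the time-cost overhead contributed by the virtual stream at $c$. Because the stilt-walker's very first step (see Pseudocode~\ref{algorithm:stilt-walker}) immediately matches any two co-located active requests, two virtuals at $c$ pair up instantaneously whenever both are simultaneously present; a lone virtual at $c$ can linger only until the next virtual arrives or until the stilt-walker matches it across some vertex, and the waiting time so incurred must be charged against real MPMD traffic in the lifting step. The secondary subtlety is keeping $\Delta'$ within an $n^{O(1)}\Delta$ factor of $\Delta$ when $p$ lives on a very different scale from $\mathcal{M}$, which is what necessitates the rescaling/regime case analysis mentioned in the first paragraph.
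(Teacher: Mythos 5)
Your single cancellation point $c$ cannot replicate the paper's reduction in the central regime where $\min_{x\neq y}\delta(x,y)/2 \lesssim p \lesssim \mathrm{diam}(\mathcal{M})$, and that is precisely where most of the work happens. The triangle inequality forces $\delta'(x,c) \geq \mathrm{diam}(\mathcal{M})/2$ for every $x \in V$, so whenever $p < \mathrm{diam}(\mathcal{M})/2$ your cancellation point is strictly farther from each request than the penalty it is supposed to emulate. This breaks the first half of your argument: lifting an MPMDfp clearing of $\rho$ to an MPMD match $(\rho,\rho^*)$ costs $D \geq \mathrm{diam}(\mathcal{M})/2$ in space, while the MPMDfp solution paid only $p$, so your claim $\Cost_{\adv{\A}'}(R',\mathcal{M}') = O(\Cost^{\mathrm{fp}}_{\adv{\A}}(R,\mathcal{M}))$ fails by a factor of $\Theta(\mathrm{diam}(\mathcal{M})/p)$, which can be as large as $\Theta(\Delta)$. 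A concrete bad instance: take a point $e$ at distance $\mathrm{diam}$ from a cluster realizing the minimum distance $d$, set $p$ slightly above $d/2$, and send widely time-separated requests to $e$ only; the MPMDfp optimum clears each for $p$, but any MPMD solution on $R'$ must pay $\Theta(\mathrm{diam})$ per request (either across to $c$ or by waiting the long inter-arrival gap). Your case split at $p \ll \min \delta$ and $p \gg \mathrm{diam}$ does not cover this regime.

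The paper sidesteps this by building $\widehat{\mathcal{M}}$ on $V\times\{1,2\}$ with $\widehat{\delta}((x,1),(x,2)) = p$: each request gets a private virtual twin at distance \emph{exactly} $p$, so a clearing lifts to a match of cost $p$, not $\Theta(\mathrm{diam})$. This is impossible with one shared cancellation point, because making $c$ close (distance $\approx p$) to every $x \in V$ simultaneously would violate the triangle inequality unless $p \gtrsim \mathrm{diam}/2$. Your construction is essentially the degenerate case of the paper's doubled metric where the second copy is collapsed to one vertex; the doubling is what lets the twin sit at the right distance. If you want to repair your proposal, you would need per-point virtual locations (one $c_x$ per $x$ at distance $\Theta(p)$ from $x$ but far from each other), and once you do that you have effectively reconstructed the paper's two-copy metric.

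Your other ingredients are sound: injecting a co-located virtual stream and discarding virtual-virtual pairs is a valid interpretation, the back-translation can only decrease cost, and the extreme-regime handling for $p$ below the minimum distance and above the diameter matches the paper's. The gap is solely the geometry of the intermediate regime.
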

\begin{proof}
Consider the underlying $n$-point metric space
$\mathcal{M} = (V, \delta)$
and let
$d = \min_{x \neq y \in V} \delta(x, y)$
and
$D = \max_{x \neq y \in V} \delta(x, y)$
be the minimum and maximum distances between any two distinct points in
$\mathcal{M}$, respectively, so that the aspect ratio of $\mathcal{M}$ is
$\Delta = D / d$.
Assume for the time being that the penalty $p$ satisfies
$d / 2 < p < 2 D$.

Let
$\widehat{\mathcal{M}} = (V \times \{ 1, 2 \}, \widehat{\delta})$
be the metric space defined by setting
\[
\widehat{\delta}((x, i_{x}), (y, i_{y}))
=
\delta(x, y) + p \cdot |i_{x} - i_{y}|
\]
for every
$x, y \in V$
and
$i_{x}, i_{y} \in \{ 1, 2 \}$.
The assumption that
$d / 2 < p < 2 D$
implies that the aspect ratio of $\widehat{\mathcal{M}}$ is proportional to
$\Delta$.
Let
$\widehat{R} = \{ \rho_{1}, \rho_{2} \mid \rho \in R \}$,
where $\rho_{i}$, $i \in \{ 1, 2 \}$, is defined by setting
$\aTime(\rho_{i}) = \aTime(\rho)$
and
$\Location(\rho_{i}) = (\Location(\rho), i)$.

We construct an online MPMDfp algorithm $\A_{\text{fp}}$ with the desired
competitive ratio from the stilt-walker algorithm $\A$ as follows.
Algorithm $\A_{\text{fp}}$ simulates $\A$ on
$\widehat{\mathcal{M}}, \widehat{R}$
and handles the requests in $R$ according to the actions of $\A$ on the
requests in $\widehat{R}$.
Specifically, for every
$\rho \in R$,
if $\A$ matches $\rho_{1}$ to some request $\rho'_{1}$, located in
$V \times \{ 1 \}$,
at time $t$, then $\A_{\text{fp}}$ matches $\rho$ to $\rho'$ at time $t$;
if $\A$ matches $\rho_{1}$ to some request $\rho'_{2}$, located in
$V \times \{ 2 \}$, at time $t$, then $\A_{\text{fp}}$ clears $\rho$ without
matching it (paying the fixed $p$-penalty) at time $t$.

The design of $\A_{\text{fp}}$ and the fact that
$\widehat{\delta}((x, 1), (y, 2)) \geq p$
for every $x, y \in V$
guarantee that
\begin{equation} \label{equation:fp-online}
\Cost_{\A_{\text{fp}}}(R, \mathcal{M}) \leq \Cost_{\A}(\widehat{R},
\widehat{\mathcal{M}}) \, .
\end{equation}
Moreover, the construction of $\widehat{\mathcal{M}}$ and $\widehat{R}$
ensures that if $\adv{\A}$ is an optimal offline MPMD algorithm and
$\adv{\A}_{\text{fp}}$ is an optimal offline MPMDfp algorithm, then
\begin{equation} \label{equation:fp-benchmark}
\Cost_{\adv{\A}}(\widehat{R}, \widehat{\mathcal{M}})
\leq
2 \cdot \Cost_{\adv{\A}_{\text{fp}}}(R, \mathcal{M})
\end{equation}
since $\adv{\A}$ can project the actions of $\adv{\A}_{\text{fp}}$ on each
side of $\widehat{\mathcal{M}}$, matching $\rho_{1}$ to $\rho_{2}$ whenever
$\adv{\A}_{\text{fp}}$ clears $\rho$ without matching it.
The assertion follows since the stilt-walker algorithm $\A$ is $O (\log^{2} n
+ \log \Delta)$-competitive for the MPMD problem.

Now, if
$p < d / 2$,
then an MPMDfp (online or offline) algorithm is always better off clearing the
requests by paying the fixed penalty than by matching them.
Therefore, in this case, the MPMDfp problem over $\mathcal{M}$ can be
decomposed into $n$ independent instances of the MPMDfp over a $1$-point
metric space.
Each such instance (essentially a repeated version of the ski rental problem)
admits an $O (1)$-competitive online algorithm, thus so does the whole
problem.

It remains to consider the case where
$p > 2 D$.
In this case, we construct the metric space $\widehat{\mathcal{M}}$ slightly
differently:
first employ \Thm{}~\ref{theorem:HSBT} to probabilistically embed
$\mathcal{M}$ in a $(1 + \Omega (1 / \log n))$-HSBT $(T, w)$;
then, take two copies of $T$, call them $T_{1}$ and $T_{2}$, and connect them
so that their roots become the children of a new root $\widehat{r}$, extending
the weight function $w$ by setting
$w(\widehat{r}) = p$.
Notice that the resulting metric space is also a $(1 + \Omega (1 / \log
n))$-HSBT whose point set can be renamed
$V \times \{ 1, 2 \}$
so that $(x, i)$ is a leaf of $T_{i}$ for every
$x \in V$
and
$i \in \{ 1, 2 \}$.
The rest of the construction of $\A_{\text{fp}}$ is unchanged.

Although the aspect ratio of the metric space $\widehat{\mathcal{M}}$ in this
case may be large (as large as
$p / d$),
notice that the height of the underlying $(1 + \Omega (1 / \log n))$-HSBT is
still $O (\log \Delta + \log n)$,
where $\Delta$ is the aspect ratio of $\mathcal{M}$.
This establishes the assertion by recalling that the $\log \Delta$ term in the
competitive ratio of the stilt-walker algorithm comes from an upper bound on
the height of its HSBT.
\end{proof}
\LongVersionEnd 

\LongVersion 
\section{The deterministic version of the stilt-walker algorithm}
\label{section:specific-lower-bound}
In this section, we consider the deterministic version of the stilt-walker
algorithm, denoted $\A_{d}$, obtained by replacing the $(1 / w(v))$-rate
exponential timer maintained at each internal vertex $v \in T - \Leaves$ with
a deterministic $w(v)$-timer.
In other words, the matching policy of $\A_{d}$ is similar to that of $\A$
with one difference:
If the last time $\A_{d}$ matched across $v$ was at time $t_{0}$
(take $t_{0} = 0$ if $\A_{d}$ still has not matched across $v$), then the next
time it matches across $v$ is the minimum $t_{1}$ that satisfies
\[
\int_{t_{0}}^{t_{1}} \Indicator(v \in \Effective(t)) \, d t
=
w(v) \, .
\]

\begin{theorem} \label{theorem:specific-lower-bound}
The competitive ratio of $\A_{d}$ on $n$-point
$(1 + \Omega (1 / \log n))$-HSBTs
is $\Omega (n)$.
\end{theorem}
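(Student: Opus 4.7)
The plan is to exhibit a specific $n$-point $(1+\Omega(1/\log n))$-HSBT $\mathcal{T}$ together with a request sequence $R$ on which $\Cost_{\A_d}(R, \mathcal{T}) \geq \Omega(n) \cdot \Cost_{\adv{\A}}(R, \mathcal{T})$. The conceptual engine is that the deterministic $w(v)$-timer is perfectly predictable: since the (oblivious) adversary knows $\A_d$, the exact moment at which $\A_d$ next matches across each internal vertex $v$ is a deterministic function of the request history alone. This is precisely the capability that the memoryless exponential timer of $\A$ neutralizes---the APP-based analysis in \Sect{}~\ref{section:analysis-heart} crucially uses memorylessness to decouple vertex dynamics across phases and subphases---and once predictability is restored the adversary can synchronize request arrivals with the timer in ways that are impossible against $\A$.

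To amplify per-level advantages into an $\Omega(n)$ global ratio, I would take $\mathcal{T}$ to be an imbalanced HSBT of depth $h = \Theta(\log^{2} n)$: with HST parameter $\alpha = 1 + \Theta(1/\log n)$, one obtains $\alpha^{h} = \Theta(n)$, so the root weight $w(r)$ exceeds the smallest non-zero vertex weight $w_{\min}$ by a factor of $\Theta(n)$. Such a tree on $n$ leaves is feasible for $n \geq \log^{2} n$, e.g., by attaching a long ``spine'' above a shallow cluster containing most of the leaves. The request sequence $R$ would be organized into $\Omega(n)$ rounds. In each round, the adversary releases a ``bait pair'' of requests in different subtrees of a target high-weight vertex $v$, timed so that $\A_d$'s deterministic accumulator at $v$ just reaches $w(v)$ and $\A_d$ is forced to match the bait across $v$ at cost $\Theta(w(v))$; meanwhile, ``companion'' requests released at appropriate times allow the offline optimum to instead pair each bait request with a nearby companion located in a low-weight subtree at cost $\Theta(w_{\min})$, both in space and in time. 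Summing over the rounds, $\A_d$ accrues cost $\Omega(n \cdot w(r))$ while the offline optimum accrues $O(w(r))$, yielding the claimed $\Omega(n)$ ratio (the additive $\beta$ in the competitive-ratio definition absorbs any transient overhead).

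The main obstacle is the delicate choreography required of $R$ in order to enforce three simultaneous constraints: (i) the deterministic accumulator at the target vertex $v$ must actually reach $w(v)$ in each round, with no cheaper match at a descendant dissolving the stilts that support the intended expensive match; (ii) the offline optimum must possess a feasible rearrangement pairing every bait request with a companion inside a low-weight subtree; and (iii) the offline optimum's total time cost across $\Omega(n)$ pairings must remain $O(w(r))$, rather than growing linearly in the number of rounds. Balancing these requirements through the combinatorics of odd and effective vertices at the $\Theta(\log^{2} n)$ levels of $\mathcal{T}$ is where the bulk of the technical work lies, and it is exactly this juggling act that the memoryless timer trivializes for $\A$ but that becomes available to the adversary once timers are deterministic.
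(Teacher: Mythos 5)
The core insight you lean on is sound: the deterministic $w(v)$-timer is predictable, so an oblivious adversary can synchronize request arrivals with it, which is exactly what the memoryless exponential timer forecloses. The paper's proof exploits this in the same spirit, via a small $\epsilon$-shift that makes $\A_{d}$ ``waste'' a match just after the benchmark has safely paired up its own requests. That much you have right.

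However, your specific construction diverges from the paper's and, as sketched, has a real gap. You propose a deep imbalanced tree of height $\Theta(\log^{2} n)$ with $w(r)/w_{\min} = \Theta(n)$, and $\Omega(n)$ rounds each forcing a match across the root at cost $\Theta(w(r))$, against an offline total of $O(w(r))$. The problem is the post-round state: whenever $\A_{d}$ is forced to match across a high-weight vertex $v$, property~\ref{item:match-across} of the warm-up Observation in Section~\ref{section:analysis-heart} kicks in: $v$ and both of its children become non-odd immediately afterward, so $v$ ceases to be effective and its timer stops accumulating. The effective vertices migrate \emph{downward} in $T$. To reactivate the root as an effective vertex you need both of its children to go odd again, which requires new requests split across its two subtrees; but then either $\A_{d}$ can match those cheaply at a low level (defeating the round) or, if you arrange them far apart to force parity at the top, the benchmark also faces two far-apart requests whose only match partners are far away. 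Your plan gestures at ``companions in a low-weight subtree'' to rescue the benchmark, but this is precisely the choreography you concede you have not built, and the odd/effective combinatorics make it the crux, not a detail. With only a single target high-weight vertex, I do not see how you get $\Omega(n)$ forced expensive matches without the benchmark's cost growing commensurately.

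The paper avoids this obstacle rather than fighting it. It takes $T$ to be a \emph{perfect} binary tree of height $\lg n$ with all vertex weights $\Theta(1)$ (aspect ratio a constant, not $\Theta(n)$), and embraces the downward migration: a single $\epsilon$-timed subscenario turns $2$ active requests whose LCA is the root into $4$ active requests whose LCAs are the two depth-$1$ vertices, at an offline cost of only $O(\epsilon)$. Applying this recursively down $\lg n - O(1)$ levels leaves $\A_{d}$ holding active requests at $n/2$ distinct leaves; clearing them (or the wasted matches accumulated along the way) costs $\A_{d}$ a total of $\Omega(n)$, while the benchmark pays $O(1) + O(\epsilon n) \to O(1)$. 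The recursion replaces your ``repeat at the root'' idea with ``push one level down and double,'' which is what makes the cost accounting close. If you want to pursue your own route, you would need to show how to repeatedly re-establish a high-weight vertex as effective without handing the benchmark a matching cheap escape, and as argued above the parity structure works against you there.
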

\begin{proof}
Let $n$ be some large power of $2$ and let $T$ be an $n$-leaf perfect binary
tree (with all leaves at depth $\lg n$).
Let
$w : T \rightarrow \Reals_{\geq 0}$
be the weight function defined by setting
$w(x) = 0$
for every leaf $x$; and
$w(v) = (1 + 1 / \lg n)^{\lg(n) - 1 - i}$
for every internal vertex $v$ of depth $i$.
Consider the HSBT
$\mathcal{T} = (T, w)$
and notice that the distance between any two distinct points in $\mathcal{T}$
is $\Theta (1)$.
We name some of the internal vertices and subtrees of $T$ according to the
labels in \Fig{}~\ref{figure:specific-lower-bound}.

Let $\adv{\A}_{d}$ denote the benchmark offline algorithm and take $\epsilon$
to be a small positive real.
For every subtree $T_{j}$, $j = 1, \dots, 6$, fix some arbitrary leaf $x_{j}$
and consider the following scenario $\Gamma$ (refer to
\Fig{}~\ref{figure:specific-lower-bound} for an illustration):

\begin{DenseItemize}

\item
$2$ requests arrive at time $0$ at leaves $x_{1}$ and $x_{6}$ (one each).
$\adv{\A}_{d}$ immediately matches these requests.
Following that, the sole effective vertex of $\A_{d}$ is $v_{1}$ with supporting
leaves $x_{1}$ and $x_{6}$.

\item
$4$ requests arrive at time
$w(v_{1}) - \epsilon$
at leaves
$x_{2}$, $x_{3}$, $x_{4}$, and $x_{5}$ (one each).
Following that, the effective vertices of $\A_{d}$ are: \\
$v_{1}$ with supporting leaves $x_{3}$ and $x_{4}$; \\
$v_{3}$ with supporting leaves $x_{1}$ and $x_{2}$; and \\
$v_{5}$ with supporting leaves $x_{5}$ and $x_{6}$.

\item
The timer of $v_{1}$ expires at time $w(v_{1})$ and $\A_{d}$ matches (across
$v_{1}$) the active requests hosted at leaves $x_{3}$ and $x_{4}$.

\item
$4$ requests arrive at time
$w(v_{1}) + \epsilon$
at leaves
$x_{2}$, $x_{3}$, $x_{4}$, and $x_{5}$ (one each).
Both $\A_{d}$ and $\adv{\A}_{d}$ immediately match the request pairs hosted at
$x_{2}$ and $x_{5}$;
$\adv{\A}_{d}$ also immediately matches the request pairs hosted at $x_{3}$
and $x_{4}$.
Following that, the effective vertices of $\A_{d}$ are: \\
$v_{2}$ with supporting leaves $x_{1}$ and $x_{3}$; and \\
$v_{4}$ with supporting leaves $x_{4}$ and $x_{6}$.

\end{DenseItemize}

Consider the subscenario $\Gamma'$ induced on $\Gamma$ by the time interval
$(0, w(v_{1}) + \epsilon]$.
The key observation is that at the beginning of $\Gamma'$, $\A_{d}$ had
$2$ active requests located at leaves whose LCA is the the root of $T$
($v_{1}$), whereas at its end, $\A_{d}$ has $4$ active requests located at leaves
whose LCAs are the two depth $1$ vertices ($v_{2}$ and $v_{4}$).
On the other hand, $\adv{\A}_{d}$ started and ended subscenario $\Gamma'$ with
no active requests, paying a total cost of $O (\epsilon)$ during that time
period.

Subscenarios analogous to $\Gamma'$ are now applied in a recursive manner to
the subtrees rooted at the depth $i$ vertices of $T$, for
$i = 1, \dots, \lg (n) - 3$.
This results in $\A_{d}$ having active requests at exactly $n / 2$ distinct
leaves of $T$;
to clear all of them, $\A_{d}$ will have to pay $\Omega (n)$ in space cost.
On the other hand, the total cost payed by $\adv{\A}_{d}$ during all these
applications is $O (\epsilon n)$ which can be made arbitrarily small.
Adding the $O (1)$ space cost payed by $\adv{\A}_{d}$ at time $0$ for matching
the first two requests (across $v_{1}$), we conclude that the competitive
ratio of $\A_{d}$ is $\Omega (n)$, as promised.
\end{proof}

\def\FigureSpecificLowerBound{
\begin{figure}
\begin{center}
\includegraphics[width=\textwidth]{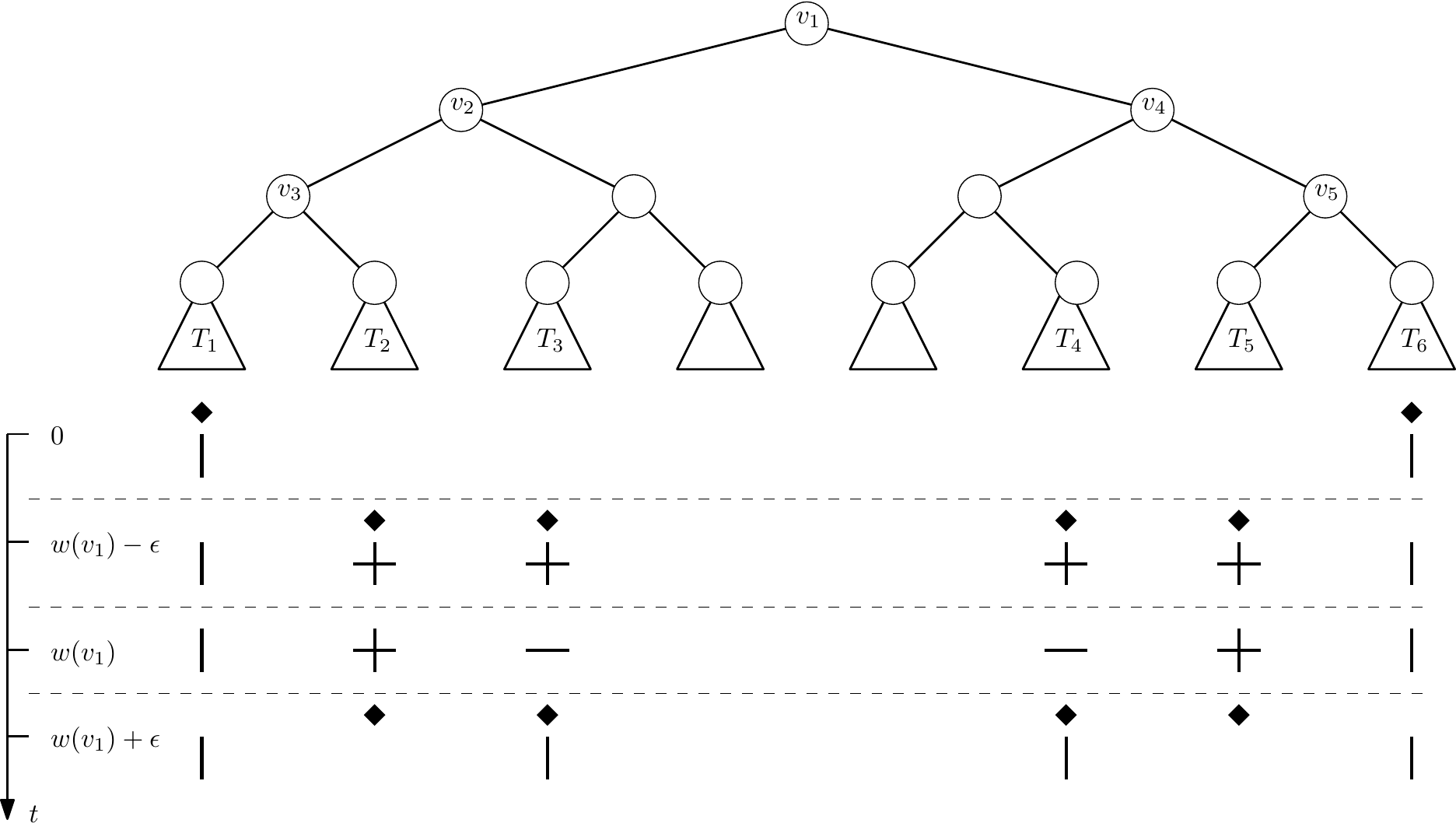}
\end{center}
\caption{ \label{figure:specific-lower-bound}
The perfect binary tree $T$ and scenario $\Gamma$ at times of interest
featured on the left.
For every $j = 1, \dots, 6$ and for every time $t$,
a diamond shape depicts a request arriving at leaf $x_{j}$ at time $t$;
a vertical segment depicts an active request under $\A_{d}$ at leaf $x_{j}$ at
time $t$; and
a horizontal segment depicts an active request under $\adv{\A}_{d}$ at leaf
$x_{j}$ at time $t$.
}
\end{figure}
}
\FigureSpecificLowerBound{}

\LongVersionEnd 

\clearpage
\renewcommand{\thepage}{}

\bibliographystyle{abbrv}
\bibliography{references}

\LongVersion 
\clearpage
\pagenumbering{roman}
\appendix

\renewcommand{\theequation}{A-\arabic{equation}}
\setcounter{equation}{0}

\begin{center}
\textbf{\large{APPENDIX}}
\end{center}

\section{Probabilistic embedding of arbitrary metric spaces in HSBTs}
\label{appendix:embedding-in-HSBT}
Our goal in this section is to prove \Thm{}~\ref{theorem:HSBT}.
The main ingredient in this proof is the following celebrated theorem of
Fakcharoenphol et al.~\cite{FakcharoenpholRT2004}.

\begin{theorem}[\cite{FakcharoenpholRT2004}] \label{theorem:FRT}
Consider some $n$-point metric space
$(V, \delta)$
and let $\mathcal{U}$ be the set of all $2$-HSTs over $V$ with distance
functions $\delta_{\mathcal{T}}$ that dominate $\delta$ in the sense that
$\delta_{\mathcal{T}}(x, y) \geq \delta(x, y)$
for every $x, y \in V$.
There exists a probability distribution $\mathcal{P}$ over $\mathcal{U}$ such
that
$\Expect_{(V, \delta_{\mathcal{T}}) \in \mathcal{P}}[\delta_{\mathcal{T}}(x, y)]
\leq
O (\log n) \cdot \delta(x, y)$
for every $x, y \in V$.
Moreover, the probability distribution $\mathcal{P}$ can be sampled
efficiently.
\end{theorem}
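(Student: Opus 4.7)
The plan is to carry out the standard FRT random-partition construction and analyze its expected distortion via a settler/cutter charging argument. Normalize $\delta$ so that $\min_{x \neq y} \delta(x, y) = 1$, let $\Delta$ denote the diameter, and set $L = \lceil \log_2 \Delta \rceil$. I would draw a uniformly random permutation $\pi$ of $V$ together with an independent scalar $\beta$ uniform on $[1, 2]$. These determine a sequence of partitions $P_0 = \{V\}, P_1, \ldots, P_L$ from coarsest to finest as follows: given $P_{i-1}$, refine each cluster $C \in P_{i-1}$ by assigning every $x \in C$ to the sub-cluster whose center is the earliest point $\pi_k$ (in permutation order) satisfying $\delta(\pi_k, x) \leq \beta \cdot 2^{L-i}$. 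Interpret the laminar family $\bigcup_i P_i$ as a tree $T$, assigning weight $w(v) = 2^{L-i+1}$ to each level-$i$ internal vertex so that $\mathcal{T} = (T, w)$ is a $2$-HST by construction. The dominance property $\delta_{\mathcal{T}}(x, y) \geq \delta(x, y)$ follows because if $\delta(x, y) > 2 \cdot 2^{L-i}$, no single center can cover both $x$ and $y$ within radius $\beta \cdot 2^{L-i} \leq 2 \cdot 2^{L-i}$, so the two are separated at every level $\geq i$, forcing $w(\LCA(x, y)) > \delta(x, y)$.

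For the expected distortion, fix a pair $x, y$ with $d := \delta(x, y)$, order the points of $V$ as $w_1, \ldots, w_n$ by increasing $a_k := \min(\delta(w_k, x), \delta(w_k, y))$, and set $b_k := \max(\delta(w_k, x), \delta(w_k, y))$ so that $b_k - a_k \leq d$ by the triangle inequality. Call $w_k$ a \emph{settler-and-cutter} at level $i$ if (i) $\beta \cdot 2^{L-i} \in [a_k, b_k)$, so $w_k$ covers exactly one of $x, y$, and (ii) $w_k$ is the earliest point in $\pi$ covering at least one of $x, y$ at radius $\beta \cdot 2^{L-i}$. Then $(x, y)$ is separated at level $i$ exactly when some $w_k$ is a settler-and-cutter at that level, and the key estimate
\begin{equation*}
\Prob\left[w_k \text{ settles-and-cuts at level } i\right] \leq \frac{1}{k} \cdot \Prob\left[\beta \cdot 2^{L-i} \in [a_k, b_k)\right]
\end{equation*}
holds by independence of $\pi$ and $\beta$: the factor $1/k$ comes from $w_k$ needing to precede $w_1, \ldots, w_{k-1}$ in $\pi$, since $a_j \leq a_k$ means any $\pi$-earlier $w_j$ with $j < k$ would itself be a settler once it covers $x$ or $y$.

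Since $\delta_{\mathcal{T}}(x, y) = \Theta(2^{L-i^*})$, where $i^*$ is the first level of separation, a union bound over $k$ and $i$ yields $\Expect[\delta_{\mathcal{T}}(x,y)] \leq O(1) \sum_k \frac{1}{k} \sum_i 2^{L-i} \Prob[\beta \cdot 2^{L-i} \in [a_k, b_k)]$. The crucial observation is that the intervals $[2^{L-i}, 2^{L-i+1})$ for $i = 0, 1, \ldots, L$ partition $[1, 2^{L+1})$ into disjoint dyadic pieces, and $2^{L-i} \cdot \Prob[\beta \cdot 2^{L-i} \in S]$ equals the length $|S \cap [2^{L-i}, 2^{L-i+1})|$ (since $\beta \cdot 2^{L-i}$ is uniform on that interval). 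The inner sum therefore telescopes to $|[a_k, b_k)| \leq d$, yielding $\Expect[\delta_{\mathcal{T}}(x, y)] \leq O(1) \sum_{k=1}^n d/k = O(d \log n) = O(\log n) \cdot \delta(x, y)$. Efficient sampling is immediate since the partitions can be built in polynomial time once $\pi$ and $\beta$ are drawn.

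The main obstacle is recognizing and justifying the telescoping step that collapses $\sum_i 2^{L-i} \Prob[\beta \cdot 2^{L-i} \in [a_k, b_k)]$ to at most $d$, rather than accumulating an extra $\log \Delta$ factor per $k$. This is precisely the reason for choosing $\beta$ uniformly on $[1, 2]$: the scaled values $\beta \cdot 2^{L-i}$ then sweep out a partition of the positive reals as $i$ varies, so the aggregate cutting probability depends only on the length $b_k - a_k$ and not on the number of scales, converting the naive double sum into a clean harmonic sum over $k$.
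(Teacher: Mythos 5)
The paper does not prove this theorem at all: it is cited verbatim from Fakcharoenphol, Rao, and Talwar~\cite{FakcharoenpholRT2004}, and only then massaged (via \Lem{}~\ref{lemma:Boaz}) into the binary-tree form of \Thm{}~\ref{theorem:HSBT}. So there is no in-paper proof to compare against; what you have written is an independent reconstruction of the FRT argument itself.

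Your reconstruction is essentially correct and hits all the structural points of the original: random permutation plus uniform $\beta \in [1,2]$ to generate a hierarchical decomposition, the ordering of $V$ by $a_k = \min(\delta(w_k,x),\delta(w_k,y))$, the ``first-to-see and sees-only-one'' event giving a $\frac{1}{k}$ factor from $\pi$ times a $\beta$-cutting probability, and the telescoping over dyadic scales that converts $\sum_i 2^{L-i}\Prob[\beta 2^{L-i}\in[a_k,b_k)]$ into $|[a_k,b_k)| \le \delta(x,y)$, giving a harmonic sum over $k$. This is precisely why $\beta$ is randomized, and you identify it as the crux.

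Two small slips, both fixable by a constant factor and neither affecting the $O(\log n)$ conclusion. First, the dominance argument is off by a factor of $2$: two points in a common level-$i$ cluster are each within $\beta 2^{L-i} \le 2\cdot 2^{L-i}$ of the center, so $\delta(x,y)$ could be as large as $4\cdot 2^{L-i}$, not $2\cdot 2^{L-i}$ as you claim; with your weights $w(v)=2^{L-i+1}$ at level $i$, the $\LCA$ at level $i^{*}-1$ has weight $4\cdot 2^{L-i^{*}}$ while $\delta(x,y)$ can be up to $8\cdot 2^{L-i^{*}}$, so you should scale $w$ up by a further factor of $2$ (or normalize the minimum distance to $4$) to guarantee $\delta_{\mathcal{T}}\ge\delta$ pointwise. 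Second, with your normalization ($\min\delta=1$, $L=\lceil\log_2\Delta\rceil$), the level-$L$ radius is $\beta\in[1,2]$, which need not isolate singletons; you need roughly two more levels. Also, the claim that ``$(x,y)$ is separated at level $i$ \emph{exactly when} some $w_k$ is a settler-and-cutter at that level'' is too strong as an equivalence (they may already be separated at a coarser level with no settler-cutter at $i$); what you actually use is the one-sided implication that the \emph{first} separation level has a settler-and-cutter, which is correct and suffices for the union bound.
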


Observe that by the definition of HSTs, the rooted trees realizing the
$2$-HSTs promised by \Thm{}~\ref{theorem:FRT} are of height
$O (\log \Delta)$,
where
$\Delta = \frac{\max_{x, y \in V} \delta(x, y)}{\min_{x, y \in V} \delta(x, y)}$
is the aspect ratio of the metric space
$(V, \delta)$.
These rooted trees have arbitrary degrees, whereas \Thm{}~\ref{theorem:HSBT}
requires rooted trees with degrees at most $2$.
We resolve this obstacle with the help of the following lemma, proved by
Patt-Shamir \cite{PattShamir2015}.

\begin{lemma}[\cite{PattShamir2015}] \label{lemma:Boaz}
Consider some $n$-leaf rooted tree $T$.
There exist
a (rooted) full binary tree $T'$ and an injection
$f : T \rightarrow T'$
such that \\
(1)
$f(v)$ is an ancestor of $f(u)$ in $T'$ if and only if $v$ is an ancestor of
$u$ in $T$; \\
(2)
$\Depth_{T'}(f(v)) \leq \Depth_{T'}(f(p^{T}(v))) + O (\log n)$, where
$\Depth_{T'}(\cdot)$ denotes the depth operator in tree $T'$; and \\
(3)
$\Height(T') = O (\Height(T) + \log n)$.
\end{lemma}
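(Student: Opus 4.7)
The plan is to build $T'$ by replacing each vertex of $T$ with a small weight-balanced binary tree over its children, and then glue these pieces together. Concretely, for each vertex $v$ of $T$, let $n(v)$ denote the number of leaves of $T$ in the subtree $T(v)$, and for each internal vertex $v$ with children $u_{1}, \dots, u_{k}$ I will construct a full binary tree $B_{v}$ with exactly $k$ leaves, in bijection with $\{u_{1}, \dots, u_{k}\}$, such that the depth of the leaf labeled $u_{i}$ inside $B_{v}$ is at most $\lceil \log_{2}(n(v)/n(u_{i})) \rceil + O(1)$. This is a standard weighted-leaf construction (e.g.\ recursively split the child list into two groups of as-nearly-equal total $n(\cdot)$-weight as possible, or invoke Huffman's inequality on the weights $n(u_{1}), \dots, n(u_{k})$). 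For a leaf $v$ of $T$, set $B_{v}$ to be a single vertex.

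The tree $T'$ is then obtained by gluing: declare $f(v)$ to be the root of $B_{v}$ in $T'$, and identify the leaf of $B_{v}$ labeled $u_{i}$ with $f(u_{i})$ (the root of $B_{u_{i}}$); for leaves $x$ of $T$, $f(x)$ is just the single vertex of $B_{x}$. The map $f$ is an injection by construction, and every internal node of $T'$ inherits the full-binary property from its enclosing $B_{v}$.

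Property (1) is immediate: an ancestor chain $v_{0}, v_{1}, \dots, v_{j}$ in $T$ becomes a top-to-bottom concatenation of $B_{v_{0}}, B_{v_{1}}, \dots, B_{v_{j}}$ along the chosen leaves, so $f(v_{i})$ is an ancestor of $f(v_{j})$ in $T'$; conversely, the only way for one $f(v)$ to sit above another $f(u)$ is for them to lie on such a concatenation. Property (2) is read off directly from the weight-balance inequality: $\Depth_{T'}(f(u_{i})) - \Depth_{T'}(f(v)) \leq \log_{2}(n(v)/n(u_{i})) + O(1) \leq O(\log n)$ since $n(v)/n(u_{i}) \leq n$. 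Property (3) follows from telescoping: for any leaf $x$ of $T$ with ancestor chain $r = v_{0}, v_{1}, \dots, v_{h_{x}} = x$ where $h_{x} \leq \Height(T)$,
\[
\Depth_{T'}(f(x))
\;=\;
\sum_{i=0}^{h_{x}-1} \bigl(\text{depth of }v_{i+1}\text{ in }B_{v_{i}}\bigr)
\;\leq\;
\sum_{i=0}^{h_{x}-1} \bigl(\log_{2}\tfrac{n(v_{i})}{n(v_{i+1})} + O(1)\bigr)
\;=\;
O(h_{x}) + \log_{2}\tfrac{n(r)}{n(x)}
\;=\;
O(\Height(T) + \log n).
\]

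The main obstacle, and the only place that requires care, is the construction of $B_{v}$ with the per-leaf depth guarantee while keeping $B_{v}$ full binary. The splitting recursion I described does this cleanly when all weights are positive integers summing to $n(v)$: at each step, choose a prefix of the (sorted) child list whose weight sum lies in $(n/3, 2n/3]$ and recurse on both halves, yielding leaf depth $\leq \lceil \log_{3/2}(n(v)/n(u_{i})) \rceil$, which is $O(\log(n(v)/n(u_{i})))$ as needed; alternatively, invoking Huffman directly gives $\lceil \log_{2}(n(v)/n(u_{i})) \rceil + O(1)$. The additive $O(\log n)$ slack in (3) absorbs the $+O(1)$ per level precisely because it is collected along a single root-to-leaf telescoping sum rather than multiplied by the tree height.
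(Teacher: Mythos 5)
The paper does not prove Lemma~\ref{lemma:Boaz}; it is cited directly from \cite{PattShamir2015}, so there is no in-paper argument to compare your proof against. Your weight-balanced binarization is a standard and essentially sound way to establish the result, but as written it has one real gap: you never treat internal vertices $v$ of $T$ with exactly one child $u$. In that case $B_{v}$ has one leaf and is therefore a single vertex, and identifying ``the leaf of $B_{v}$ labeled $u$'' with the root of $B_{u}$ forces $f(v) = f(u)$, so $f$ is not an injection. FRT trees are not guaranteed to be free of degree-one internal vertices, so this case must be handled: either first contract every maximal chain of degree-one internal vertices in $T$ (this changes neither the induced leaf metric nor the truth of (1)--(3)), or, when $v$ has a single child, take $B_{v}$ to be a depth-one full binary tree whose second leaf is a fresh dummy vertex outside $f(T)$. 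Either repair preserves the rest of your argument, and in the second variant $\Height(T')$ is unchanged since the dummy is never deeper than its sibling subtree.

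A secondary error is the aside that Huffman coding gives per-leaf depth $\lceil \log_{2}(n(v)/n(u_{i})) \rceil + O(1)$; Huffman optimizes the \emph{expected} length, and on Fibonacci-like weights $1,1,2,3,5,\dots$ it assigns the lightest leaf depth roughly $1.44\,\log_{2}(n(v)/n(u_{i}))$, a multiplicative rather than additive overhead. Your $2/3$-balanced split (a Gilbert--Moore / weight-balanced tree) does give depth $O(\log(n(v)/n(u_{i})))$, which is all the lemma requires, so you should drop the Huffman remark and state the per-leaf bound only up to an $O(\cdot)$. With the degree-one case patched, properties (1)--(3) follow exactly as you argue, and the telescoping in (3) is precisely why the per-level $+O(1)$ slack contributes only $O(\Height(T))$ overall rather than being multiplied by the $O(\log n)$ per-level depth bound.
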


Consider some $n$-point tree metric space $(T, w)$ in the support of the
probability distribution promised by \Thm{}~\ref{theorem:FRT} and let $T'$ and
$f : T \rightarrow T'$
be the full binary tree and injection obtained by applying
\Lem{}~\ref{lemma:Boaz} to $T$.
We construct a weight function
$w' : T' \rightarrow \Reals_{\geq 0}$
on the vertices of $T'$ by first setting
$w'(v) = 2 \cdot w(f^{-1}(v))$
for every
$v \in f(T)$,
and then fixing
$w'(v) = w'(p^{T'}(v)) / (1 + \Omega (1 / \log n))$
for every
$v \notin f(T)$.
\Lem{}~\ref{lemma:Boaz} guarantees that
$(T', w')$ is a $(1 + \Omega (1 / \log n))$-HSBT.
Taking $\delta$ and $\delta'$ to be the distance functions of $(T, w)$ and
$(T', w')$, respectively, we observe that
\[
\delta(x, y)
\leq
\delta'(x, y)
\leq
2 \delta(x, y)
\]
for every two points $x, y$ in the metric space(s), thus establishing
\Thm{}~\ref{theorem:HSBT}.

\LongVersionEnd 

\clearpage
\renewcommand{\thepage}{}

\ShortVersion 

\begin{center}
\textbf{\large{FIGURES}}
\end{center}

\PseudocodeStiltWalker{}

\FigureApp{}


\FigureSubphaseApp{}



\ShortVersionEnd 

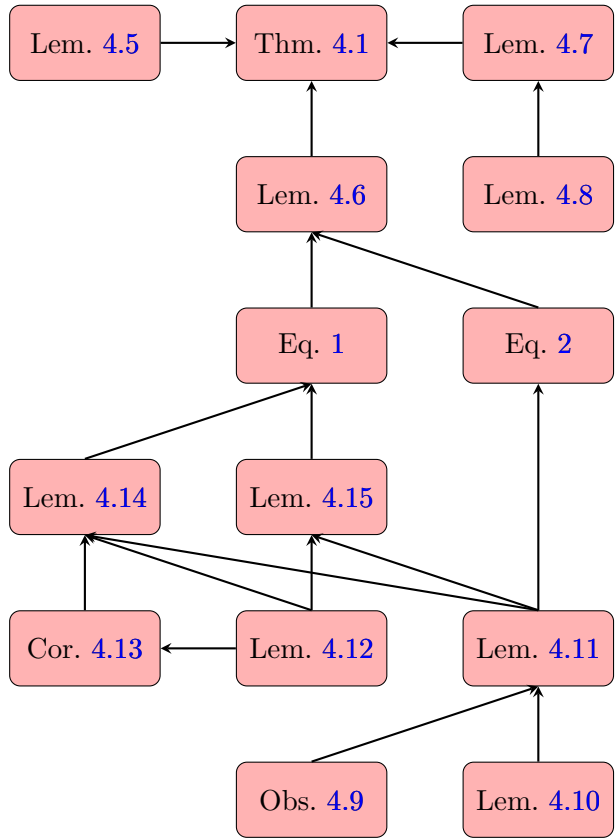
\begin{figure}[h]
\begin{center}
\begin{tikzpicture}[box/.style={draw, rectangle, rounded corners, minimum
width=2cm, minimum height=1cm, text centered, draw=black, fill=red!30},
arrow/.style={thick,->,>=stealth}]

\node (chart-theorem-main)
[box]
{\Thm{}~\ref{theorem:main}};

\node (chart-lemma-expressing-costs-by-potentials)
[box, left=of chart-theorem-main]
{\Lem{}~\ref{lemma:expressing-costs-by-potentials}};

\node (chart-lemma-bounding-time-potential)
[box, below=of chart-theorem-main]
{\Lem{}~\ref{lemma:bounding-time-potential}};

\node (chart-lemma-bounding-space-potential)
[box, right=of chart-theorem-main]
{\Lem{}~\ref{lemma:bounding-space-potential}};

\node (chart-lemma-phase-app)
[box, below=of chart-lemma-bounding-space-potential]
{\Lem{}~\ref{lemma:phase-app}};

\node (chart-equation-target-bound-0-phases)
[box, below=of chart-lemma-bounding-time-potential]
{Eq.~\ref{equation:target-bound-0-phases}};

\node (chart-equation-target-bound-1-phases)
[box, right=of chart-equation-target-bound-0-phases]
{Eq.~\ref{equation:target-bound-1-phases}};

\node (chart-lemma-bound-late-phases)
[box, below=of chart-equation-target-bound-0-phases]
{\Lem{}~\ref{lemma:bound-late-phases}};

\node (chart-lemma-bound-early-phases)
[box, left=of chart-lemma-bound-late-phases]
{\Lem{}~\ref{lemma:bound-early-phases}};

\node (chart-lemma-partition-time-line-into-phases)
[box, below=of chart-lemma-bound-late-phases]
{\Lem{}~\ref{lemma:partition-time-line-into-phases}};

\node (chart-equation-low-bound-early-phase-opt)
[box, left=of chart-lemma-partition-time-line-into-phases]
{\Cor{}~\ref{corollary:low-bound-early-phase-opt}};

\node (chart-lemma-0-and-1-phases)
[box, right=of chart-lemma-partition-time-line-into-phases]
{\Lem{}~\ref{lemma:0-and-1-phases}};

\node (chart-lemma-bound-time-cost-0-subphase)
[box, below=of chart-lemma-0-and-1-phases]
{\Lem{}~\ref{lemma:bound-time-cost-0-subphase}};

\node (chart-observation-bound-time-cost-1-subphase)
[box, left=of chart-lemma-bound-time-cost-0-subphase]
{\Obs{}~\ref{observation:bound-time-cost-1-subphase}};

\draw[arrow]
(chart-lemma-expressing-costs-by-potentials.east) --
(chart-theorem-main.west);

\draw[arrow]
(chart-lemma-bounding-time-potential.north) --
(chart-theorem-main.south);

\draw[arrow]
(chart-lemma-bounding-space-potential.west) --
(chart-theorem-main.east);

\draw[arrow]
(chart-lemma-phase-app.north) --
(chart-lemma-bounding-space-potential.south);

\draw[arrow]
(chart-equation-target-bound-0-phases.north) --
(chart-lemma-bounding-time-potential.south);

\draw[arrow]
(chart-equation-target-bound-1-phases.north) --
(chart-lemma-bounding-time-potential.south);

\draw[arrow]
(chart-lemma-bound-early-phases.north) --
(chart-equation-target-bound-0-phases.south);

\draw[arrow]
(chart-lemma-0-and-1-phases.north) --
(chart-lemma-bound-early-phases.south);

\draw[arrow]
(chart-lemma-bound-late-phases.north) --
(chart-equation-target-bound-0-phases.south);

\draw[arrow]
(chart-equation-low-bound-early-phase-opt.north) --
(chart-lemma-bound-early-phases.south);

\draw[arrow]
(chart-lemma-partition-time-line-into-phases.north) --
(chart-lemma-bound-early-phases.south);

\draw[arrow]
(chart-lemma-partition-time-line-into-phases.west) --
(chart-equation-low-bound-early-phase-opt.east);

\draw[arrow]
(chart-lemma-0-and-1-phases.north) --
(chart-equation-target-bound-1-phases.south);

\draw[arrow]
(chart-observation-bound-time-cost-1-subphase.north) --
(chart-lemma-0-and-1-phases.south);

\draw[arrow]
(chart-lemma-bound-time-cost-0-subphase.north) --
(chart-lemma-0-and-1-phases.south);

\draw[arrow]
(chart-lemma-0-and-1-phases.north) --
(chart-lemma-bound-late-phases.south);

\draw[arrow]
(chart-lemma-partition-time-line-into-phases.north) --
(chart-lemma-bound-late-phases.south);

\end{tikzpicture}
\end{center}
\caption{\label{figure:claim-chart}
A schematic overview of the analysis carried out in
\Sect{}~\ref{section:analysis-heart}, depicting the interdependencies between
its components.
An arrow pointing from A to B indicates that the proof corresponding
to B depends on the statement corresponding to A.
The proofs of \Lem{}\ \ref{lemma:phase-app},
\ref{lemma:bound-time-cost-0-subphase}, and
\ref{lemma:partition-time-line-into-phases} are based on the APP machinery
developed in \Sect{}~\ref{section:alternating-Poisson-process}.
}
\end{figure}

\begin{table}[h]
\begin{center}
{\small
\begin{tabular}{|l|l|l|}
\hline
\textsf{Notation} & \textsf{Definition} & \textsf{Defined on page} \\
\hline

$\Location(\rho)$ &
location of $\rho$ &
\NotationPageRef{model:location} \\

$\aTime(\rho)$ &
arrival time of $\rho$ &
\NotationPageRef{model:arrival-time} \\

$\sCost(\rho)$ &
space cost of $\rho$ &
\NotationPageRef{model:space-cost-request} \\

$\tCost(\rho)$ &
time cost of $\rho$ &
\NotationPageRef{model:time-cost-request} \\

$\sCost(R)$ &
space cost of $R$ &
\NotationPageRef{model:space-cost-instance} \\

$\tCost(R)$ &
time cost of $R$ &
\NotationPageRef{model:time-cost-instance} \\

$\Cost(R, \mathcal{M})$ &
total cost &
\NotationPageRef{model:total-cost} \\

$p(v)$ &
parent of $v$ &
\NotationPageRef{tree:parent} \\

$T(v)$ &
subtree rooted at $v$ &
\NotationPageRef{tree:subtree} \\

$\Leaves(v)$ &
leaves of $T(v)$ &
\NotationPageRef{tree:subtree-leaves} \\

$\Ancestors(v)$ &
ancestors of $v$ &
\NotationPageRef{tree:ancestors} \\

$\Depth(v)$ &
depth of $v$ &
\NotationPageRef{tree:depth} \\

$\Height(T)$ &
height of $T$ &
\NotationPageRef{tree:height} \\

$\LCA(x, y)$ &
least common ancestor of $x$ and $y$ &
\NotationPageRef{tree-lca} \\

$\Active_{v}(t)$ &
set of requests with locations in $\Leaves(v)$ active at time $t$ &
\NotationPageRef{alg:active} \\

$\Active(t)$ &
$\Active_{v}(t)$ for $v = r$ &
\NotationPageRef{alg:active-root} \\

$\Odd(t)$ &
set of odd vertices at time $t$ &
\NotationPageRef{alg:odd} \\

$\adv{\Odd}(t)$ &
set of vertices odd under $\adv{\A}$ at time $t$ &
\NotationPageRef{alg:adv-odd} \\

$\Stilts(t)$ &
set of stilts induced by the odd vertices at time $t$ &
\NotationPageRef{alg:stilts} \\

$\Heads(t)$ &
set of heads of the stilts in $\Stilts(t)$ &
\NotationPageRef{alg:heads} \\

$\Effective(t)$ &
set of effective vertices at time $t$ &
\NotationPageRef{alg:effective} \\

$\EndTime$ &
arrival time of the last request &
\NotationPageRef{analysis:end-time} \\

$\sEndCost$ &
space cost of matching the active requests at time $\EndTime$ &
\NotationPageRef{analysis:end-cost} \\

$\tau_{v}(t_{0}, t_{1})$ &
time potential $v$ accumulates during $[t_{0}, t_{1})$ under $\A$ &
\NotationPageRef{analysis:tau} \\

$\adv{\tau}_{v}(t_{0}, t_{1})$ &
time potential $v$ accumulates during $[t_{0}, t_{1})$ under $\adv{\A}$ &
\NotationPageRef{analysis:adv-tau} \\

$\sigma_{v}(t_{0}, t_{1})$ &
space potential $v$ accumulates during $[t_{0}, t_{1})$ under $\A$ &
\NotationPageRef{analysis:sigma} \\

$\adv{\sigma}_{v}(t_{0}, t_{1})$ &
space potential $v$ accumulates during $[t_{0}, t_{1})$ under $\adv{\A}$ &
\NotationPageRef{analysis:adv-sigma} \\

$X_{i}(t)$ &
$\Indicator(u_{i} \in \Odd(t))$ &
\NotationPageRef{analysis:variable-X-i} \\

$\adv{X}_{i}(t)$ &
$\Indicator(u_{i} \in \adv{\Odd}(t))$ &
\NotationPageRef{analysis:variable-adv-X-i} \\

$X(t)$ &
$X_{1}(t) \xor X_{2}(t)$ &
\NotationPageRef{analysis:variable-X} \\

$\adv{X}(t)$ &
$\adv{X}_{1}(t) \xor \adv{X}_{2}(t)$ &
\NotationPageRef{analysis:variable-adv-X} \\

$Y_{i}(t)$ &
$|\{ \rho \in R \mid \Location(\rho) \in \Leaves(u_{i}) \land
\aTime(\rho) \leq t \}| \pmod{2}$ &
\NotationPageRef{analysis:variable-Y-i} \\

$Y(t)$ &
$Y_{1}(t) \xor Y_{2}(t)$ &
\NotationPageRef{analysis:variable-Y} \\

$P^{0}$ &
set of $0$-phases &
\NotationPageRef{analysis:P-0} \\

$P^{1}$ &
set of $1$-phases &
\NotationPageRef{analysis:P-1} \\

$\LateTime$ &
smallest $t$ s.t.\ 
$\min \left\{ \int_{t}^{\EndTime} Y(t) d t,
\int_{t}^{\EndTime} \neg Y(t) d t \right\} \leq w(v)$ &
\NotationPageRef{analysis:late-time} \\

$P_{\mathrm{early}}$ &
set of early phases &
\NotationPageRef{analysis:P-early} \\

$P_{\mathrm{late}}$ &
set of late phases &
\NotationPageRef{analysis:P-late} \\

$K$ &
number of discontinuity points of $Y(t)$ in $[0, \LateTime)$ &
\NotationPageRef{analysis:K} \\

\hline
\end{tabular}
} 
\end{center}
\caption{\label{table:notation}
A table of notations.
}
\end{table}

\end{document}